\definecolor{darkblue}{rgb}{0,0,0.5}
\newcommand{\defeq}{\vcentcolon=}
\DeclareMathOperator*{\argmin}{arg\,min}
\newcommand\mc[1]{\mathcal{#1}}
\newcommand\bs[1]{\boldsymbol{#1}}
\newtheorem{theorem}{Theorem}
\newtheorem{lemma}{Lemma}
\newtheorem{corollary}{Corollary}
\newtheorem{proposition}{Proposition}
\newtheorem{defin}{Definition}
\newenvironment{proof}[1][Proof]{\noindent\textbf{#1.} }{\ \rule{0.5em}{0.5em}}
\begin{document}

\title{Analytical Methods for High-Rate Global Quantum Networks}

\author{Cillian Harney}
\email{cth528@york.ac.uk}
\author{Stefano Pirandola}
\email{stefano.pirandola@york.ac.uk}
\affiliation{Department of Computer Science, University of York, York YO10 5GH, United Kingdom}
\begin{abstract}
The development of a future, global quantum communication network (or quantum internet) will enable high rate private communication and entanglement distribution over very long distances. 
However, the large-scale performance of ground-based quantum networks (which employ photons as information carriers through optical-fibres) is fundamentally limited by fibre quality and link length, with the latter being a primary design factor for practical network architectures.
While these fundamental limits are well established for arbitrary network topologies, the question of how to best design global architectures remains open. 
In this work, we introduce a large-scale quantum network model called weakly-regular architectures. Such networks are capable of idealising network connectivity, provide freedom to capture a broad class of spatial topologies and remain analytically treatable. This allows us to investigate the effectiveness of large-scale networks with consistent connective properties, and unveil critical conditions under which end-to-end rates remain optimal. Furthermore, through a strict performance comparison of ideal, ground-based quantum networks with that of realistic satellite quantum communication protocols, we establish conditions for which satellites can be used to outperform fibre-based quantum infrastructure; {rigorously proving the efficacy of satellite-based technologies for global quantum communications.}
\end{abstract}
\maketitle

\section{Introduction}

Advancements in quantum information science will have a profound impact on society \cite{Mike_Ike, WatrousTxt, Holevo19, GoogleSyc}. In particular, the overarching trajectory of quantum communication technologies is towards a global quantum communication network: A quantum internet \cite{KimbleQI,UniteQInt,RazaviQNet,AdvCrypt}. This will facilitate high rate, provably secure communication and globally distributed quantum information processing with radical implications for science, technology and beyond.\par

The current, most promising point-to-point quantum communication protocols (where two parties are connected directly via a quantum channel) are based on continuous variable (CV) quantum systems \cite{SerafiniCV,GaussRev,BraunsteinVL,RalphCV} (such as bosonic modes). CV protocols achieve high performance and are compatible with current telecommunication infrastructure based upon optical-fibre connections, emphasising their near-term feasibility.
However, the laws of quantum mechanics prohibit the ability to simultaneously achieve high rates and long distances, a fundamental law captured by the Pirandola-Laurenza-Ottaviani-Banchi (PLOB) bound \cite{PLOB}. This describes the absolute maximum rate that two parties may transfer quantum states, distribute entanglement, or establish secret-keys over a bosonic lossy channel (optical-fibres) equal to $-\log_{2}(1-\eta)$ bits per channel use, where $\eta$ is the channel transmissivity \cite{PLOB, PirPatron09}. \par

Overcoming this point-to-point limitation requires the use of quantum repeaters or more generally the construction of quantum networks. Combining tools from classical network theory \cite{SlepianNets,CoverThomas, TanenbaumNets, GamalNets} with the PLOB bound, ultimate limits have also been established for the end-to-end capacities of quantum networks \cite{End2End}. These results confirm that the PLOB bound can be beaten via quantum networking, facilitating high rate communication at longer ranges. While such bounds are easily expressed in full generality for arbitrary network topologies, their practical assessment requires the specification of an architecture. Questions of network topology have been recently considered via the statistical study of complex, random quantum networks \cite{BiamonteNets,BritoRandQNets, QuntaoRandQNets, ZhangQInt}, which reveal insightful phenomena associated with large-scale network properties. Studies of this kind are extremely valuable and help to unveil important guidelines for the development of future quantum networks.

Nonetheless, such analyses are not easy and require significant numerical effort in order to reveal key network properties, e.g.~critical network densities or maximum fibre-lengths. There is a demand for versatile, \textit{analytical} tools which allow for the efficient benchmarking of quantum networks and can motivate the construction of large-scale topologies.

Meanwhile, ground-based fibre channels are not the only conduits available for global quantum communications. A rival infrastructure that may prove superior to fibre-based networks at global distances is Satellite Quantum Communication (SQC) \cite{FS,AdvSQC,SQC,Micius2017,RenSQC,LiaoSQC,VillarSQC,Micius2020}. SQC exploits ground-to-satellite communication channels to overcome the fundamental distance limitations offered by fibre/ground-based mechanisms. A satellite in orbit around the Earth may act as a \textit{dynamic} repeater that physically passes over ground-based users and distributes very long-range entanglement/secret-keys. The ability to exploit a free-space connection with a satellite also carries the possibility of substantially more transmissive channels than optical-fibre, making it ideal for global communication protocols.\par

The following critical questions emerge: Can we develop analytical tools which allows us to place limits on the end-to-end performance of large-scale quantum networks?
And are fibre-based networks truly the best way to achieve long-distance quantum communication? The goal of this work is to make progress with these challenges. 

Utilising ideas from quantum information theory, classical networks and graph theory \cite{WilsonGraph}, we investigate ideal architectures based on the property of weak-regularity. {Weakly-Regular Networks (WRNs) simultaneously (\textit{i}) idealise network connectivity, (\textit{ii}) provide sufficient freedom to capture a broad class of spatial topologies and (\textit{iii}) remain analytically treatable so that critical network properties can be rigorously studied. This results in a design with desirable qualities which can efficiently and effectively provide insight for realistic structures. We show that quantum WRNs employing multi-path routing admit remarkably accessible and achievable upper-bounds on the end-to-end network capacity. This allows for a characterisation of the ideal performance of a fibre-based quantum internet with respect to essential properties such as maximum channel length and nodal density. 

Our exact, analytical results provide an immediate pathway to perform comparisons of SQC with global ground-based quantum communications. We study the average number of secret bits per day that can be distributed between two remote stations, using large-scale quantum fibre-networks or a single satellite repeater station in orbit. Our findings rigorously prove the superiority of satellite-based quantum repeaters for global quantum communications, reveal the constraints associated with fibre-based networks and the enormous resource demands required to overcome achievable rates offered by a single satellite. These results further motivate the study of SQC and its key role within a future quantum internet.

The remainder of this paper is structured as follows: In Section~\ref{sec:NetDesign} we introduce preliminary notions of quantum networks, optimal end-to-end performance and ideal properties of large-scale network designs. In Section~\ref{sec:WRNs} we discuss how the optimal performance of quantum WRNs can be analytically bounded with respect to network properties and apply these methods to bosonic lossy-networks. Section~\ref{sec:Comparison} then compares the performance of global fibre-networks with rates that are achievable by SQC, assessing the advantages associated with each infrastructure, followed by concluding remarks.

\section{Quantum Network Design \label{sec:NetDesign}}

\subsection{Basics of Quantum Networks}

A quantum network can be described as a finite, undirected graph $\mathcal{N} = (P,E)$ where $P = \{\bs{x}_i\}_{i}$ is a set of all nodes (points/vertices) on the graph and $E = \{e_i\}_{i}$ collects valid connections between pairs of nodes (edges). A network node refers to either a user-node, such as a potential end-user pair Alice $\bs{a}$ and Bob $\bs{b}$, or a repeater/relay-node. Each node $\bs{x}_i$ possesses a local register of quantum systems which can be altered and exchanged with connected neighbours. Any two nodes $\bs{x},\bs{y} \in P$ are connected via an undirected edge $e \defeq (\bs{x},\bs{y})\in E$ if there exists a quantum channel $\mathcal{E}_{\bs{x}\bs{y}}$ through which they may directly communicate. Since each edge is undirected, this may be a forward or backward channel. 

In the context of quantum networks, it is important to make a distinction between \textit{physical flow} and \textit{logical flow}. The logical flow of a quantum communication channel describes the direction in which entanglement, secret-keys, or quantum states are distributed from a node $\bs{x}$ to node $\bs{y}$ (or vice versa). The physical flow of quantum communication refers to the actual direction of quantum system exchange, i.e.~if quantum systems are physically sent in the direction $\bs{x}\rightarrow \bs{y}$ or $\bs{y}\rightarrow \bs{x}$. In a quantum network, these concepts can be completely decoupled. This may be due to the fact that the communication task has a symmetric objective i.e.~if Alice and Bob with to share a secret-key, they do not care \textit{who} initiates the exchange of quantum systems. However, it may also be thanks to quantum teleportation; it is always possible to ``reverse" the logical direction of communication by means of a teleportation protocol between Alice and Bob. 

The independence of physical and logical flow helps us to reliably describe a quantum network as an undirected graph. Any pair of connected network nodes can choose the physical direction in which they wish to exchange quantum systems and may always choose that which has the largest capacity. As a result, we never need to distinguish between forward or backward channels and represent each edge $(\bs{x},\bs{y})\in E$ by the best choice of quantum channel \cite{End2End}.

In a point-to-point communication setting, the logical flow of quantum information has a clear and obvious set of choices; Alice to Bob $\bs{a} \rightarrow \bs{b}$ or Bob to Alice $\bs{a} \leftarrow \bs{b}$. However, within a quantum network, a vast array of options emerge due to the various interconnections and possible paths that information may follow. To address this, users can devise an end-to-end routing strategy that facilitates communication between end-users. The two key classes of strategy are \textit{single-path} and \textit{multi-path} routing.\par

Single-path routing is the simplest network communication method, which utilises point-to-point communications in a sequential manner. An end-to-end route, $\omega$, is defined as a sequence of network edges which are able to connect a pair of end-users $\bs{a},\bs{b}\in P$, that is $\omega \defeq \{ (\bs{a},\bs{x}_1), (\bs{x}_1, \bs{x}_2), \ldots, (\bs{x}_N, \bs{b})\}.$
Quantum systems can be exchanged from node-to-node along this route, followed by LOCC operations after each transmission until eventually communication is established between the end-users. This kind of strategy is analogous to the use of a repeater-chain and network performance is determined by the strength of each link along an optimal end-to-end route. Yet, quantum information is significantly less robust than classical information and single-path routing may not be sufficiently resilient to network errors, or provide high enough rates.

A more powerful strategy is multi-path routing, which properly exploits the multitude of possible end-to-end routes available in a quantum network. In multi-path protocols, users may simultaneously utilise a number of unique routes $\{\omega_1,\omega_2,\ldots,\omega_M\}$ in an effort to enhance their end-to-end rate. A user may exchange an initially multi-partite quantum state with a number of neighbouring receiver nodes, who may each then perform their own point-to-multi-point exchanges along its unused edges. The exchange of quantum systems can be interleaved with adaptive network LOCCs in order to distribute secret correlations and this process continues until a multi-point interaction is carried out with the end-user. 

The optimal multi-path routing strategy operates in such a way that all channels in the network are used once per end-to-end transmission. This is known as a \textit{flooding protocol} \cite{TanenbaumNets, GamalNets,End2End}; each node in the network performs quantum systems exchanges along all its available edges, resulting in non-overlapping point-to-multipoint exchanges between all network nodes. The ability to flood an entire network means that every possible end-to-end route between the end-users are fully explored, allowing them to achieve the optimal end-to-end rate. This greatly enhances the end-to-end performance of quantum networks.

\subsection{Optimal Performance and Flooding Capacities\label{sec:Performance}}

As discussed, the optimal end-to-end performance of a network is defined by its ability to perform flooding by using every edge in the network to achieve communication between a pair of end-users \cite{End2End}. Any communication protocol which does not flood the network utilises less resources and thus fewer end-to-end paths; hence no protocol can achieve a better end-to-end rate than flooding. This optimal performance is quantified by a \textit{flooding capacity} $\mc{C}^m(\bs{i}, \mc{N})$, which describes the optimal number of target bits (such as secret-bits or entanglement-bits) that can be transmitted between end-users per use of a flooding protocol.

An important graph-theoretic concept for quantifying network performance is that of \textit{cuts} and \textit{cut-sets}. Consider a network $\mc{N}=(P,E)$ with two remote end-users $\bs{a}, \bs{b} \in P$. We may collect this end-user pair into its own, compact object $\bs{i} = \{ \bs{a},\bs{b}\}$, which is simply a two-element subset of the collection of all network nodes. We define a cut $C$ as a bipartition of all network nodes $P$ into two disjoint subsets of nodes $(P_{\bs{a}}, P_{\bs{b}})$ such that the end-users become completely disconnected, $\bs{a} \in P_{\bs{b}}$ and $\bs{b} \in P_{\bs{b}}$, where $P_{\bs{a}}\cap P_{\bs{b}} =\varnothing$. A cut $C$ generates an associated cut-set; a collection of network edges $\tilde{C}$ which enforce the partitioning when removed,
\begin{equation}
\tilde{C} \defeq \{ (\bs{x},\bs{y}) \in E~|~\bs{a} \in P_{\bs{a}}, \bs{b}\in P_{\bs{b}}\}.
\end{equation}
Under the action of a cut, a network is successfully partitioned
\begin{equation}
\mc{N} = (P,E) \xrightarrow{\text{Cut: $C$}} (P,E \setminus \tilde{C}) = ( P_{\bs{a}}\cup P_{\bs{b}}, E \setminus \tilde{C} ),
\end{equation}
so that there no longer exists a path between $\bs{a}$ and $\bs{b}$. Network cuts play a key role in the derivation of end-to-end network rates and many network optimisation tasks can be reduced to an optimisation over all cuts with respect to single-edge/multi-edge properties.

Each channel in a network is associated with a single-edge capacity $\mc{C}_{\bs{xy}} \defeq \mc{C}(\mc{E}_{\bs{xy}})$ which describes the point-to-point communication quality between nodes. Hence, all networks have a single-edge capacity distribution 
$
\{\mc{C}_{\bs{xy}}\}_{(\bs{x},\bs{y})\in E}$ which informs the weights of the network graph. Consequently, the flooding capacity can be derived by solving the classical maximum-flow minimum-cut problem according to a single-edge capacity distribution.
The flooding capacity is found by locating the minimum-cut $C_{\min}$, which minimises the multi-edge capacity over all cut-sets \cite{End2End},
\begin{equation}
\mc{C}^m(\bs{i}, \mc{N}) \defeq \min_{C}\hspace{-2mm}\sum_{(\bs{x},\bs{y})\in \tilde{C}} \mc{C}_{\bs{xy}}. \label{eq:Flood_Cap}
\end{equation}

For general quantum networks with arbitrary capacity distributions and network structures, this problem requires a numerical treatment by solving the well known max-flow min-cut problem \cite{FordFlow,KarpFlow,OrlinFlow} to find $C_{\min}$. However, it is possible to express an intuitive, simpler upper-bound. Let us define the nodal-neighbourhood of a node $\bs{x}$ as
\begin{equation}
N_{\bs{x}} \defeq \{ \bs{y} ~|~(\bs{x},\bs{y})\in E\}.
\end{equation}
Then $N_{\bs{x}}$ is the collection of nodes to which $\bs{x}$ is connected. We can also define an edge-neighbourhood of $\bs{x}$ as all the edges which connect $\bs{x}$ to its neighbours,
\begin{equation}
E_{\bs{x}} \defeq \{ (\bs{x},\bs{y}) ~|~\bs{y}\in N_{\bs{x}}\}.
\end{equation}
It follows that one can always successfully partition the users $\bs{a}$ and $\bs{b}$ by collecting all of the edges in $E_{\bs{a}}$ or $E_{\bs{b}}$ into a cut-set. This effectively disconnects either of the nodes from the rest of the network, resulting in a valid cut and is true regardless of the network architecture. We may call this form of network cut \textit{user-node isolation}, denoted $C_{\text{iso}}$. 

Since this form of cut always exists, the multi-edge capacity associated with $C_{\text{iso}}$ can be used to upper-bound Eq.~(\ref{eq:Flood_Cap}). By performing nodal isolation on the end-user in $\bs{i}=\{\bs{a},\bs{b}\}$ which minimises its multi-edge capacity, we can write the bound
\begin{equation}
\mc{C}^m(\bs{i}, \mc{N}) \leq \mc{C}_{\mc{N}_{\bs{i}}}^m \defeq \min_{\bs{j} \in \bs{i}} \sum_{(\bs{x},\bs{y})\in E_{\bs{j}}} \mc{C}_{\bs{xy}}. \label{eq:Nodal_Iso}
\end{equation}
Here we have defined $\mc{C}_{\mc{N}_{\bs{i}}}^m$ as the \textit{min-neighbourhood capacity} which is generated by $C_{\text{iso}}$. Since $C_{\text{iso}}$ is a valid network-cut, the min-neighbourhood capacity is achievable.

\subsection{Ideal Properties of Large-Scale Networks}

An overarching goal of quantum network design is to achieve \textit{end-to-end distance independence}. That is, given a pair of end-users, the rate achievable between them is independent from their physical end-to-end separation and instead encoded into some properties of network link-lengths or nodal density. In quantum networks, distance independence is especially important as it bypasses the fundamental rate limitations associated with point-to-point communications imposed by the PLOB bound. Recent studies have shown that random fibre-network architectures which are explicitly conscious of link-lengths are capable of obtaining distance independence, e.g.~Waxman networks which are sufficiently dense \cite{WaxmanNets, QuntaoRandQNets}. These studies simultaneously suggest the shortcomings of classically-inspired network architectures (such as scale-free structures) to achieve distance independence in a quantum setting, even with large resources. 

It is imperative that quantum networks are constructed using quantum-motivated design choices. To facilitate end-to-end distance independence, ground-based quantum network architectures will need to be especially conscious of two main features; maximum link-length (which is often encoded into nodal density) and network connectivity (how well connected each node is within the structure). When appropriate restrictions are placed on permitted link-lengths, this can help to ensure strong end-to-end rates. 
Until now, the behaviour of maximum link-lengths with respect to network architecture has been excluded to numerical studies. For example, in Refs.~\cite{QuntaoRandQNets, BritoRandQNets} the authors study the necessary {nodal densities} required to ensure effective end-to-end rates; the higher the network density, the more likely that end-to-end communication can be mediated by shorter links, resulting in reliable rates. In this work, we reveal novel analytical tools which help to uncover the necessary resources for high performance quantum fibre-networks.

\section{Weakly-Regular Quantum Networks\label{sec:WRNs}}

\subsection{Weakly-Regular Networks}

The generality of complex architectures such as Waxman, Erd\H{o}s-R\'{e}nyi and scale-free networks render analytical investigation very difficult. For this reason, the investigation of quantum repeater technologies/protocols often relies heavily on numerics in order to study complex network performance. Otherwise, one is limited to the simpler setting of linear networks which can be assessed analytically, i.e.~repeater chains. The development of a common ground between these scenarios, where large-scale, highly-connected networks can be studied analytically is thus highly desirable. 

Here, we propose the use of \textit{weakly-regular} (WR) network architectures. Weak-regularity is a graph-theoretic concept which infers particular connectivity properties onto undirected graphs, $\mc{N} = (P,E)$. Most prominently,
each node $\bs{x} \in P$ in a WRN has the same degree, i.e.~if $\bs{x} \in P$ is connected to $k$ other nodes, then every node in the network is connected to exactly $k$ nodes. This infers \textit{regularity} and provides the core simplification from complex graphical designs (where regularity is seldom held).

The \textit{weak} element of weak-regularity is less obvious, but is similarly integral to our analyses. By definition, a \textit{strongly-regular} graph is a structure which adheres to very strict rules; it consists of $n$ nodes which are all $k$ regular, any pair of adjacent nodes (nodes which share an edge) share exactly $\lambda$ common neighbours and any pair of non-adjacent nodes (don't share an edge) share exactly $\mu$ neighbours. We call these positive, integer parameters $\lambda$, $\mu \in \mathbb{Z}^+$ the adjacent and non-adjacent commonalities respectively of any two nodes on the graph and are used to characterise how a graph is connected. The notion of strength in strong-regularity resides in the consistency of the values $\lambda$ and $\mu$ for all nodes across the graph. 
When strong-regularity is upheld, all of these requests result in a relatively small graph with impractical properties for large-scale network design. Consequently, if $\lambda$ and $\mu$ are allowed to take on a broader range of values, then the graph is \textit{weakly}-regular. Hence, weakness infers a looser characterisation of neighbour sharing between nodes.

\begin{figure}
\includegraphics[width=0.9\linewidth]{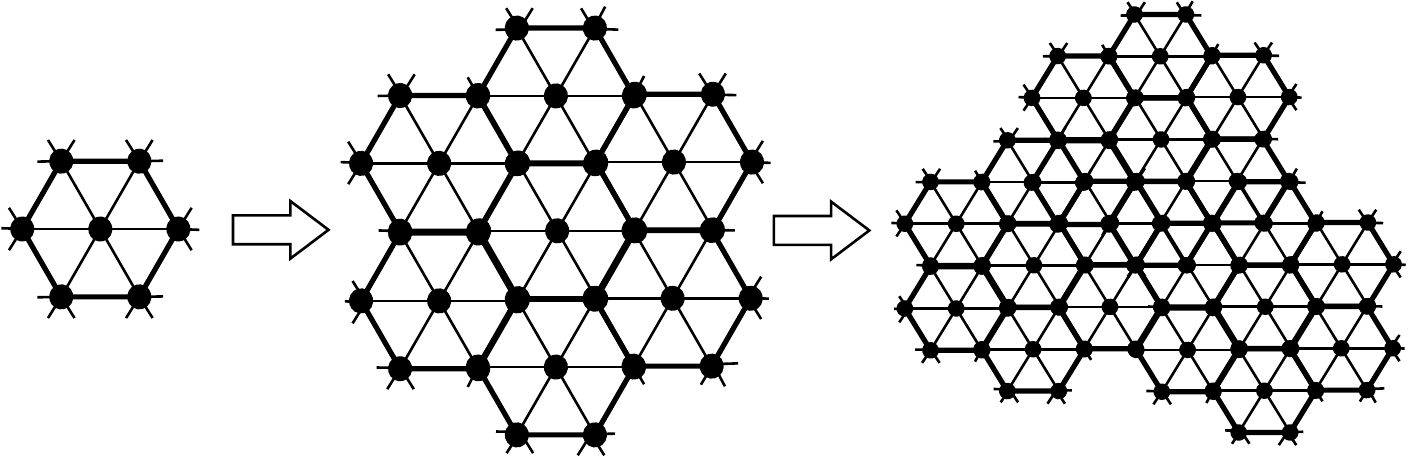}
\caption{A WRN cell can be concatenated many times to construct a large-scale network which is WR within some nodal boundary. Alternatively, outer edges can be looped in order to close the network so to satisfy weak-regularity everywhere (so there is no nodal boundary). In this example, $k = 6$ and there is only one unique adjacent commonality multi-set $\bs{\lambda}=\{2\}^{\cup 6}$ (we employ a superscript union notation to describe the repeated union of a single set, e.g.~$\{x\}^{\cup 3} = \{x\}\cup\{x\}\cup\{x\} = \{x,x,x\}$).}
\label{fig:WeakReg_Explain}
\end{figure}

In this work, we consider WR graphs for which the neighbour-sharing (commonality) properties of any network node can obey some \textit{spectrum} of values. In the context of studying end-to-end performance, it turns out that the most important commonality property is the adjacent commonality. Consider a node $\bs{x} \in P$ on a $k$-WR graph and its neighbourhood of adjacent nodes,
\begin{equation}
N_{\bs{x}} \defeq \{ \bs{y} ~|~(\bs{x},\bs{y})\in E\}.
\end{equation}
Then $N_{\bs{x}}$ is the collection of $k$ nodes to which $\bs{x}$ is connected. For any node on the network, we can define a bespoke \textit{adjacent commonality multi-set} (a modified set which can contain degenerate elements) which counts the number of neighbours shared between $\bs{x}$ and all its neighbours $\bs{y} \in N_{\bs{x}}$. More precisely, the adjacent commonality multi-set can be defined as
\begin{equation}
\bs{\lambda}_{\bs{x}} \defeq \big\{ |N_{\bs{x}} \cap N_{\bs{y}}| ~\big|~(\bs{x},\bs{y}) \in E\big\} = \{ \lambda_{\bs{x}}^{\bs{y}} \}_{\bs{y}\in N_{\bs{x}}},
\end{equation} 
where we denote $\lambda_{\bs{x}}^{\bs{y}} \defeq |N_{\bs{x}} \cap N_{\bs{y}}|$ as the number of common neighbours shared between the connected nodes $\bs{x}$ and $\bs{y}$. In our analyses, we consider graphs which are defined by a non-degenerate super-set of permitted adjacent commonality multi-sets,
\begin{equation}
\bs{\Lambda} \defeq \{ \bs{\lambda_x} ~|~\bs{x}\in {P}\},
\end{equation} 
so that the adjacent commonality multi-set of any node in the network $\bs{x}\in P$ belongs to the set $\bs{\lambda}_{\bs{x}} \in \bs{\Lambda}$. 

In summary, we are able to define a $(k,\bs{\Lambda})$-WRN as a class of network for which all nodes have a constant degree equal to $k$ and for which their neighbour-sharing properties satisfy $\bs{\lambda}_{\bs{x}} \in \bs{\Lambda}$. While $k$ and $\bs{\Lambda}$ impose connectivity constraints, WRNs that belong to this class are free to adopt a vast range of topological or spatial configurations. Furthermore, we avoid explicit references to the number of network nodes $n$. Instead, $n$ is encoded into properties of the network such as the nodal density $\rho_{\mc{N}}$ which defines the average number of network nodes per unit of area.

While these definitions may seem overly abstract, they introduce a remarkably versatile way to analytically describe interesting and useful network structures. For example, it is easy to construct a WR \textit{network cell}; a collection of nodes connected a particular graphical structure, which when concatenated (or ``stitched") together will result in a large-scale network which obeys weak-regularity within some nodal boundary. This permits the analytical investigation of networks consisting many nodes which display highly-connected, yet realistic properties. This concatenation process is visualised in Fig.~\ref{fig:WeakReg_Explain} where it is shown how a $k=6$ regular cell can be used to generate a larger WRN. Furthermore, Fig.~\ref{fig:SepBounds}(a) depicts a number of examples of these network cells. For more precise details and discussions, see the Supplementary Material.

\begin{figure*}
\hspace{-1mm} (a) \hspace{5.45cm} (b) \hspace{5.45cm} (c)\\
\hspace{1mm}\\
\includegraphics[width=1.\linewidth]{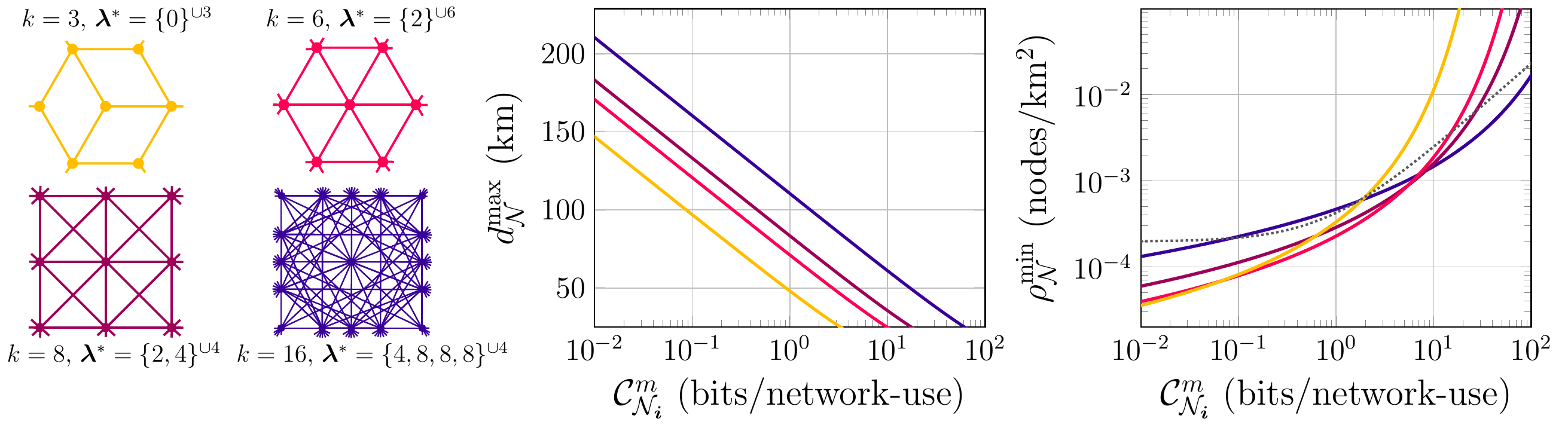}
\caption{(a) Examples of network cells that can be used to construct quantum WRNs, where $k$ denotes regularity and $\bs{\lambda}^*$ is the adjacent commonality multi-set which minimises the quantity in Eq.~(\ref{eq:CutGrowth}). These quantities characterise the network cell and larger WRNs that they can construct. 
(b) Relationship between the optimal end-to-end flooding capacity (equal to the min-neighbourhood capacity $\mc{C}_{\mc{N}_{\bs{i}}}^m$) and the maximum link-length $d_{\mc{N}}^{\max}$ required to guarantee it for bosonic lossy quantum networks according to Eq.~(\ref{eq:Dmax_MNC}). Plots are colour coordinated with the network cells. Greater network regularity leads to larger permissible ranges of channel lengths.
Panel (c) depicts this relationship between minimum nodal density $\rho_{\mc{N}}^{\min}$ with respect to optimal performance for bosonic lossy quantum networks, colour coordinated with the network cells. The grey dotted line relates the nodal density to the average flooding capacity between any pair of nodes in a Waxman Network, as in Eq.~(\ref{eq:QDense}) \cite{QuntaoRandQNets}.} 
\label{fig:SepBounds}
\end{figure*}

\subsection{Optimal Performance of Weakly-Regular Networks}

As a network becomes more highly connected, it becomes easier to locate end-to-end routes between any pair of nodes. As a result, performing network cuts requires the collection of more and more edges in a cut-set, $\tilde{C}$, in order to restrict flow along the many potential connective paths between the end-users. In a spatial network, this initiates a relationship between \text{cut-set cardinality}, $|\tilde{C}|$, and \text{distance from an end-user}. Performing cuts with edges further away from a user node requires the collection of many more edges to consolidate the partition. The further from the user nodes we begin the cut, the greater the number of potential end-to-end paths we must restrict (since we have permitted a larger flow from the user node) and thus the more edges we must collect. We call this phenomenon \textit{network cut growth}. \\

When the quality of point-to-point links in a network is consistent (link-lengths are close to the overall average), then cut-set cardinality $|\tilde{C}|$ plays a significant role in the characterisation of minimum cuts. Indeed, for dense networks with distance constrained edges, the minimum cut is often achieved by nodal isolation. This is thanks to network cut growth and consistent single-edge rates; cuts performed further away from the user-node will generate a larger multi-edge capacity since they will reliably contain more edges with similar single-edge rates. This kind of behaviour has been observed with respect to multi-path capacities in Waxman networks, and may exist in other very popular random network models \cite{WaxmanNets, QuntaoRandQNets}. This form of network cut behaviour is indicative of a well connected network and one that will achieve high rates.

This logic motivates the main theoretical tool utilised in this paper. WRNs also undergo network cut growth with respect to distance from end-users, thanks to their reliable and consistent connective properties.  Consider a large-scale $(k,\bs{\Lambda})$-WR quantum fibre-network, and a pair of end-users $\bs{i} = \{\bs{a},\bs{b}\}$ located within it. Then the cut which collects the fewest edges is that which performs user-node isolation, i.e.~collects the $k$ neighbouring edges of either end-user, generating $\mc{C}_{\mc{N}_{\bs{i}}}^m$ defined in Eq.~(\ref{eq:Nodal_Iso}). Every other cut in the network will necessarily collect more than $k$ edges in order to successfully partition the end-users. When the flooding capacity saturates Eq.~(\ref{eq:Nodal_Iso}) it is user-node isolation achieves the minimum cut. 

Unlike more complex, random network models, it is possible to analytically study network cut growth within WRNs. Here, we sketch the basic technique, and point the reader towards more sophisticated, precise arguments in the Supplementary Material. 
The basic idea is to use the quantities $k$, and $\bs{\Lambda}$  to determine how much larger a cut-set will grow when one is not permitted to cut user-neighbourhood edges. If we know how much a cut-set will grow in size with respect to distance from an end-user node, we can identify a relationship between cut-set cardinality and the link quality requirements necessary to achieve the minimum cut. Ultimately, this helps us to derive a \textit{minimum, single-edge threshold capacity} $\mc{C}_{\min}$. This reveals a minimum link-quality which when imposed upon all edges in the network will ensure that the optimal performance between end-users is guaranteed to be equal the min-user neighbourhood capacity, $\mc{C}^m(\bs{i},\mc{N}) = \mc{C}_{\mc{N}_{\bs{i}}}^m$.

This technique proves to be powerful and versatile. Indeed, given the quantum channel description of single-edges in the network, $\mc{C}_{\min}$ can be used to relate threshold properties of point-to-point quantum channels, and end-to-end performance. In the following, we employ this technique to reveal maximum link-lengths for quantum networks connected by bosonic lossy channels.

\subsection{Bosonic Lossy Quantum Networks}

When considering fibre-based networks, point-to-point links are described by bosonic pure-loss (lossy) channels. A lossy channel $\mc{L}$ with transmissivity $\eta \in (0,1)$ is a phase-insensitive Gaussian quantum channel, which transforms input quadratures $\hat{\bf{x}} = (\hat{q},\hat{p})^T$ according to $\hat{\bf{x}} \mapsto \sqrt{\eta} \hat{\bf{x}} + \sqrt{1-\eta}~\hat{\bf{x}}_{\text{env}}$ (where the environment is in a vacuum state) describing the interaction of bosonic mode with a zero-temperature bath \cite{GaussRev}. 

For lossy quantum networks, the most important property is channel length, or from a network perspective, \textit{inter-nodal separation}. For a given edge $(\bs{x},\bs{y})\in E$ connecting two users in a network, the inter-nodal separation is simply the distance $d_{\bs{xy}}$ between them. All two-way assisted quantum and private capacities of the lossy channel are precisely known via the PLOB bound \cite{PLOB},
\begin{equation}
\mc{C}_{\mc{L}}(d_{\bs{xy}}) = -\log_2\left( 1-10^{-\gamma d_{\bs{xy}} }\right),
\end{equation}
where the inter-nodal separation is related to the transmissivity via $\eta_{\bs{xy}} = 10^{-\gamma d_{\bs{xy}}}$. For current, state of the art fibre-optics the loss rate is $\gamma = 0.02$ per km (which equates to a loss rate of $0.2$ dB/km). Since these separations directly dictate the channel quality between nodes they must be precisely engineered and distributed in order to guarantee strong end-to-end performance.\par

For WR bosonic lossy fibre-networks, one can use the technique sketched in the previous section to reveal a maximum fibre-length allowed within the network to guarantee optimal end-to-end performance. Consider an end-user pair $\bs{i}=\{\bs{a},\bs{b}\}$ and a desired min-user neighbourhood capacity, $\mc{C}_{\mc{N}_{\bs{i}}}^m$. Let us define the quantities $\delta$ and $\omega$ as
\begin{align}
{\delta} \defeq \min_{\bs{\lambda}\in\bs{\Lambda}} \sum_{\lambda \in \bs{\lambda}} (k - \lambda - 1) \label{eq:CutGrowth},~~
\omega \defeq \frac{2(k-1)}{\delta},
\end{align}
which are characteristic quantities of any $(k,\bs{\Lambda})$-WR network. Then there exists a maximum fibre-length 
\begin{equation}
d_{\mc{N}}^{\max} \defeq -\frac{1}{\gamma} \log_{10}\left(1 - 2^{-\frac{1}{\delta} \mc{C}_{\mc{N}_{\bs{i}}}^m}\right),\label{eq:Dmax_MNC}
\end{equation}
for all edges in the network $(\bs{x},\bs{y})\in E$ so that the end-to-end flooding capacity is guaranteed to satisfy
\begin{equation}
 \omega\hspace{0.4mm} \mc{C}_{\mc{N}_{\bs{i}}}^m \leq \mc{C}^m(\bs{i},\mc{N}) \leq \mc{C}_{\mc{N}_{\bs{i}}}^m \label{eq:PerfBounds}.
\end{equation}
If the maximum link-length is not obeyed for all edges, $\mc{C}_{\mc{N}_{\bs{i}}}^m$ remains an upper-bound on end-to-end performance.

The parameter $\omega$ can be considered as a confidence measure on this performance guarantee. It arises out of a very worst case scenario in which all edges surrounding the neighbourhood of an end-user may be of maximum length $d_{\mc{N}}^{\max}$. This may compromise the minimum cut, potentially introducing a minor degradation of the flooding rate. 
Fortunately, $\omega$ is typically of order $10^{-1} - 1$ providing tight bounds in Eq.~(\ref{eq:PerfBounds}). For example, given the WRN cells in Fig.~\ref{fig:SepBounds}(a) we find a worst-case value of $\omega = 15/64 \approx 0.23$ for $k=16$, while for $k=3$ it is exactly $\omega =2/3$.

In general this worst-case scenario will not be true and the upper-bound in Eq.~(\ref{eq:PerfBounds}) will nearly always be saturated. Nonetheless, it is possible to provide equality by imposing a slightly stricter distance constraint on the network edges connected to end-users. That is, additionally enforcing that any edge $(\bs{x},\bs{y})\in E_{\bs{a}} \cup E_{\bs{b}}$ is shorter than
\begin{equation}
d_{\mc{N}_{\bs{i}}}^{\max} \defeq -\frac{1}{\gamma} \log_{10}\left(1 - 2^{-(\frac{1}{k-1} - \frac{1}{\delta}) \mc{C}_{\mc{N}_{\bs{i}}}^m}\right) \leq d_{\mc{N}}^{\max},\label{eq:Dmax_MNeigh}
\end{equation}
we can guarantee that the end-to-end flooding capacity satisfies
\begin{equation}
\mc{C}^m(\bs{i},\mc{N}) = \mc{C}_{\mc{N}_{\bs{i}}}^m. \label{eq:PerfEq}
\end{equation}
Throughout our investigation we focus on this more probable performance guarantee that the end-to-end flooding capacity satisfies $\mc{C}^m(\bs{i},\mc{N}) = \mc{C}_{\mc{N}_{\bs{i}}}^m$ granted that Eq.~(\ref{eq:Dmax_MNC}) is respected throughout the entire network. This allows us to effectively characterise true optimal performance for WR quantum fibre-networks.

Full details and derivations of these results can be found in the Supplementary Material. Clearly, the quantity $\delta$ in Eq.~(\ref{eq:CutGrowth}) is vital to our developments. Precisely, $\delta$ represents the smallest cut-set cardinality that can be achieved without cutting user connected edges. In more intuitive terms, it is used to monitor the network cut growth of a WRN, and allows us to derive critical quantities such as the maximum fibre-length above. Note that in Fig.~\ref{fig:SepBounds}(a) each of the WR cells are characterised by their regularity $k$ and the adjacent commonality multi-set which achieves the minimisation in $\delta$, denoted by $\bs{\lambda}^*$.
Fig.~\ref{fig:SepBounds}(b) then illustrates the relationship between flooding capacity and maximum fibre-length for a number of example WRNs. The limiting separation $d_{\mc{N}}^{\max}$ is inexorably linked with the regularity of the WRN; networks with high connectivity possess a greater tolerance for longer distance channels since the enhanced multi-path capabilities of the network outweigh the effect of poor quality channels. This is clear from the examples shown in Fig.~\ref{fig:SepBounds}, where a WRN with degree $k= 16$ can tolerate channels of $\sim60~\text{km}$ longer than one with $k= 3$. Note that the 

\subsection{Nodal Density\label{sec:NDense}}

We have discussed how WRNs can be used to describe realistic large-scale networks while maintaining analytical understanding of their optimal end-to-end performance and critical properties, e.g.~maximum link-length. We may take this analysis a step further in order to understand the relationship between end-to-end performance and network nodal density. The nodal density is defined as the number of nodes $n$ per unit area $A$ of the network, 
\begin{equation}
\rho_{\mc{N}} \defeq n/A.
\end{equation}  
Via the previous section, we may derive a maximum fibre-length $d_{\mc{N}}^{\max}$ that is necessary to guarantee some optimal end-to-end performance $\mc{C}^m(\bs{i},\mc{N}) = \mc{C}_{\mc{N}_{\bs{i}}}^m$. We may then ask the question: Is there a corresponding \textit{minimum nodal density} in which the $(k,\bs{\Lambda})$-WRN can be constructed while remaining compliant with the maximum fibre-length? It is not so easy to answer this question for completely general WR architectures. Nonetheless, for the networks studied in this work this challenge is readily tackled. 

In the Supplementary Material we show that via the concept of sparse constructions (the least dense way to construct a network given connectivity rules and link-length constraints) it is possible to derive a minimum nodal density $\rho_{\mc{N}}^{\min}$ required to achieve optimal end-to-end flooding capacity. The consistency of WRN cells in Fig.~\ref{fig:SepBounds}(a) reduce this to a geometric problem which is solvable. In summary, we can find a lower-bound on the nodal density required to achieve optimal performance $\mc{C}^m(\bs{i},\mc{N}) = \mc{C}_{\mc{N}_{\bs{i}}}^m$ given by
\begin{align}
\rho_{\mc{N}} \geq \rho_{\mc{N}}^{\min} &\defeq \xi\gamma^2 \left[\log_{10}\left(1 - 2^{-\frac{1}{\delta}\mc{C}_{\mc{N}_{\bs{i}}}^m}\right)\right]^{-2}.
\end{align}
Here, $\xi$ is a characteristic quantity of the WR network, found by studying its sparse construction. The tightness of this lower-bound depends on the manner in which the sparse construction is solved or approximated. 

Figure \ref{fig:SepBounds}(c) depicts the connection between flooding capacity and minimum nodal density for a number of types of WRNs. It is clear that there is a trade off between end-to-end performance and regular nodal degree. At low flooding rates ($10^{-2} - 10^{-1}$ bits per network use) the WR structures with lower degrees $k =3$ and $6$ demand fewer resources to achieve the same performance as those with higher degrees $k=8$ and $16$. In this regime, high degrees are not necessary everywhere in the network to achieve the flooding rates; indeed, the consistent connectivity invoked by WR designs help to maintain performance at low densities. Yet, as the flooding capacity transitions towards $1 -10$ bits per network use this behaviour changes; WRNs with low degrees demand shorter and shorter links to achieve the high rates and the inability to involve more connections at each node becomes costly. As can be seen for $k=3$ the required minimum nodal density rapidly increases, shortly followed by $k=6$ and $8$. Contrarily, the regime of high end-to-end rates is well suited to WRNs with greater regularity, $k=16$, for which the greater number of connections at each node facilitate a lower overall density. 

Simultaneously, we plot an approximation of the average flooding capacity between any pair of nodes on a Waxman network with respect to nodal density (dashed grey line) as derived in Ref.~\cite{QuntaoRandQNets}. This defines an expected flooding capacity between any pair of users, such that
\begin{equation}
\mathbb{E}_{\bs{i}} \left[\mc{C}^m (\bs{i},\mc{N}) \right] \approx \zeta( \rho_{\mc{N}} -\rho_{\text{crit}}) - 1, \label{eq:QDense}
\end{equation}
where $\rho_{\text{crit}} \approx 4.25\times 10^{-4}$, $\zeta \approx 4358$ and the average $\mathbb{E}_{\bs{i}}[\cdot]$ is taken over all possible end-user pairs in the network. We identify a kinship between the necessary $\rho_{\mc{N}}^{\min}$ predicted by WRNs and that derived for Waxman networks. As one may expect, the order and consistency of WRNs is able to promise lower resource demands at lower-rates; resulting in smaller critical nodal density predictions for the necessary density to achieve $1$ bit per network use. However, as the flooding performance increases, the flexibility of the Waxman design (its ability to utilise variable nodal degrees) renders it superior to the lower degree WR structures. In summary, there is good behavioural agreement between these models, corroborating the utility of WR structures as a valuable analytical tool for quantum network design.

\section{Comparison with Satellite Quantum Communications \label{sec:Comparison}}

\subsection{Satellite Quantum Communications}
Here, we briefly review key results which facilitate a comparison of SQC with idealised, ground-based quantum networks. For more detailed derivations and discussions of these results, please refer to Refs.~\cite{SQC,FS}.\par
Consider two users (Alice and Bob), who choose to communicate by means of an orbiting satellite (a dynamic repeater). Here we consider a ground station $G$ at approximately sea-level, and a satellite $S$ which is in orbit at an altitude $h \geq 100~\text{km}$ and variable zenith angle $\theta$. Given that the radius of the Earth is $R_{E} \approx 6371~\text{km}$, the slant distance between $G$ and $S$ is
$
z(h,\theta)  = \sqrt{h^2 + 2hR_E^2 +R_E^2 \cos^2(\theta)} - R_E\cos(\theta),
$
describing the true distance that an optical beam must travel from $G$ to/from $S$. We may consider two unique configurations for information transmission; \textit{uplink}, which refers to when $G$ is the transmitter and $S$ is the receiver, and \textit{downlink}, where the converse is true. Both configurations will identically admit the effects of free-space diffraction (beam-spot size widening) and atmospheric extinction (caused by molecular/aerosol absorption as the beam propagates). However additional loss/noise effects emerge with respect to uplink and downlink protocols, which invokes an asymmetry in their communication performance.\par
The effects of turbulence (caused by fluctuations in the atmospheric refractive index) and pointing errors (alignment of the optical signal with the receiver) are responsible for beam wandering, which instigates a fading process for the communication channels. For uplink protocols, turbulence is a significant factor for loss properties of the ground-satellite channel since it impacts the propagating beam immediately after transmission. However, pointing errors can be reduced thanks to the ability to easily access and optimise adaptive optics at ground level. In downlink these effects are reversed. Turbulence is not a factor until the beam reaches low altitudes, at which point the beam has already spread via diffraction. Hence turbulence can be neglected for downlink, but pointing errors mut be considered due to limited onboard access and resources.\par
Considering each of these physical effects characterising the lossy free-space channel, it is possible to present an ultimate limit on the secret-key capacity $K$ for SQC \cite{SQC},
\begin{equation}
K \leq -\Delta\left(\eta,\sigma\right) \log_2\left( 1-\eta \right). \label{eq:SatUltLims}
\end{equation}
Here $\Delta\left(\eta,\sigma\right)$ is a correction factor to the PLOB bound, where $\eta \defeq \eta(h,\theta)$ is an effective transmissivity which is a function of geometric position, encompassing all the effects of diffraction, extinction, and optical imperfections/inefficiencies. Meanwhile, $\sigma^2 = \sigma_{\text{turb}}^2 + \sigma_{\text{point}}^2$ is the variance of the Gaussian random walk of the beam centroid caused by beam wandering, with contributions from turbulence and/or pointing-errors.\par
This bound can be further modified to account for the presence of thermal noise, which is highly dependent upon time of day (day or night-time) and weather conditions (cloudy or clear skies). For night-time communications, background noise is practically negligible, and the above bound requires little modification. However, for day-time operations this is generally not the case and the free-space lossy channels must be described as thermal-loss channels which account for additional noise.\par

\subsection{Practical Key-Rates for Satellite Quantum Communications\label{sec:PracKeyRates}}

The bound in Eq.~(\ref{eq:SatUltLims}) is an ultimate upper-bound on the capacity of a ground-to-satellite communication channel, it is important to provide an assessment of realistic and practical protocols which embody achievable lower-bounds for SQC. These lower-bounds will facilitate comparisons with global quantum networks, and help deduce the conditions for which we can expect satellite advantage for long-distance quantum communications. 

Here we summarise some achievable rates for different satellite configurations. We consider practical, composably-secure secret key-rates achievable from the pilot-guided and post-selected CV-QKD protocol studied in Refs.~\cite{FS,SQC}. The main concept of this protocol is to encode information into Gaussian-modulated coherent states, randomly interleaved by highly energetic pilot pulses used to monitor the transmissivity and fading properties of the free-space channel in real time (facilitating the use of classical post-selection). This protocol has been comprehensively extended to account for the physical scenario of satellite quantum communications, resulting in realistic and practical rates.

We may consider the employment of such a protocol in conjunction with a near-polar sun-synchronous satellite used to communicate between two ground stations.  This type of orbit ensures a consistent fly-over time for any point on the Earth's surface, such that the satellite passes over any point at the same local mean solar time each day. This provides the possibility of stable conditions for satellite communications at around the same time each day. Let us assume that the stations lie along the orbital path such that the satellite crosses both of their zenith positions (which happens once per day). We further assume a worst-case scenario such that the stations only interact with the satellite when the zenith positions are crossed, and that both stations assume similar operational conditions.\par

It is possible to quantify the performance of satellite communications by considering a \textit{daily key rates}, i.e.~the number of secret-bits that may be shared per day. This allows us to utilise an average orbital rate $R_{\text{orb}}$ associated with up/downlink operations in day/night-time, representing an average secret-key rate per link usage. Thanks to the dynamic nature of SQC, and the fact that we consider communication with both stations only once per day, this daily rate will be constant with respect to ground based end-to-end distances. 
The number of secret-bits that can be shared in a zenith-crossing passage is then given by the effective transit time for the quantum communications $t_Q(h)$ as a function of the altitude, and a typical clock frequency which we set as $\alpha = 10~\text{MHz}$. The average daily-rate in a given configuration is thus
\begin{equation}
R_{\text{daily}}^{\text{sat}} \approx \alpha \>t_Q(h) \> R_{\text{orb}}^{i} ,\label{eq:DailyRate}
\end{equation}
for which $i$ labels the up/downlink and day/night-time.

For downlink operations at altitude $h=530~\text{km}$, initial beam-spot-size $\omega_0 = 40\text{ cm} $, receiver aperture $a_R = 1\text{ m}$, these setup parameters lead to the night-time/day-time rates \cite{SQC},
\begin{align}
&R_{\text{orb}}^{\text{down}} \approx 
\begin{cases}
3.066 \times 10^{-2}& \text{bits/use (night)},\\
3.041 \times 10^{-2}& \text{bits/use (day)}.
\end{cases}
\label{eq:Downlink}
\end{align}
For uplink, we consider and altitude $h=103~\text{km}$ and similar setups (but now with a spot-size $\omega_0 = 60\text{ cm}$ and wider aperture $a_R = 2\text{ m}$) leading to the rate,
\begin{align}
&R_{\text{orb}}^{\text{up}} \approx 
\begin{cases}
4.244 \times 10^{-2}& \text{bits/use (night)},\\
2.737 \times 10^{-2}& \text{bits/use (day)}.
\end{cases}
\label{eq:Uplink}
\end{align}

Notice that in both configurations the day and night time rates are very similar. This is thanks to  effective noise-filtering that can be performed with this kind of CV-QKD protocol. Such protocols are able to realistically exploit CV quantum systems and interferometric measurements in order to achieve much narrower frequency filters than is possible with DV protocols (see Ref.~\cite{SQC} for more details). As a result, the increased background thermal noise experienced at the receiver in day time does not significantly deteriorate the rate. 

\begin{figure*}
\includegraphics[width=\linewidth]{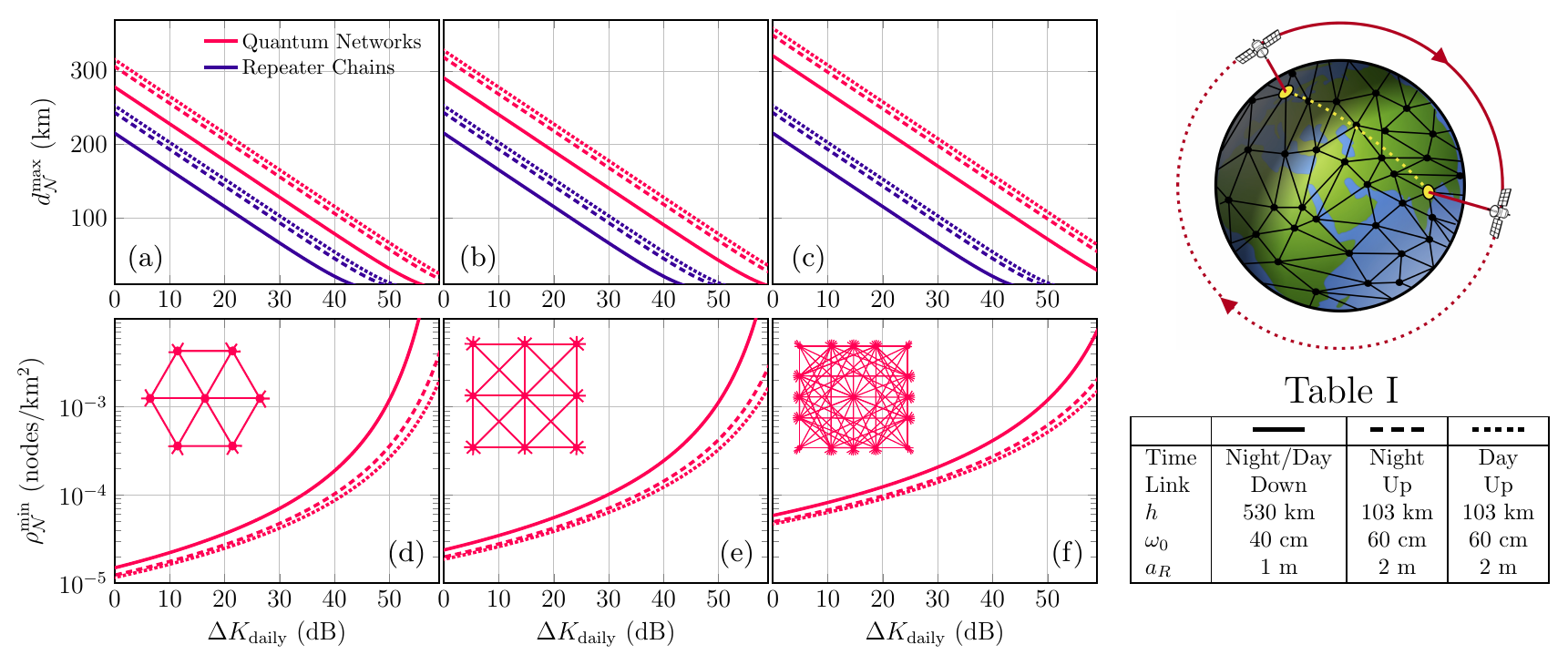}
\caption{The daily secret-key-rate advantage $\Delta K_{\text{daily}}$ in Eq.~(\ref{eq:SKAdv}) achieved by fibre-based quantum WRNs and repeater-chains with capacity achieving links over a single, sun-synchronous satellite-based repeater operating at practical, achievable rates from Eqs.~(\ref{eq:Downlink}) and (\ref{eq:Uplink}). The architecture of each WRN is shown inside of Panels (d) to (f), such that vertically aligned panels use the same architecture. Panels (a) to (c) plot $\Delta K_{\text{daily}}$ with respect to maximum fibre-length permitted within each structure, $d_{\mc{N}}^{\max}$. Panels (d) to (f) plot the relationship between the daily rate advantage $\Delta K_{\text{daily}}$ and minimum nodal density required in each WRN to achieve it. Satellite-based advantage can be achieved when $\Delta K_{\text{daily}} \leq 0$. All considered satellite setup parameters are shown in the table describing the operation time, direction, altitude $h$, spot-size $\omega_0$, receiver aperture $a_R$. }
\label{fig:SatelliteComp}
\end{figure*}

\subsection{Comparison with Ground-Based Networks}

As we have established in previous sections, end-to-end distance independence is a critical design feature for the construction of effective quantum networks. It is a feature that can be achieved, provided that one carefully monitors link-length, nodal density and the limits of quantum communication rates. Yet, as shown in the previous section, it can be very resource intensive and costly to promise strong end-to-end rates between long-distance end-users if we choose to solely utilise ground-based fibre networks. For this reason, it is important to understand the limits of large-scale quantum networks for long-range communication. Moreover, it is invaluable to determine \text{when} SQC may be superior and offer a feasible, cost-efficient route to global quantum communication.

Determination of \textit{when} SQC is advantageous requires a strict, quantitative comparison with ground-based fibre networks. In this section we aim to benchmark the optimal performance of global quantum fibre networks against practical, near-term SQC capabilities. More precisely, we compare daily secret key-rates obtained between globally distant end-users via:
\begin{itemize} 
\item[(\textit{i})] A global-scale $(k,\bs{\Lambda})$-WR fibre network with capacity achieving links.
\item[(\textit{ii})] A single, sun-synchronous satellite operating at the achievable rates in Eqs.~(\ref{eq:DailyRate})-(\ref{eq:Uplink}) using realistic devices and the practical CV-QKD protocol discussed in Section~\ref{sec:PracKeyRates}. 
\end{itemize}
Clearly, the resources accessed by an ideal $(k,\bs{\Lambda})$-WR fibre network are significantly greater than the single satellite, and a fairer comparison would be to consider a constellation of satellites; but that is the point. If a single, sun-synchronous satellite, operating at realistic rates is able to outperform a global fibre network within a meaningful resource regime, this offers clear evidence for the superiority (and necessity) of SQC for global quantum communications. Using the
tools developed throughout this paper, our comparison can be carried out expediently and analytically.

Assume two globally distant end-users, Alice and Bob. We need not consider a specific end-to-end distance, since the $(k,\bs{\Lambda})$-WRNs are end-to-end distance independent. By considering a daily key rate and the operational setup explained in Section~\ref{sec:PracKeyRates}, SQC is also end-to-end distance independent. We are left to compute the daily capacity of the WR fibre network. We consider that the fibre network operates constantly for a day using capacity achieving links with maximum link length $d_{\mc{N}}^{\max}$. Given $t_{\text{daily}} = 8.64\times 10^{4}$ s as the number of seconds in a day, and again assuming $\alpha = 10~\text{MHz}$, it can be shown the average number of secret-key bits per day satisfies
\begin{equation}
R_{\text{daily}}^{(k,\bs{\Lambda})}(d_{\mc{N}}^{\max}) \lesssim -\frac{ \alpha\>t_{\text{daily}}}{\delta} \log_2(1-10^{-\gamma d_{\mc{N}}^{\max}}), \label{eq:MP_dmaxfunc}
\end{equation}
where $\delta$ is defined in Eq.~(\ref{eq:CutGrowth}). Repeater-chains can be considered in a similar manner. The repeater-chain capacity is equal to the single-edge capacity associated with the longest inter-nodal separation in the chain. Hence, the average daily secret-key rate of a repeater-chain is \cite{End2End}
\begin{equation}
R_{\text{daily}}^{\text{chain}}(d_{\mc{N}}^{\max}) \lesssim -\alpha\>t_{\text{daily}}\> \log_2(1-10^{-\gamma d_{\mc{N}}^{\max}}). \label{eq:RC_dmaxfunc}
\end{equation}\par

In order to perform a quantitative comparison between satellite and ground-based quantum communications, we can compute the log-ratio between their daily-rates,
\begin{equation}
\Delta K_{\text{daily}} \defeq 10 \log_{10} \left( \frac{R_{\text{daily}}^{(k,\bs{\Lambda})}}{R_{\text{daily}}^{\text{sat}}} \right), \label{eq:SKAdv}
\end{equation}
which determines a \textit{daily-rate advantage} in decibels (dB). An analogous quantity can be derived for the repeater chain. By studying the daily-rate advantage as a function of maximum inter-nodal separation and nodal density, we can then determine conditions for which SQC begins to outperform the global, ground-based networks. That is, 
\begin{align}
\begin{aligned}
\Delta K_{\text{daily}} &> 0 \implies \text{Fibre-Network Advantage},\\
\Delta K_{\text{daily}} &= 0 \implies \text{Equal Performance},\\
\Delta K_{\text{daily}} &< 0 \implies \text{Satellite Advantage}.
\end{aligned}
\end{align}
Hence, there exists a critical inter-nodal separation $d_{\mc{N}}^{*}$ and a critical nodal density $\rho_{\mc{N}}^{*}$ for which ${\Delta K_{\text{daily}} = 0}$. Beyond $d_{\mc{N}}^*$ or below $\rho_{\mc{N}}^{*}$, a single, sun-synchronous satellite quantum repeater is more effective than a global fibre-network.

Fig.~\ref{fig:SatelliteComp} illustrates results for the daily-rate advantage over SQC for a repeater-chain, and a number of quantum WRNs with various connectivity properties. In particular, we compare the resource demands of SQC with $k=6,8$ and $16$ WRNs based on the network-cells shown in Fig.~\ref{fig:SepBounds}(a). Each architecture will possess its own unique critical values, defining a limiting property of the network. This comparison involves the consideration of a number of SQC operational setups and conditions which are summarised in Table I in Fig.~\ref{fig:SatelliteComp}; regarding the time of operation (night or day), physical direction of communication (uplink or downlink), satellite altitude, initial beam spot-size and receiver aperture radius. It is important to note that we can \textit{always} exploit the superior communication direction (downlink) for the purposes of QKD between end-users, thanks to the independence of physical and logical flow (as discussed in Section~\ref{sec:Prelims}). Therefore the critical properties $\rho_{\mc{N}}^*$ and $d_{\mc{N}}^*$ are computed as the values for which $\Delta K_{\text{daily}} = 0$ with respect to SQC downlink rates.

In Fig.~\ref{fig:SatelliteComp}(a)-(c) we plot the maximum tolerable fibre-length permitted in a repeater chain and each WRN required to guarantee $\Delta K_{\text{daily}}$ advantage over the single satellite repeater. 
The critical fibre-length for a quantum repeater chain operating at the ultimate limit is $d_{\text{rep}}^* \approx 215~\text{km}$, which offers a lower-bound on repeater-assisted, ground-based strategies. This can be extended by quantum networks using multi-path routing strategies, as WRNs are able to tolerate longer lossy channels at the expense of greater resource demands. 
This is clear from the results in Fig.~\ref{fig:SatelliteComp}, where extending the critical separation by approximately $100~\text{km}$ requires a $k=16$ regular network, for which $d_{\mc{N}}^{*} \approx 320$ $\text{km}$.

We may also identify the minimum required WRN nodal density, $\rho_{\mc{N}}^*$, for obtaining ground-based advantage over a single satellite, plotted in Fig.~\ref{fig:SatelliteComp}(d)-(f). Analysis of this property provides an appreciation of the resources demanded by these fibre-networks. As similarly identified in Section~\ref{sec:NDense}, while the WRNs with lower regularity are constrained to shorter link-lengths, the required nodal density at poorer end-to-end rates (low levels of advantage) is smaller than that of better connected designs. We find that the critical nodal densities $\rho_{\mc{N}}^*$ are of order $10^{-5}$ nodes per $\text{km}^{2}$, e.g.~for $k=6$ we find that $\rho_{\mc{N}}^* \approx 1.49\times 10^{-5}$, for $k=16$ we gather $\rho_{\mc{N}}^* \approx 5.84\times 10^{-5}$ etc.

These are expensive values when put into the perspective of a global communication scenario. Let us take a na\"{i}ve scenario from a practical point of view, but one that is informative nonetheless. Consider quantum communication between distant end-users located in remote cities across continental Europe (e.g.~Paris to Moscow) whose land surface area spans approximately $A \approx 1\times 10^{7} \text{ km}^2$. In terms of truly global communications this is relatively local. We can choose to communicate between remote cities using a satellite in orbit acting as a dynamic quantum repeater. Alternatively, we can construct a quantum fibre-network across the continent. In this scenario, for an ideal $k=6$ WR quantum fibre-network operating at its ultimate flooding capacity to simply match the already achievable daily-rate of a single, sun-synchronous satellite, would require at least $n\geq A \rho_{\mc{N}}^* \approx 150$ repeater stations operating constantly for 24 hours. Clearly, a network of this form operating at realistic rates, under stricter physical conditions (considering thermal noise) would demand even greater resources. 

While the classical internet can exploit fibre-optic links which are thousands of kilometres long, a fibre-based quantum internet is severely limited by short link-lengths, resulting in remarkably costly resources for tasks that are already within reach of SQC. These results strongly suggest that a future quantum internet will significantly benefit from the use of SQC, and will be integral to the construction of global quantum communication networks.

\section{Conclusion \label{sec:Concl}}
In this work, we have investigated the optimal performance of global, quantum communication networks to characterise the ultimate limits of a fibre-based quantum internet. This analysis is based on an underlying network architecture that exploits weak-regularity to construct powerful, highly-connected networks. Crucially, these bounds allow us to benchmark the performance of a global quantum network versus that of a single sun-synchronous satellite acting as a dynamic repeater. The result of this comparison emphasises the power of SQC, and vast network resources that are required to outperform a single satellite in orbit at global distances. These findings strongly motivate the utilisation of ground-satellite connections within large-scale quantum networks. It is clear that free-space ground-satellite links will be integral to long-range quantum communications, as their co-operation with ground-based infrastructure as  dynamic repeaters will be invaluable. \par
This work introduces useful, analytical techniques for the study of ideal quantum networks which can be readily employed for future investigative paths. Indeed, the study of hybrid fibre/satellite networks is a topic of immediate interest; exploiting the power of SQC to enhance (rather than compete with) ground-based networks. Furthermore, the expansion of these methods to incorporate multiple satellites introduces the possibility of highly transmissive satellite-satellite channels at high altitudes.


%

\widetext
\bigskip
\newpage
\begin{minipage}{\textwidth}
\centering
\bigskip
\large 
\bf Supplementary Material: Analytical Methods for High-Rate Global Quantum Networks
\end{minipage}
\bigskip

\thispagestyle{empty}

\bigskip
In this supplementary material we provide detailed proofs for the results presented in the main paper, and discuss in-depth some of the key mathematical tools utilised throughout this work. In Section~\ref{sec:WRNs} the formal definitions of regular graphs and weakly-regular graphs are discussed, providing greater context and elaborating upon specific definitions. In Section~\ref{sec:OptPerf} we prove the main lemmas, theorems and corollaries utilised within the text allowing us to derive single-edge threshold capacities for weakly-regular networks required to guarantee optimal performance. We then apply these theorems in the context of bosonic lossy quantum channels. Finally, Section~\ref{sec:NDs} develops the relationship between maximum link-lengths and minimum nodal density through the concept of sparse constructions, deriving a connection between optimal performance and the minimum nodal densities of a range of weakly-regular structures. 

\setcounter{section}{0}
\setcounter{equation}{0}
\setcounter{figure}{0}

\section{Weakly-Regular Networks (WRNs)\label{sec:WRNs}}

In this section we explicitly introduce the concept of weak-regularity and weakly-regular networks (WRNs) using graph theoretic concepts. We provide finer context for the purposes of WRNs studied in the main text. 

\subsection{Graphs, Neighbour Sharing and Commonality}

Consider an undirected, finite graph $\mc{N} = (P,E)$ consisting of $n$ nodes in the node set $P$, and interconnected by edges in the edge set $E$. Discussed and motivated in the main-text, such a graph underlies the description of a network such that each edge (defined by an unordered pair $(\bs{x},\bs{y})\in E$) represents a communication channel $\mc{E}_{\bs{xy}}$ between network repeaters/end-users at each node. 
The ability to perform communication on a network is characterised by (\textit{i}) the communication channels which compose the network, and (\textit{ii}) the distribution of network nodes and edges resulting in a topology. 
Here, we explicitly define some key network properties that contribute to its overall topology and ultimately its end-to-end performance. 

An essential network property is nodal degree, i.e.~the number of nodes to which a given node is connected. Defining the neighbourhood of a node $\bs{x}\in P$ as
\begin{equation}
N_{\bs{x}} \defeq \{ \bs{y} \in P ~|~ \{\bs{x},\bs{y})\in E\},
\end{equation}
then the degree of the node $\bs{x}$ is equal to the cardinality of its neighbourhood
$
\text{deg}(\bs{x}) \defeq |N_{\bs{x}}|.
$
Hence, the node $\bs{x}$ has exactly $\text{deg}(\bs{x})$ neighbours. We can also define an edge-neighbourhood of $\bs{x}$ as all the edges which connect $\bs{x}$ to its neighbours,
\begin{equation}
E_{\bs{x}} \defeq \{ (\bs{x},\bs{y}) ~|~\bs{y}\in N_{\bs{x}}\}.
\end{equation}

Nodal degree, and its distribution across a network, is hugely influential on the overall performance of an architecture. However, the degree alone does not give an indication of \textit{how} a node $\bs{x}$ is connected to all of its neighbours. One may ask; are the neighbours also highly connected to one another, or are each of the neighbours distant and disconnected? Answering these questions can be very informative, and provide significant insight into the connectivity and robustness of a network. For this reason, we define useful parameters that contribute to these features. Namely, we utilise the concept of \textit{commonality}.

Commonality is a pairwise nodal property which describes neighbour sharing between nodes. Given a pair of nodes $\bs{x},\bs{y} \in P$ the commonality defines how many neighbours that $\bs{x}$ and $\bs{y}$ \textit{have in common}. Neighbour sharing behaviour may vary significantly depending on whether $\bs{x}$ and $\bs{y}$ are already connected (adjacent) and are perhaps close by; or are disconnected (non-adjacent) and perhaps distant. 
Therefore, we provide the following pair of definitions of commonality:

\begin{defin} \emph{(Adjacent Commonality):} The number of common neighbours shared by adjacent (connected) nodes. Precisely, given that $(\bs{x},\bs{y})\in E$, the adjacent commonality between this pair of nodes is $ \lambda(\bs{x},\bs{y}) \defeq |N_{\bs{x}} \cap N_{\bs{y}}|,$
so that $\lambda(\bs{x},\bs{y})$ counts the number of common neighbours shared between the nodes $\bs{x}$ and $\bs{y}$.
\end{defin}

\begin{defin} \emph{(Non-Adjacent Commonality):} The number of common neighbours shared by non-adjacent (non-directly-connected) nodes. Precisely, given that $(\bs{x},\bs{y}) \notin E$, the non-adjacent commonality is computed by $\mu(\bs{x},\bs{y}) \defeq |N_{\bs{x}} \cap N_{\bs{y}}|$,
so that $\mu(\bs{x},\bs{y})$ counts the number of common neighbours shared between the nodes $\bs{x}$ and $\bs{y}$. 
\end{defin}

\subsection{Regular Graphs}

Let us introduce the notion of regularity. Consider an undirected, finite graph $\mc{N} = (P,E)$ of $n$-nodes. A graph is defined as $k$-regular if all nodes in the graph possess exactly the same degree $k$, i.e.~the neighbourhood of any node consists of strictly $k$ nodes, 
\begin{equation}
\text{deg}(\bs{x}) = |N_{\bs{x}}| = k, ~\forall \bs{x} \in P.
\end{equation} 
Regularity significantly simplifies the connective properties of network by assuming a consistency of nodal degree. Clearly, in realistic communication networks there exist disparities of nodal degree throughout the network, as some nodes will be highly connected and others less so. Nonetheless, understanding the ability to communicate on a regular graph can help provide important information for more realistic structures.
The class of $k$-regular graphs is very broad, and more detailed classes can be defined. 

\subsubsection{Strongly Regular Graphs}
\text{Strongly Regular} (SR) graphs satisfy strict connective properties. A graph $\mc{N} = (P,E)$ is SR if it has $n$-nodes which are $k$-regular, its commonality properties are constant
\begin{align}
\lambda(\bs{x},\bs{y}) = \lambda,~\forall \bs{x},\bs{y} \in P \text{ s.t } (\bs{x},\bs{y})\in E,\\
\mu(\bs{x},\bs{y}) = \mu,~\forall \bs{x},\bs{y} \in P \text{ s.t } (\bs{x},\bs{y})\notin E,
\end{align}
and these parameters follow the relation
\begin{equation}
\mu(n-k-1) = k(k-\lambda - 1).
\end{equation}
SR graphs may be well connected, but their architectures are very strict; satisfying all of these constraints will typically result in a network with a small number of nodes. Indeed, the parameters $k,\mu,\lambda$ inhibit the ability to use a large number of nodes rendering them impractical for network design.

\subsubsection{Weakly Regular Graphs}

\begin{figure}[t!]
\hspace{-2cm} (a) \hspace{6cm} (b) \\
\hspace{1mm}\\
\includegraphics[width=0.475\linewidth]{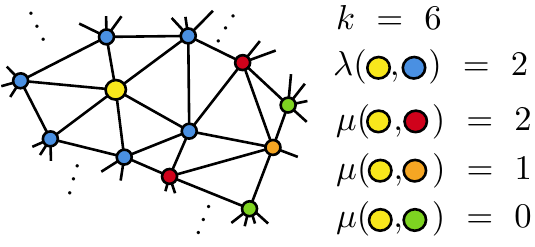}\hspace{5mm}
\includegraphics[width=0.475\linewidth]{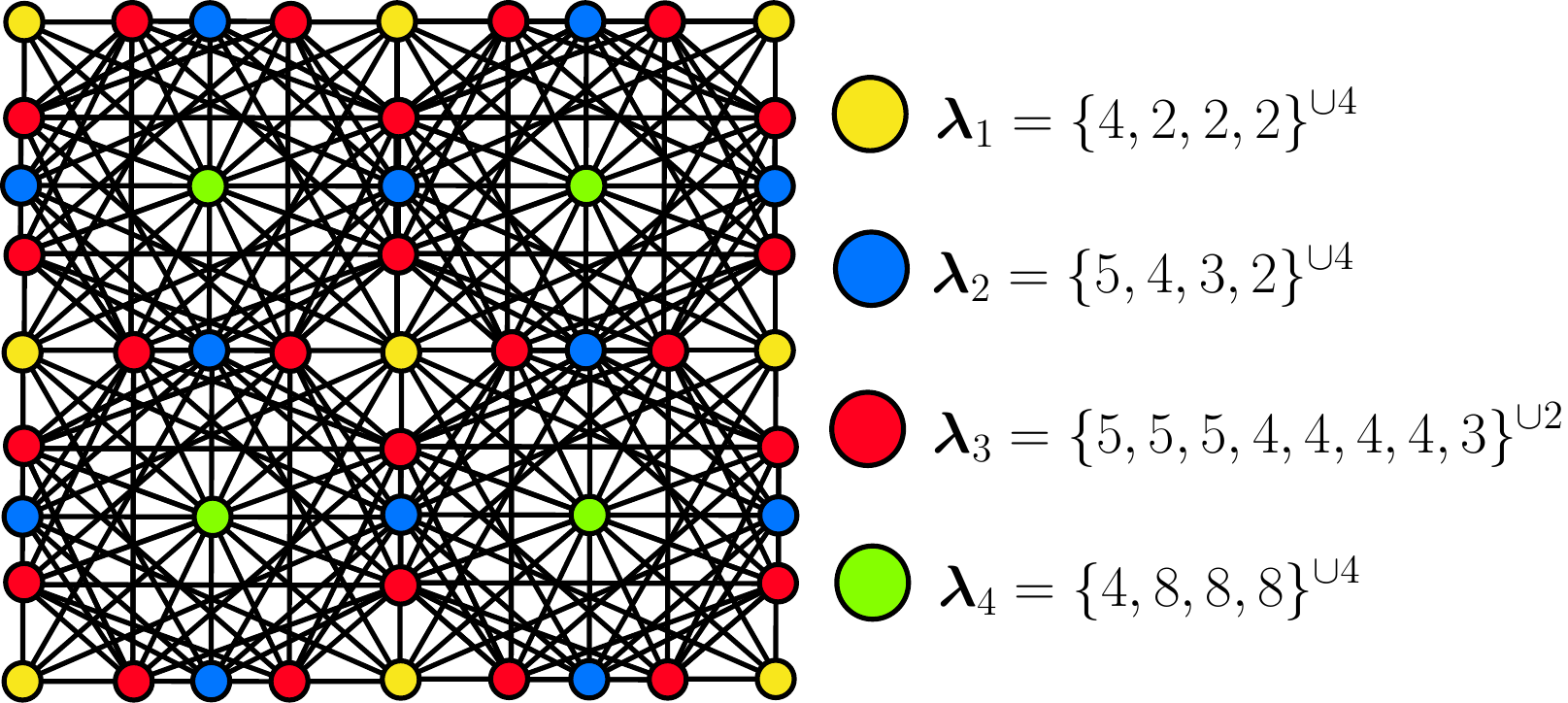}
\caption{(a) A sub-graph from a $(k,{\lambda},\bs{\mu}) = (6,2,\{0,1,2\})$-weakly regular network. Considering the yellow node as an end-user, the blue nodes thus represent the user neighbourhood, with a uniform adjacent commonality of $\lambda = 2$. The non-adjacent commonality decreases as nodes increase in distance from the end-user. (b) A $k=16$ weakly-regular network with inconsistent adjacent commonality properties. This network is scalable so that a single network cells can be concatenated to construct a larger $k=16$ internally-WR network. For any node in the network, its $\bs{\lambda}$  will be one of those from the set ${{\bs{\Lambda}} = \{ \bs{\lambda}_1, \bs{\lambda}_2, \bs{\lambda}_3, \bs{\lambda}_4\}}$. Each adjacent commonality multiset is colour coded to its corresponding node on the graph. Note that throughout this work we employ a superscript union notation to describe the repeated union of a single set, e.g.~$\{x\}^{\cup 3} = \{x\}\cup\{x\}\cup\{x\} = \{x,x,x\}$, etc.}
\label{fig:ExWReg}
\end{figure}

A more general class is that of Weakly-Regular (WR) graphs. Any regular graph that is not SR is technically WR, and can be characterised by a more general set of connectivity properties. We may invite greater generality by loosening the strict values of the adjacent/non-adjacent commonalities $\lambda$, $\mu$ for all nodes.
Instead, we may permit nodes within the network to possess different commonality values for different pairs of nodes. To this end, we define sets which contain all the potential values for the commonality properties; an adjacent commonality set and a non-adjacent commonality set respectively,
\begin{equation}
\bs{\lambda} \defeq \{\lambda_1,\ldots,\lambda_l \},~~\bs{\mu} \defeq \{\mu_1,\ldots,\mu_m\}.
\end{equation}
 These sets summarise \textit{all the non-degenerate values of $\lambda(\bs{x},\bs{y})$ or $\mu(\bs{x},\bs{y})$ that are possible on a network}. That is,
 \begin{equation}
 \lambda(\bs{x},\bs{y}) \in \bs{\lambda}, ~ \mu(\bs{x},\bs{y}) \in \bs{\mu},~\forall \bs{x},\bs{y}\in P.
\end{equation}
 There is no restriction on the number of potential values that can be contained in $\bs{\lambda}$ or $\bs{\mu}$. Indeed, every graph (regular or not) generate their own versions of these sets. 

As a simple example, we illustrate a weakly-regular sub-graph from a larger network in Fig.~\ref{fig:ExWReg}. Clearly, the degree of the network is $k=6$, and the commonality properties are also illustrated by considering adjacent and non-adjacent nodes with respect to a root node. Provided that the regularity in Fig.~\ref{fig:ExWReg} is consistent throughout the network, it can be shown that the adjacent commonality is always $\lambda = 2$ for any pair of nodes and the non-adjacent commonality can be $\mu \in \bs{\mu} = \{0,1,2\}$. 

\subsubsection{Useful Parameterisation of Weakly-Regular Graphs}

In complete generality, a graph can possess unique commonality values between any pair of nodes, leading to a vast collection of possible values in $\bs{\lambda}$ and $\bs{\mu}$, with little to no structure. However, for various architectures, such as WRNs, this will not be true and there may exist a level of consistency which simplifies their analytical treatments. 
Here, we introduce a more useful and intuitive representation of WRNs. Consider a node $\bs{x}$ on a $k$-WR graph. We can define $k$ element \textit{multiset} (a modified set which may contain multiple copies of the same element) which contains all the information about neighbour sharing between the node $\bs{x}$ and each of its $k$ neighbours,
\begin{equation}
\bs{\lambda}_{\bs{x}} \defeq \{ \lambda(\bs{x},\bs{y}_1), \ldots,\lambda(\bs{x},\bs{y}_k)\} = \{ \lambda(\bs{x},\bs{y}) ~|~ \bs{y} \in N_{\bs{x}}\}.
\end{equation}
Hence, $\bs{\lambda}_{\bs{x}}$ describes a ``local" adjacent commonality description, bespoke to the node $\bs{x}$. 

All nodes on any graph (regular or not) possess an adjacent commonality multiset $\bs{\lambda}_{\bs{x}}$ which describes neighbour sharing qualities with respect to their neighbourhoods, so there exists a unique $\bs{\lambda}_{\bs{x}}$ for all $\bs{x}\in P$. Therefore we can summarise the neighbour sharing properties of an entire network $\mc{N}$ by collecting all of the possible adjacent commonality multisets contained within it into a superset
\begin{equation}
{\bs{\Lambda}} \defeq \{ \bs{\lambda}_{\bs{x}_1}, \bs{\lambda}_{\bs{x}_2}, \ldots, \bs{\lambda}_{\bs{x}_n} \} = \{ \bs{\lambda}_{\bs{x}} ~|~{\bs{x}\in P}\}.
\end{equation}
Hence, for any node $\bs{x}$ in a network, the adjacent commonality multiset of this node can be found in ${\bs{\Lambda}}$. Note that we now define ${\bs{\Lambda}}$ as a strict set, \textit{not} a multiset. In many cases of interest, such as WRNs with high levels of symmetry, the adjacent commonalities $\bs{\lambda}_{\bs{x}}$ may be highly degenerate across the network, i.e.~many nodes possess similar neighbour sharing qualities.  Consequently, since $\bs{\Lambda}$ is a strict set it contains \textit{only} the non-degenerate $\bs{\lambda}_{\bs{x}}$ multisets. Here we state a formal definition:

\begin{defin} \emph{(Adjacent Commonality Superset):} Any graph $\mc{N} = (P,E)$ possesses an adjacent commonality superset $\bs{\Lambda} \defeq \{ \bs{\lambda}_{\bs{x}} ~|~\bs{x}\in P \}$ containing all the possible, non-degenerate adjacent commonality multisets $\bs{\lambda}_{\bs{x}}$ which describe the neighbour sharing properties of connected nodes. 
\end{defin}

For many of the architectures that we consider in this work, ${\bs{\Lambda}}$ only contains one acceptable adjacent commonality multiset, ${\bs{\Lambda}} = \{\bs{\lambda}\}$. This is evident in Fig.~\ref{fig:ExWReg}(a) where every pair of adjacent nodes always share exactly two neighbours. In general, this may occur when there are high levels of symmetry/small $k$ regularity in the network structure. 
However, this is not compulsory. Flexibility in ${\bs{\Lambda}}$ can allow us to describe more complex designs with higher nodal degrees. For instance, Fig.~\ref{fig:ExWReg}(b) depicts a $k=16$ regular network which may be a portion of a larger network. All nodes satisfy $k=16$, but there are four unique adjacent commonality multisets ${\bs{\Lambda}} = \{ \bs{\lambda}_1, \bs{\lambda}_2, \bs{\lambda}_3, \bs{\lambda}_4\}$ for any network node. As we reveal in the main-text, it is vital to understand the graphical properties of WRNs in order to properly characterise end-to-end performance for embedded end-users.

In contrast, it's not overly useful to define an analogous language for the non-adjacent commonality properties of a WR graph, $\bs{\mu}$. A node-specific non-adjacent commonality object ${\bs{\mu}}_{\bs{x}}$ would collect the number of shared neighbours between $\bs{x}$ and \textit{all nodes on the network outside of its neighbourhood}. For large-scale networks this is a potentially huge number of nodes and for the most part will not give valuable information. 
Hence, in this work we will define WR graphs according to the properties $(n,k,\bs{\Lambda}, \bs{\mu})$ along with the definitions and discussions in this section. This provides us with the most effective language to investigate this interesting graph class.\\

In the context of large-scale quantum networks, some aspects of WR architectures still need to be addressed. Namely, there are properties that are strictly defined by the parameters $n,k,\bs{\Lambda}$ and $\bs{\mu}$ which require some further discussion in order to qualify the WRN structures investigated in our work. This leads to a further sub-categorisation of weak-regularity into two key formats; \textit{genuine} and \textit{internal} weak-regularity.

\subsection{Genuine Weak-Regularity}
We define what it means for a network to be \textit{genuinely}-WR.

\begin{defin} \text{\emph{(Genuine Weak-Regularity):}} Consider a network $\mc{N} = (P,E)$ which is $(n,k,\bs{\Lambda},\bs{\mu})$-weakly-regular. This network is Genuinely-WR if there are absolutely no violations of these connectivity properties for any node $\bs{x}\in P$ within the network. 
\end{defin}

While this may seem like a trivial definition, it will become apparent in subsequent sections why it is necessary. Genuine weak-regularity can be readily satisfied but is sometimes quite restrictive. Indeed, a WRN defined within a two-dimensional spatial area may lead to some undesirable characteristics, such as extremely long edges required to satisfy regularity for all nodes; ultimately undermining the integrity of the network. 

Nonetheless, genuine weak-regularity conditions can be easily satisfied by considering closed networks embedded on a sphere (or other appropriate closed, three-dimensional objects). \text{Global} quantum networks, in which we consider a network that spans the Earth may be appropriately and ideally modelled via genuinely-WR quantum networks. This is illustrated in Fig.~\ref{fig:GenInternal_WR}(a) where a network can be defined on the surface of a three-dimensional sphere. 

\subsection{Internal Weak Regularity}

As mentioned, defining regularity conditions on a two-dimensional plane can lead to unwanted features, such as extremely long edges used to ``close" the network and satisfy all regularity conditions. Genuine weak-regularity avoids these features by considering closed networks embedded on some 3 dimensional surface. This may make sense for networks which span a planet, but for smaller areas this is not practical. 

Hence, we may provide an alternative model of network connectivity. It is possible to define a network that satisfies the WR connectivity properties \textit{within a network boundary}. That is, one can construct a WRN such that there exists a set of network nodes and edges that form a boundary 
\begin{align}
&P_{\text{bound}} = \{\bs{p}_1,\ldots,\bs{p}_m,\ldots\},\\
&E_{\text{bound}} = \{ (\bs{x},\bs{y}) \in E~|~ \bs{x},\bs{y} \in P_{\text{bound}}\},
\end{align}
within which all other nodes satisfy some form of weakly-regularity. That is, there exists a sub-network within this boundary $\mc{N}_{\text{int}} = (P_{\text{int}}, E_{\text{int}})$ according to the node and edge sets $P_{\text{int}} \defeq P \setminus P_{\text{bound}}, ~ E_{\text{int}} \defeq E \setminus E_{\text{bound}}$. The total network model $\mc{N}$ is clearly not genuinely WR since the boundary nodes $\bs{x} \in P_{\text{bound}}$ will violate the weak-regularity conditions. Nonetheless, the internal network $\mc{N}_{\text{int}}$ will satisfy these conditions, providing a useful architecture which can be readily defined over two-dimensional regions. 

\begin{defin} \text{\emph{(Internal Weak-Regularity):}} Consider a network $\mc{N} = (P,E)$. This network is defined as Internally-WR if there exists a network boundary $P_{\text{\emph{bound}}} \subset P$, $E_{\text{\emph{bound}}} \subset E$ such that the sub-network $\mc{N}_{\text{\emph{int}}} \defeq (P\setminus P_{\text{\emph{bound}}},E\setminus E_{\text{\emph{bound}}})$ satisfies a form of $(n,k,\bs{\Lambda},\bs{\mu})$ weak-regularity. 
\end{defin}

Fig.~\ref{fig:GenInternal_WR} provides a useful illustration of the difference between genuinely-WR and internally-WR networks. One of the most useful features of this network class is that they are easy to construct, and easy to scale. It is straightforward to construct a regular \textit{network cell}, which when concatenated with many other cells results in an internally-WR network. Such network cells can be seen in Fig.~2(a) of the main-text. This concatenation process makes it easy to consider large-scale WRNs with open boundaries. 

\subsection{Simplification of Notation}

\begin{figure}
\hspace{-1.cm} (a) Genuinely WR, \hspace{2.1cm} (b) Internally WR\\
\hspace{1cm}\\
\includegraphics[width=0.6\linewidth]{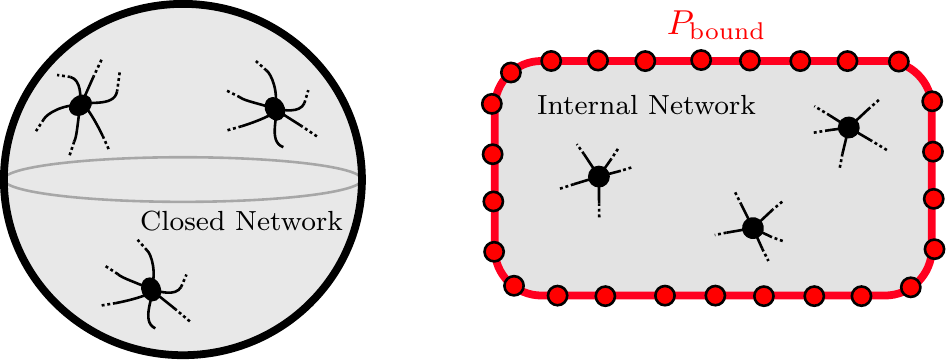}
\caption{Distinction between (a) Genuine weak-regularity and (b) Internal weak-regularity. Genuinely WR networks can be embedded on a closed, three-dimensional surface such as a sphere in order to maintain regularity and avoid boundary effects. Internally WR networks satisfy weak-regularity within some nodal boundary $P_{\text{bound}}$, allowing us to investigate open networks which are defined within some two-dimensional area. }
\label{fig:GenInternal_WR}
\end{figure}

For the purposes of our work, it is possible to simplify the notation we use to describe relevant WR structures. Since we are investigating large-scale network structures, it is not desirable to precisely define the number of nodes $n$, but allow $n$ to be encoded into other properties, e.g.~nodal density, maximum link length, etc. This can be achieved, given the crucial assumption that there are enough network nodes in so that our analysis is unaffected by boundary effects or sparsity. The need for this assumption is different depending on whether we consider internal weak-regularity or genuine weak-regularity. 

\begin{itemize}

\item \textit{Genuine weak-regularity}: By definition, we do not have any issue with boundary effects in this setting since the network is effectively closed, and all nodes are unquestionably $k$-regular. Consequently, we require a sufficient number of nodes in order to construct the closed network and ensure that there exist two end-user nodes which are not directly connected. This is not a large number of nodes and can be satisfied easily, given a particular architecture.

\item \textit{Internal weak-regularity}: In this setting, we assume that end-users nodes that we select always fall within the outer boundary of nodes, and we only consider nodes within this boundary. At the very least, we require that there are enough nodes $n$ within this boundary such that there exist two end-user nodes which are not directly connected (as this would defeat the purpose of investigating end-to-end network protocols). In general, this is a geometric packing problem specific to the weakly-regular architecture we are studying.
\end{itemize}

Henceforth, these assumptions are implicitly made within each of our WRN models. This allows us to omit the precise number of nodes $n$ from key theory throughout our work and derive general results which apply to a broad range of network structures. The number of nodes and nodal density are revisited later in our studies in order to provide adequate insight to the resource requirements of WRNs. Finally, we provide one further simplification by removing detailed reference to the possible non-adjacent-commonalities within the network, described by $\bs{\mu}$. The set $\bs{\mu}$ is important for the characterisation of short-range connective structures, detailing how many shared neighbours two non-connected nodes may possess. 
For large networks, the vast majority of non-adjacent nodes will simply share no neighbours, $\mu = 0$. This is especially true for networks which obey distance constrained connectivity rules. We find that our subsequent analyses do not require its consistent usage, therefore it can be omitted for the sake of clarity (unless otherwise specified).

Following the implicit assumptions for the necessary number of nodes required to describe a WRN and the ability to ignore the non-adjacent commonalities, we can compactly characterise a class of WR architectures via the parameters: $(k, \bs{\Lambda})$. 

\section{Optimal Performance of Weakly-Regular Networks \label{sec:OptPerf}}

The key mathematical tool we develop in this paper is the ability to accurately and analytically derive conditions for the optimal performance of quantum WRNs. This requires graph theoretic arguments and a characterisation of minimum network cuts. In this section we elucidate these arguments, allowing us to formally state and prove theory utilised in the main-text.

 \subsection{Network Cuts}

An important graph-theoretic concept for investigating network performance is that of \textit{cuts} and \textit{cut-sets}. Consider a network $\mc{N}=(P,E)$ with two remote end-users $\bs{a}, \bs{b} \in P$. An end-user pair can be represented as a set of two unique user nodes, $\bs{i} = \{ \bs{a},\bs{b} \}$. This allows us to simplify notation in many settings. 
We define a cut $C$ as a bipartition of all network nodes $P$ into two disjoint subsets of nodes $(P_{\bs{a}}, P_{\bs{b}})$ such that the end-users become completely disconnected, $\bs{a} \in P_{\bs{b}}$ and $\bs{b} \in P_{\bs{b}}$, where $P_{\bs{a}}\cap P_{\bs{b}} =\varnothing$. A cut $C$ generates an associated cut-set; a collection of network edges $\tilde{C}$ which when removed cause the partitioning. Precisely, a cut-set is defined by
\begin{equation}
\tilde{C} = \{ (\bs{x},\bs{y}) \in E~|~\bs{a} \in P_{\bs{a}}, \bs{b}\in P_{\bs{b}}\}.
\end{equation}
Under the action of a cut, a network is successfully partitioned
\begin{equation}
\mc{N} = (P,E) \xrightarrow{\text{Cut: $C$}} (P,E \setminus \tilde{C}) = ( P_{\bs{a}}\cup P_{\bs{b}}, E \setminus \tilde{C} ),
\end{equation}
so that there no longer exists a path between $\bs{a}$ and $\bs{b}$. Network cuts play a key role in the derivation of end-to-end network rates. Many network optimisation tasks can be reduced to an optimisation over all cuts with respect to single-edge/multi-edge properties.

As discussed in the main text, any valid network cut can be associated with a multi-edge capacity $\mc{C}^m(C)$ calculated by the sum of all the single-edge capacities in the cut-set,
\begin{equation}
\mc{C}^m(C) = \sum_{(\bs{x},\bs{y})\in \tilde{C}} \mc{C}_{\bs{xy}},
\end{equation} 
where $\mc{C}_{\bs{xy}} = \mc{C}(\mc{E}_{\bs{xy}})$ is the single-edge capacity associated with the channel between nodes $\bs{x}$, $\bs{y}$. 
A flooding capacity is given by the network-cut between the user-pair which minimises this multi-edge capacity,
\begin{equation}
\mc{C}^m(\bs{i},\mc{N}) = \min_C~ \mc{C}^m(C) = \min_{C} \sum_{(\bs{x},\bs{y})\in\tilde{C}} \mc{C}_{\bs{xy}}. \label{eq:flood}
\end{equation}

\subsection{Motivation}

As discussed in the main-text, solving Eq.~(\ref{eq:flood}) for a general network and capacity distribution requires a numerical treatment via the max-flow min-cut theorem. However, for any network we can always identify at least one valid cut via \textit{user-node isolation}, i.e.~cutting all the edges in the neighbourhood of one of the end-user nodes, $E_{\bs{a}}$ or $E_{\bs{b}}$. This cut totally disconnects an end-user node from the network, resulting in a successful partition. We call the multi-edge capacity associated with this kind of cut as the \textit{min-neighbourhood capacity},
\begin{equation}
\mc{C}^m(\bs{i}, \mc{N}) \leq \mc{C}_{\mc{N}_{\bs{i}}}^m \defeq \min_{\bs{j} \in \{\bs{a},\bs{b}\}} \sum_{(\bs{x},\bs{y})\in E_{\bs{j}}} \mc{C}_{\bs{xy}}.
\end{equation}

The min-neighbourhood capacity is always at least an upper-bound on the end-to-end flooding capacity. It is a strong indicator of a well connected and thus high-performance network.
Networks which are highly connected contain many end-to-end routes between any pair of network nodes. The greater the number of end-to-end routes between an end-user pair, the more difficult it is to partition them via a cut-set, i.e.~it requires more and more edges to disconnect them. As discussed in the main-text, this initiates a relationship between \text{cut-set cardinality}, $|\tilde{C}|$, and \text{distance from an end-user}. Performing cuts with edges further away from a user node requires the collection of many more edges to consolidate the partition. The further from the user nodes we begin the cut, the greater the number of potential end-to-end paths we must restrict (since we have permitted a larger flow from the user node) and thus the more edges we must collect. We call this phenomenon \textit{network cut growth}.
Once again, for general architectures and topologies it is extremely difficult to investigate the concept of network cut growth and would require numerical treatment. However, WRNs are analytically friendly and an ideal candidate for studying this concept. 

Our approach is based on the distinction between two kinds of network cuts; user-node isolation, and \text{network-bulk cuts}. Let us formally define the notion of a network-bulk:

\begin{defin}
Consider a network $\mc{N} = (P,E)$ and an end-user pair $\bs{i} = \{ \bs{a},\bs{b}\}$ who wish to communicate. We define a network-bulk with respect to this end-user pair as the sub-network $\mc{N}^{\prime} = (P^{\prime}, E^{\prime})$ which contains all the edges and nodes which are not directly connected to the end-user nodes. That is, the node and edge sets satisfy,
\begin{align}
&P^{\prime} \defeq \left\{ \bs{x} ~|~\bs{x} \in P \setminus \{\bs{a},\bs{b}\} \right\}, ~~~E^{\prime} \defeq \left\{ (\bs{x},\bs{y}) ~|~\bs{x} \in P \setminus (E_{\bs{a}} \cup E_{\bs{b}}) \right\}.
\end{align}
\end{defin}

In a large-scale network the network-bulk $\mc{N}^{\prime}$ constitutes the majority of the architecture. A \textit{network-bulk cut} can then be considered as a network-cut $C^{\prime}$ which is performed by exclusively collecting edges from the network-bulk rather than the user-neighbourhoods. By our previous arguments, when a network is well connected, cuts performed further away from either end-user refer to collections of edges \text{in the network bulk}. 

This leads to the primary motivation of our work: Via the intuition of network cut growth, we wish to derive a relationship between WR networks, user-node isolation and network-bulk cuts. We wish to show that in highly-connected architectures (such as WRNs), cut growth causes cuts in the network-bulk to be unlikely candidates for the minimum cut. As a result, this allows us to identify conditions for which the upper-bound in Eq.~(\ref{eq:flood}) is saturated and elucidate network properties for which optimal performance is guaranteed.

\subsection{Network-Bulk Cuts}

In this section, we derive some useful lemmas which help us to understand network cut growth and network-bulk cuts.

\begin{lemma}
Select two nodes on a genuinely-WR quantum network $\mc{N} = (P,E)$ that represent end-users, $\bs{a}, \bs{b} \in P$, and demand they that they do not share an edge or neighbour. The cut-set $\tilde{C}$ which contains the fewest number of edges collects $k$ edges. 
\label{lemma:IsoCut}
\end{lemma}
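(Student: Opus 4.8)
The plan is to prove the equality by bounding the minimum cut-set cardinality from both sides by $k$. The upper bound is immediate: I would exhibit the user-node-isolation cut of $\bs{a}$, i.e.\ the bipartition $(P_{\bs{a}},P_{\bs{b}})=(\{\bs{a}\},\,P\setminus\{\bs{a}\})$. Since $\bs{a}$ and $\bs{b}$ are non-adjacent we have $\bs{b}\neq\bs{a}$ and $\bs{b}\in P\setminus\{\bs{a}\}$, so this is a legitimate cut separating the end-users, and its cut-set is precisely the edge-neighbourhood $E_{\bs{a}}$. Because a genuinely-WR network is strictly $k$-regular at \emph{every} node, in particular at $\bs{a}$, we have $|E_{\bs{a}}|=\deg(\bs{a})=k$; hence there exists a cut-set of size $k$ and the minimum cut-set cardinality is at most $k$.

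The substance of the lemma is the matching lower bound, which I would establish by edge-counting. Let $\tilde{C}$ be any cut-set with associated bipartition $(A,B)$, $\bs{a}\in A$, $\bs{b}\in B$, and let $e(A)$ denote the number of edges internal to $A$. Summing degrees over $A$ and using $k$-regularity gives the handshake identity $k|A|=2e(A)+|\tilde{C}|$, i.e.\ $|\tilde{C}|=k|A|-2e(A)$. When $A=\{\bs{a}\}$ this already reads $|\tilde{C}|=k$; for $|A|>1$ it suffices to show $2e(A)\leq k(|A|-1)$, i.e.\ that $\bs{a}$'s side cannot induce a subgraph so dense (clique-like, nearly internally $k$-regular) that its boundary shrinks below $k$. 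This is where the weak-regularity hypotheses genuinely enter: the adjacent-commonality spectrum $\bs{\Lambda}$ caps how many common neighbours — hence how many triangles — each edge inside $A$ can participate in, and, together with the standing assumption that the network is large enough to be free of sparsity/boundary pathologies (the cell-based or spherical realisation underlying genuine weak-regularity), this bounds $e(A)\leq k(|A|-1)/2$ for every proper subset $A\supsetneq\{\bs{a}\}$, so $|\tilde{C}|\geq k$ in all cases.

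I expect this last step to be the main obstacle. The easy content is the existence of a size-$k$ isolation cut; the work is in making rigorous that a cut-set with fewer than $k$ edges would force an anomalously dense ``blob'' on one side of the partition and then ruling such blobs out purely from the structural parameters $(k,\bs{\Lambda})$ and the geometric constraints of the architecture — without that structural input the bare statement can fail for generic $k$-regular graphs. The hypothesis that $\bs{a}$ and $\bs{b}$ share neither an edge nor a neighbour is what lets us treat $\{\bs{a}\}$ and $\{\bs{b}\}$ as well-separated singleton sides (so that isolating either one visibly disconnects the pair) and is also what the subsequent network-bulk / cut-growth analysis relies on, even though the size-$k$ isolation cut exists regardless. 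Combining the two bounds yields that the smallest cut-set collects exactly $k$ edges.
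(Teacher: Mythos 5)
Your setup is sound as far as it goes: the isolation cut $E_{\bs{a}}$ gives the upper bound $k$, and the handshake identity $|\tilde{C}| = k|A| - 2e(A)$ correctly reduces the lower bound to showing $2e(A) \leq k(|A|-1)$ for every $A$ with $\bs{a}\in A$, $\bs{b}\notin A$. But that step --- which you yourself flag as ``the main obstacle'' --- is left as an assertion, and the mechanism you propose for closing it does not work. You argue that the adjacent-commonality spectrum $\bs{\Lambda}$ caps triangles and thereby forbids the dense ``blobs'' that would make a cut-set smaller than $k$. Triangle counting, however, only constrains clustered subsets; a set $A$ violating $2e(A)\leq k(|A|-1)$ merely has edge-boundary smaller than $k$, and such sets exist in $k$-regular graphs with no triangles at all (triangle-free cubic graphs with bridges exist). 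In particular, the paper's honeycomb architecture has $\bs{\lambda}^* = \{0\}^{\cup 3}$, so the commonality data yields nothing whatsoever via triangle counting, yet the lemma is still supposed to hold there. What actually excludes small edge-boundaries is the global $k$-edge-connectivity of the specific genuinely-WR constructions, not the local parameters $(k,\bs{\Lambda})$; your argument never isolates, states, or establishes that property, so the lower bound is genuinely unproven as written.

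For comparison, the paper proves the lemma via Menger's theorem: the minimum cut-set cardinality equals the maximum number of edge-disjoint $\bs{a}$--$\bs{b}$ paths; each such path must consume a distinct edge of $E_{\bs{a}}$ (and of $E_{\bs{b}}$), so there are at most $k$ of them, and the architecture supplies exactly $k$. This is a different decomposition from your edge-counting, but it faces the same crux --- one must still know that $k$ edge-disjoint paths actually exist, i.e.\ that the network is $k$-edge-connected between the chosen users --- and the paper leans on the connectivity of its concrete cell constructions for that. Your outline is therefore not wrong in spirit, but the tool you reach for ($\bs{\Lambda}$ as a triangle cap) cannot supply the missing step, and the step itself is the entire content of the lemma.
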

\begin{proof}
Menger's theorem states that for a finite, undirected graph the size of the minimum cut-set is equal to the maximum number of disjoint paths that can be found between any pair of vertices \cite{Menger1927,Aharoni2009}. Here, we are considering a $(k,\bs{\Lambda})$ weakly-regular graph with enough nodes to locate a pair of end-users which do not share an edge or neighbour. Every disjoint path will have to use one of the edges from the neighbourhood of an end-user, $N_{\bs{a}}$ and $N_{\bs{b}}$. After $k$ disjoint paths, all the edges in the neighbourhoods of the end-user nodes will have already been used by one of these paths. Consequently, no more disjoint paths can be found, as the end-users can find no route to the network-bulk. Hence, the smallest cut-set cardinality will always equal $k$.  \end{proof}

\begin{lemma}
Consider a $(k,\bs{\Lambda})$-genuinely-WR quantum network $\mc{N} = (P,E)$ such that $\bs{\Lambda} = \{\bs{\lambda}\}$. Select two nodes that represent end-users, $\bs{a}, \bs{b} \in P$, and demand that they do not share an edge or neighbour. For any cut-set $\tilde{C}$ that is restricted to edges in the network-bulk $e\in E^{\prime}$, 
\begin{equation}
\text{\emph{if} } \sum_{j=1}^k \lambda_j \leq k(k-2) \implies |\tilde{C}| \geq k.
\end{equation}
If $\lambda_j = \lambda, \forall j$ then the condition holds if $\lambda \leq k-2$.
\label{lemma:BodyCut}
\end{lemma}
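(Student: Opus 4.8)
The plan is to exploit the fact that a network-bulk cut-set $\tilde{C}\subseteq E^{\prime}$ is forbidden from touching any edge incident to $\bs{a}$ or $\bs{b}$, and then count the edges that such a cut is \emph{forced} to contain around a user neighbourhood. First I would fix the induced bipartition $(P_{\bs{a}},P_{\bs{b}})$ and note that, since $E_{\bs{a}}$ and $E_{\bs{b}}$ are disjoint from $E^{\prime}$, every edge of $E_{\bs{a}}$ survives the cut; hence all of $N_{\bs{a}}$ lies on $\bs{a}$'s side, i.e.\ $S_{\bs{a}} \defeq \{\bs{a}\}\cup N_{\bs{a}} \subseteq P_{\bs{a}}$, and similarly $S_{\bs{b}} \defeq \{\bs{b}\}\cup N_{\bs{b}} \subseteq P_{\bs{b}}$. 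By the hypothesis that $\bs{a}$ and $\bs{b}$ share no edge or neighbour these two sets are disjoint, so $\tilde{C}$ is in particular an edge-cut separating the vertex set $S_{\bs{a}}$ from the vertex set $S_{\bs{b}}$.

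The second step is the combinatorial heart of the argument: count the edge boundary of $S_{\bs{a}}$, i.e.\ the edges with exactly one endpoint in $S_{\bs{a}}$. The node $\bs{a}$ sends all $k$ of its edges into $S_{\bs{a}}$, so it contributes nothing. A neighbour $\bs{y}_j\in N_{\bs{a}}$ has degree $k$, of which one edge goes to $\bs{a}$ and exactly $\lambda_j = |N_{\bs{a}}\cap N_{\bs{y}_j}|$ go to other members of $N_{\bs{a}}$ (here using that $\bs{a}\notin N_{\bs{a}}$ and $\bs{y}_j\notin N_{\bs{y}_j}$); the remaining $k-1-\lambda_j$ edges leave $S_{\bs{a}}$. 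Summing over the $k$ neighbours, the boundary of $S_{\bs{a}}$ contains exactly
\begin{equation}
\sum_{j=1}^{k}\left(k-1-\lambda_j\right) = k(k-1) - \sum_{j=1}^{k}\lambda_j
\end{equation}
edges, which is precisely the quantity $\delta$ of Eq.~(\ref{eq:CutGrowth}) in the case $\bs{\Lambda}=\{\bs{\lambda}\}$. Under the hypothesis $\sum_{j}\lambda_j \leq k(k-2)$ this is at least $k(k-1)-k(k-2)=k$, and by symmetry the same holds for $S_{\bs{b}}$. When $\lambda_j=\lambda$ for all $j$ the hypothesis reads $k\lambda\leq k(k-2)$, i.e.\ $\lambda\leq k-2$, recovering the stated specialisation.

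Finally I would relate $|\tilde{C}|$ to this boundary count. If $P_{\bs{a}}=S_{\bs{a}}$ then $\tilde{C}$ is exactly the boundary of $S_{\bs{a}}$ and the claim follows. The remaining case is where the $\bs{a}$-side has been enlarged beyond the user neighbourhood, and the step I expect to be the main obstacle is showing that such an enlargement cannot push the cut below $k$ --- in a general $k$-regular graph appending a vertex can shrink an edge boundary, so this is exactly where weak-regularity (through $\delta\geq k$) must be used. I would handle it either (i) cheaply, by observing that $\tilde{C}$ is still a valid $\bs{a}$--$\bs{b}$ cut-set and applying Lemma~\ref{lemma:IsoCut} directly to get $|\tilde{C}|\geq k$, the hypothesis serving to make this bound consistent with the neighbourhood-isolation cut; or (ii) more sharply, by contracting $S_{\bs{a}}$ and $S_{\bs{b}}$ to a super-source and super-sink and invoking max-flow/min-cut to show that no bulk cut beats neighbourhood isolation, hence $|\tilde{C}|\geq\delta$. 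In either route the conclusion $|\tilde{C}|\geq k$ reduces to $\delta\geq k$, which is precisely the arithmetic content of the condition $\sum_{j}\lambda_j\leq k(k-2)$.
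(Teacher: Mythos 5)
Your argument is essentially the paper's. The combinatorial core --- each neighbour $\bs{y}_j \in N_{\bs{a}}$ spends one edge on $\bs{a}$ and $\lambda_j$ edges inside $N_{\bs{a}}$, leaving $k-1-\lambda_j$ edges that exit the closed neighbourhood, for a total boundary of $\sum_{j=1}^{k}(k-\lambda_j-1)$ --- is exactly the count in Eq.~(\ref{eq:BulkCut}), and your reduction to $\sum_j \lambda_j \leq k(k-2)$ and the specialisation $\lambda \leq k-2$ match verbatim; your phrasing via the edge boundary of $S_{\bs{a}} = \{\bs{a}\}\cup N_{\bs{a}}$ is in fact slightly cleaner than the paper's explicit construction of $\tilde{C}_{\bs{a}}$, which as written excludes edges running from $N_{\bs{a}}$ into $N_{\bs{b}}$ yet still quotes the same cardinality. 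The step you flag as the main obstacle --- a bulk cut whose $\bs{a}$-side strictly contains $S_{\bs{a}}$ --- is precisely the step the paper also does not prove: it asserts that any cut other than user-node isolation or this second-shell cut ``permits flow into wider parts of the network'' and therefore collects more edges, which is exactly the cut-growth monotonicity your route (ii) would need (and note that contracting $S_{\bs{a}}$, $S_{\bs{b}}$ and invoking max-flow/min-cut only bounds the minimal bulk cut from \emph{above} by $\delta$ unless one also exhibits $\delta$ edge-disjoint bulk paths). Your route (i) is a fair observation: since Lemma~\ref{lemma:IsoCut} already gives $|\tilde{C}| \geq k$ for \emph{every} valid cut-set, the conclusion of Lemma~\ref{lemma:BodyCut} as literally stated follows with no use of the commonality hypothesis at all; the hypothesis matters only because what is actually used downstream (in Theorem~\ref{theorem:Thresh1}, via Eq.~(\ref{eq:delta_defin})) is the stronger claim that the minimal bulk cut has cardinality $\delta = \sum_j(k-\lambda_j-1) \geq k$, for which your boundary count supplies the right quantity but the deeper-cut case remains an assertion in both your sketch and the paper's proof.
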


\begin{proof}
For a genuine $(k,\{\bs{\lambda}\})$-regular network there will always exist a cut-set with cardinality $|\tilde{C}_{\text{iso}}| = k$, achieved by isolating the neighbourhoods of either of the end-user nodes. By Lemma 1, we also know that this is the minimum cut-set cardinality.
Meanwhile, a network cut which is limited to collecting edges on the network-bulk is unable to directly disconnect the neighbourhoods of $\bs{a}$ or $\bs{b}$ ($N_{\bs{a}}$ and $N_{\bs{b}}$ respectively). Hence, any cut which is performed on the network-bulk has to restrict flow from not just the end-users, but each of its neighbours. That is, any alternative cut-set will have to cut the unique edges in the neighbourhoods of the $\bs{a}$/$\bs{b}$'s neighbouring nodes. 

Let us consider the cut which restricts flow from all of the neighbouring nodes of either end-user. This cut-set will be either of the following:
\begin{align}
\tilde{C}_{\bs{a}} &= \bigcup_{\bs{x} \in N_{\bs{a}}}  \{ (\bs{x},\bs{y})~|~ \bs{y} \in N_{\bs{x}}\setminus \left( N_{\bs{a}} \cup N_{\bs{b}} \cup \{\bs{a},\bs{b}\}\right) \},\\
\tilde{C}_{\bs{b}} &= \bigcup_{\bs{x} \in N_{\bs{b}}}  \{ (\bs{x},\bs{y})~|~ \bs{y} \in N_{\bs{x}}\setminus \left( N_{\bs{a}} \cup N_{\bs{b}} \cup \{\bs{a},\bs{b}\}\right) \}.
\end{align}
What are the cardinalities of these cut-sets? Thanks to network regularity this is easy to derive. Our goal is to restrict flow from each of the neighbours of $\bs{a}$ or $\bs{b}$. By weak regularity, these neighbours will possess $k$ edges; they will use one edge to connect directly to $\bs{a}$ or $\bs{b}$, and $\lambda_j$ nodes will be connected to \textit{other} neighbours of $\bs{a}$ or $\bs{b}$ (by definition of adjacent commonality). As a result there will only be $(k-\lambda_j -1)$ effective edges that permit logical flow \textit{outside} of the end-user neighbour (to the rest of the network). 
By summing over all of the neighbours in either neighbourhood each of the new cut-set cardinalities are then 
\begin{equation}
|\tilde{C}_{\bs{a}/\bs{b}}| = \sum_{j=1}^k (k-\lambda_j-1). \label{eq:BulkCut}
\end{equation}

When will this quantity be greater than $|\tilde{C}_{\text{iso}}|$? This is easy to determine and retrieves the condition stated in the lemma,
\begin{equation}
\sum_{j=1}^k (k-\lambda_j-1) \geq k \implies \sum_{j=1}^k \lambda_j \leq k(k-2).
\end{equation}
If this condition holds, then we can always write $|\tilde{C}_{\bs{a}/\bs{b}}| \geq k$ as required. Any cut which is not $\tilde{C}_{\text{iso}}$ or $\tilde{C}_{\bs{a}/\bs{b}}$ will necessarily permit flow into wider parts of the network. This will always increase the number of disjoint paths from $\bs{a}$ to $\bs{b}$ within the network, since the network is weakly-regular and connectivity properties are consistent throughout the network. It will therefore have a larger cut-set cardinality. 
\end{proof}\\

Lemma~\ref{lemma:BodyCut} serves as a critical tool in our analyses. It relates analytical properties of a WRN with properties of cuts performed on its network-bulk. 
In abstract terms, it posits weak-regularity conditions for which we can be certain that a WRN undergoes network cut growth. 

While Lemma~\ref{lemma:BodyCut} has been formalised in the context of a WRN with consistent adjacent commonality properties (i.e.~$\bs{\Lambda}=\{\bs{\lambda}\}$ has only one multiset) it can be easily extended to account for multiple possible multisets. For this reason we propose the following definition.

\begin{defin} \emph{(Minimum Adjacent Commonality):} Given a $(k,\bs{\Lambda})$-WR graph, the minimum adjacent commonality multiset $\bs{\lambda}^{*} \in \bs{\Lambda}$ is that which collects the fewest edges on a network-bulk cut,
\begin{equation}
\bs{\lambda}^* = \argmin_{\bs{\lambda}\in\bs{\Lambda}} \sum_{\lambda \in \bs{\lambda}} (\lambda - k - 1).
\end{equation}
\end{defin}

The minimum adjacent commonality multiset is a characteristic of any WRN. It identifies the network nodes which possess the smallest network-bulk cuts. Ultimately, if Lemma~\ref{lemma:BodyCut} is satisfied for the minimum adjacent commonality multiset $\bs{\lambda}^*$, then it holds for all possible nodes on the network. 

\begin{lemma}
Select two nodes on a genuinely $(k,\bs{\Lambda})$-WR quantum network $\mc{N} = (P,E)$ that represent end-users, $\bs{a}, \bs{b} \in P$, and demand they that they do not share an edge or neighbour. For any cut-set $\tilde{C}$ that is restricted to edges in the network-bulk $e\in E^{\prime}$, if $\sum_{\lambda\in\bs{\lambda}^*} \lambda \leq k(k-2)$ it follows that $|\tilde{C}| \geq k.$
\label{lemma:BodyCut_II}
\end{lemma}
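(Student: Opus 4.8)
The plan is to reduce the statement to Lemma~\ref{lemma:BodyCut} by isolating the single ``worst-case'' user neighbourhood. Fix the end-user pair $\bs{i}=\{\bs{a},\bs{b}\}$. Since the network is genuinely $(k,\bs{\Lambda})$-WR, the adjacent commonality multisets $\bs{\lambda}_{\bs{a}}$ and $\bs{\lambda}_{\bs{b}}$ of the two users both belong to $\bs{\Lambda}$. The construction in the proof of Lemma~\ref{lemma:BodyCut} of the neighbour-isolating bulk cuts $\tilde{C}_{\bs{a}}$ and $\tilde{C}_{\bs{b}}$ never used $|\bs{\Lambda}|=1$: it only used that each neighbour $\bs{x}$ of (say) $\bs{a}$ has degree $k$, spends one edge on $\bs{a}$ and $\lambda(\bs{a},\bs{x})$ edges on the other neighbours of $\bs{a}$, so that $k-\lambda(\bs{a},\bs{x})-1$ of its edges leave the extended user-neighbourhood; here the hypothesis that $\bs{a}$ and $\bs{b}$ are non-adjacent and share no neighbour is exactly what guarantees none of $\bs{x}$'s edges is wasted on $\bs{b}$ or on $N_{\bs{b}}$, so the count is exact. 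Summing over each neighbourhood, just as in Eq.~(\ref{eq:BulkCut}),
\begin{equation}
|\tilde{C}_{\bs{a}}| = \sum_{\lambda\in\bs{\lambda}_{\bs{a}}}(k-\lambda-1), \qquad |\tilde{C}_{\bs{b}}| = \sum_{\lambda\in\bs{\lambda}_{\bs{b}}}(k-\lambda-1).
\end{equation}

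Next I would bound these by the extremal case over $\bs{\Lambda}$. By definition $\bs{\lambda}^{*}$ is the multiset in $\bs{\Lambda}$ that yields the smallest network-bulk cut, i.e.~it minimises $\sum_{\lambda\in\bs{\lambda}}(k-\lambda-1)$, which is the quantity $\delta$ of Eq.~(\ref{eq:CutGrowth}). Since $\bs{\lambda}_{\bs{a}},\bs{\lambda}_{\bs{b}}\in\bs{\Lambda}$, both $|\tilde{C}_{\bs{a}}|$ and $|\tilde{C}_{\bs{b}}|$ are at least $\delta=\sum_{\lambda\in\bs{\lambda}^{*}}(k-\lambda-1)$. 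Each multiset in $\bs{\Lambda}$ has exactly $k$ elements, so $\sum_{\lambda\in\bs{\lambda}^{*}}(k-\lambda-1) = k(k-1)-\sum_{\lambda\in\bs{\lambda}^{*}}\lambda$, and the hypothesis $\sum_{\lambda\in\bs{\lambda}^{*}}\lambda\leq k(k-2)$ is precisely the statement that this is $\geq k$. Hence even the cheaper of the two neighbour-isolating bulk cuts already contains at least $k$ edges.

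It remains to rule out network-bulk cuts that are neither $\tilde{C}_{\bs{a}}$ nor $\tilde{C}_{\bs{b}}$, and this is the same step that closes the proof of Lemma~\ref{lemma:BodyCut}: any bulk cut that does not simply wrap one of the two user-neighbourhoods must permit logical flow into wider portions of the network, and because the WR connectivity is homogeneous (every node has degree $k$ with commonality drawn from $\bs{\Lambda}$), extending the cut outward only increases the number of edge-disjoint $\bs{a}$--$\bs{b}$ paths; by Menger's theorem, as invoked in the proof of Lemma~\ref{lemma:IsoCut}, the cut-set cardinality equals that number of disjoint paths, so such a cut is no smaller than $\min(|\tilde{C}_{\bs{a}}|,|\tilde{C}_{\bs{b}}|)\geq k$. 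Therefore every network-bulk cut satisfies $|\tilde{C}|\geq k$.

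I expect the real obstacle to be precisely that last step — establishing that $\tilde{C}_{\bs{a}}$ and $\tilde{C}_{\bs{b}}$ are genuinely the minimal network-bulk cuts and that ``moving the cut outward'' can only add edges — exactly as it already is in Lemma~\ref{lemma:BodyCut}. Making it fully rigorous relies on genuine weak-regularity (no boundary violations, so the degree-$k$/commonality bookkeeping holds for every node the cut can touch) together with a careful cut-growth monotonicity argument. By contrast, the generalisation to a superset $\bs{\Lambda}$ with several multisets is the easy part here: it amounts only to replacing the unique $\bs{\lambda}$ of Lemma~\ref{lemma:BodyCut} by the extremal $\bs{\lambda}^{*}$, which is legitimate because $\bs{\lambda}_{\bs{a}}$ and $\bs{\lambda}_{\bs{b}}$ are themselves elements of $\bs{\Lambda}$.
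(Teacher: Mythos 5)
Your proof is correct and takes essentially the same approach as the paper: the paper's own proof of Lemma~\ref{lemma:BodyCut_II} is just the observation that, by definition of the minimum adjacent commonality multiset, the network-bulk cut associated with $\bs{\lambda}^*$ collects the fewest edges of any node's bulk cut, so the claim reduces directly to Lemma~\ref{lemma:BodyCut}. Your write-up unfolds that same reduction in more detail and, like the paper, leaves the ``no other bulk cut can be smaller'' monotonicity step resting on the cut-growth argument already used in Lemma~\ref{lemma:BodyCut}.
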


\begin{proof}
Since there is now variation in $\bs{\lambda}$ from node to node, the network-bulk cut that is associated with $\bs{\lambda}^*$ collects the least number of edges (by definition). Then any other node with a different $\bs{\lambda}\in \bs{\Lambda}$ must necessarily collect more edges than this. Given this consideration, the proof then follows directly from Lemma~\ref{lemma:BodyCut}. 
\end{proof}

\subsection{Network-Bulk Cuts on Internally-WRNs}

\begin{proposition}
Select two nodes on an internally-WR quantum network $\mc{N} = (P,E)$ that represent end-users, $\bs{a}, \bs{b} \in P$, and demand they that they do not share an edge.
There exists some minimum number of network nodes $n_{\min}$ for which the the results of \emph{Lemma \ref{lemma:BodyCut}/\ref{lemma:BodyCut_II}} applies to $\mc{N}$.
\label{lemma:IWRBodyCut}
\end{proposition}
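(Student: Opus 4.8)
The plan is to reduce the internally-WR case to the genuinely-WR case already settled in Lemma~\ref{lemma:BodyCut}/\ref{lemma:BodyCut_II}, by placing the end-users far enough from the network boundary that every edge relevant to the cut-counting argument lies inside the internal subnetwork $\mc{N}_{\text{int}}$, where the $(k,\bs{\Lambda})$-weak-regularity conditions genuinely hold. First I would fix a small constant $r_0$ (it will turn out that $r_0 = 2$ suffices) and require that the graph-distance from each of $\bs{a}$ and $\bs{b}$ to every node of $P_{\text{bound}}$ is at least $r_0$, while keeping $\bs{a}$, $\bs{b}$ non-adjacent. By the implicit ``enough nodes'' assumption, together with a geometric packing count analogous to those used for sparse constructions, there is a threshold $n_{\min}$ on the number of network nodes above which such a placement always exists; this $n_{\min}$ is the constant asserted in the statement. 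With it, the radius-$r_0$ ball $B_{r_0}(\{\bs{a},\bs{b}\})$ about the end-users sits inside $P_{\text{int}}$, so every node within graph-distance $r_0$ of an end-user has degree exactly $k$ and adjacent-commonality multiset in $\bs{\Lambda}$.

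Next I would re-run the counting from the proof of Lemma~\ref{lemma:BodyCut} on this placement. The isolation cut-set $\tilde{C}_{\text{iso}}$ uses exactly the $k$ edges of $E_{\bs{a}}$ (or $E_{\bs{b}}$), which lie in $B_1(\{\bs{a},\bs{b}\}) \subset P_{\text{int}}$. A network-bulk cut may not touch $E_{\bs{a}} \cup E_{\bs{b}}$, so to separate $\bs{a}$ it must disconnect every neighbour $\bs{x} \in N_{\bs{a}}$ from $P \setminus (N_{\bs{a}} \cup N_{\bs{b}} \cup \{\bs{a},\bs{b}\})$; since each such $\bs{x}$ is internal, it has degree $k$, one edge to $\bs{a}$ and $\lambda(\bs{a},\bs{x})$ edges to other neighbours of $\bs{a}$, leaving $k - \lambda(\bs{a},\bs{x}) - 1$ ``outward'' edges, all incident to nodes of $B_2(\{\bs{a},\bs{b}\}) \subset P_{\text{int}}$. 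The cheapest disconnection collects precisely these outward edges, i.e.\ the cut-set $\tilde{C}_{\bs{a}}$ of Eq.~(\ref{eq:BulkCut}), of cardinality $\sum_{j=1}^k (k - \lambda_j - 1)$ with $\{\lambda_j\} = \bs{\lambda}_{\bs{a}} \in \bs{\Lambda}$ (and symmetrically for $\bs{b}$). Both counts are determined entirely by data inside $P_{\text{int}}$, hence they equal the genuinely-WR values, and the hypothesis $\sum_{\lambda \in \bs{\lambda}^*} \lambda \le k(k-2)$ then yields $|\tilde{C}_{\bs{a}/\bs{b}}| \ge k$ exactly as in Lemma~\ref{lemma:BodyCut_II}.

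Finally I would show that no other network-bulk cut beats these. Either $\tilde{C}$ already disconnects every neighbour of one end-user from the rest of the network, in which case it realises at least the disconnection achieved by $\tilde{C}_{\bs{a}}$ or $\tilde{C}_{\bs{b}}$ and so $|\tilde{C}| \ge \sum_j (k - \lambda_j - 1) \ge k$; or it admits flow from some neighbour of $\bs{a}$ or $\bs{b}$ into strictly wider shells $B_{j} \setminus B_{j-1}$ of the internal region. In the latter case those shells still lie in $P_{\text{int}}$ and $\mc{N}_{\text{int}}$ is genuinely $(k,\bs{\Lambda})$-WR, so by the consistency of its connectivity properties the extra admitted flow only produces additional edge-disjoint paths between $\bs{a}$ and $\bs{b}$; Menger's theorem then forces $|\tilde{C}|$ to be at least as large, and in fact to grow. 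Taking $r_0$ (hence $n_{\min}$) large enough that this growth never has to reach $P_{\text{bound}}$ --- $r_0 = 2$ already suffices, since the minimal cut on each side is fixed within $B_2$ --- we obtain $|\tilde{C}| \ge k$ for every network-bulk cut, which is the claim. The step I expect to be the genuine obstacle is this last one: ruling out that the finite boundary offers some cheap ``short-circuit'' partition with fewer than $k$ edges, and it is precisely here that $n \ge n_{\min}$ enters, guaranteeing enough internal buffer around the end-users that cut growth makes any boundary-reaching cut far larger than $k$.
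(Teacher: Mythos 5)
Your reduction is in the same spirit as the paper's (the paper also insists on ``deeply-embedded'' end-users so that the local counting of Lemma~\ref{lemma:BodyCut} is computed entirely from internal data), but there is a genuine gap at exactly the step you flag as the obstacle, and you do not close it. Placing $\bs{a},\bs{b}$ at graph-distance $r_0=2$ from $P_{\text{bound}}$ only guarantees that the \emph{specific} cut-sets $\tilde{C}_{\text{iso}}$ and $\tilde{C}_{\bs{a}/\bs{b}}$ of Eq.~(\ref{eq:BulkCut}) have the same cardinalities as in the genuinely-WR case. It does nothing to exclude a completely different bulk cut whose cut-set runs out to the open boundary far from the end-users: along the boundary the degree drops below $k$, the cut-growth mechanism fails, and such a ``short-circuit'' partition can be strictly cheaper than $\tilde{C}_{\bs{a}/\bs{b}}$. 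Your final sentence asserts that $r_0=2$ ``already suffices, since the minimal cut on each side is fixed within $B_2$,'' but this conflates the correctness of the local edge count with the global minimality of that cut over all bulk cuts; the latter is precisely what is at stake and is architecture-dependent.

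The paper's own data shows the assertion is not free: for the honeycomb, end-users that merely avoid sharing an edge or neighbour exist already at $n=24$, yet $n_{\min}=54$ is required before the boundary-exploiting cut (the green cut in Fig.~\ref{fig:min_nodes_check}) stops undercutting the bulk cut, and similarly $n_{\min}=89,110,197$ for the other cells. Accordingly, the paper does not prove the proposition in generality either --- it explicitly defers a general characterisation of $n_{\min}$ and instead verifies the claim case-by-case by exhibiting, for each architecture, the smallest boundary-exploiting cut and checking that its cardinality exceeds that of $\tilde{C}_{\bs{a}/\bs{b}}$. To repair your argument you would need to supply that comparison: a quantitative lower bound on the size of any cut-set touching $E_{\text{bound}}$ (e.g.\ via the number of edge-disjoint $\bs{a}$--$\bs{b}$ paths that such a cut must still sever, or via an explicit geometric enumeration as in Fig.~\ref{fig:min_nodes_check}), showing it is at least $k$ once $n\geq n_{\min}$.
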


This proposition is well motivated, and can be proven for a number of different WR architectures. Open boundary edges add the complication of a potential cut $C$ that utilises the boundary to find a smaller cut-set than that used in Lemma \ref{lemma:BodyCut}. However, it is always possible to construct a sufficiently large network so that a pair of end-user nodes can always be found for which Lemma \ref{lemma:BodyCut} is satisfied. We describe these end-user nodes as \textit{deeply-embedded}. 
It is possible to provide a general characterisation of $n_{\min}$ by identifying the minimum number of nodes for which there exists a cut-set $\tilde{C}^{\prime\prime}$ containing boundary edges $e\in E_{\text{bound}}$ that has a smaller cardinality than the network-bulk cut in Eq.~(\ref{eq:BulkCut}), i.e.~$|\tilde{C}^{\prime\prime}| < |\tilde{C}_{\bs{a}/\bs{b}}|$. Indeed this can be achieved, but such generality is not particularly useful in this paper, and we leave it to future works.

For now, we focus on WRN structures for which determining $n_{\min}$ is a basic geometric problem. The quantity $n_{\min}$ is the minimum number of nodes required to locate two end-users which do not share an edge or a neighbour, so that Lemma~\ref{lemma:BodyCut} is not compromised by the network boundary.
In Fig.~\ref{fig:min_nodes_check} we provide visual proofs of the minimum number of nodes required to satisfy internal-WR for the structures utilised in this paper. In each case the red regions of the network describes boundary region, while the white region resembles the internal network which is WR for the end-user nodes (which are coloured yellow). The blue dotted line is the network-bulk cut which collects the number of edges described in Lemma~\ref{lemma:BodyCut}. The green cut is the smallest cut that exploits the boundary edges in order to reduce the cut-set size. The removal of any node on each network will give rise to a smaller cut than the blue cut by exploiting the boundary edges. 

\begin{figure}[t!]
\includegraphics[width=0.775\linewidth]{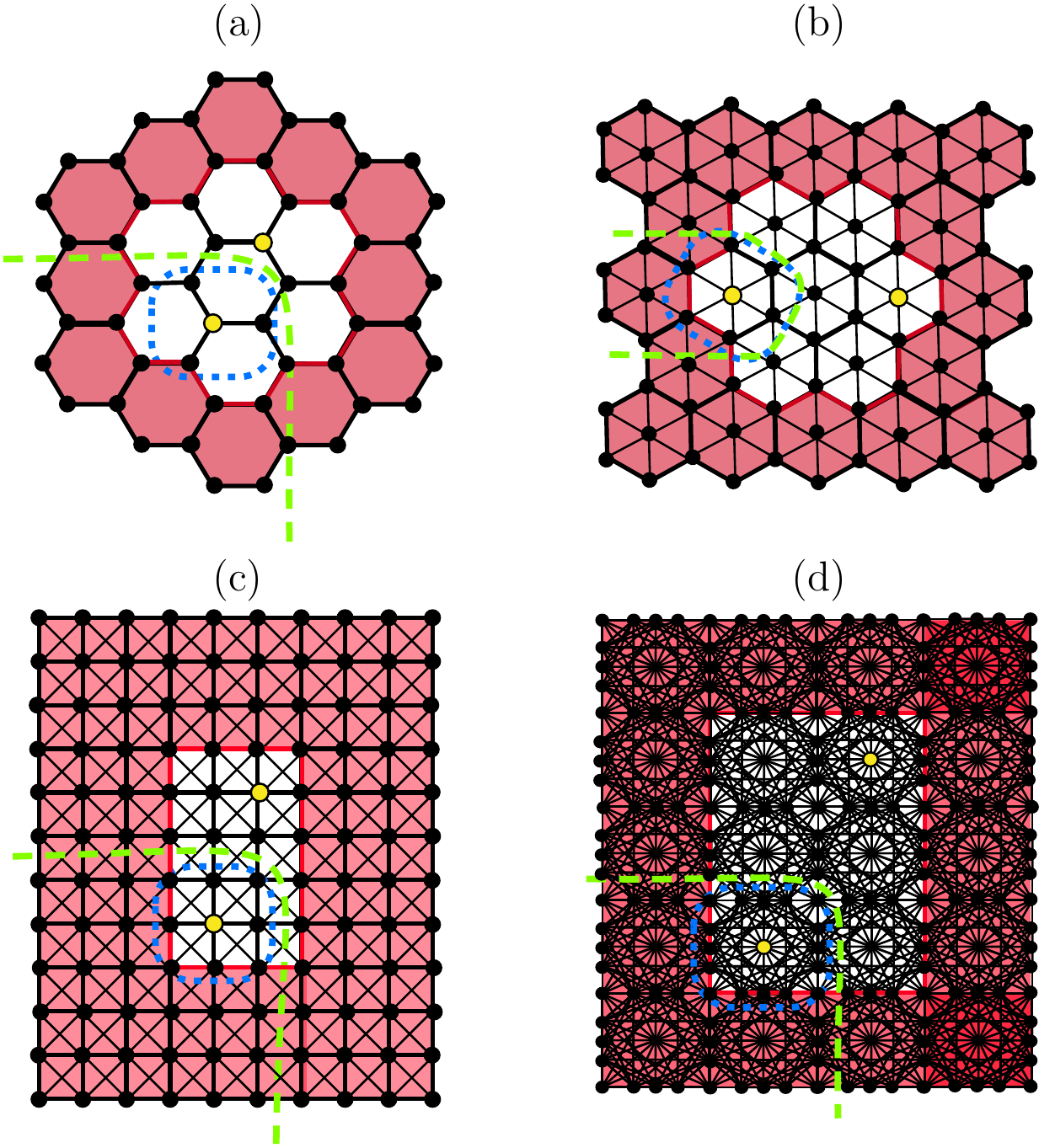}
\caption{Minimum node WRNs for a strict satisfaction of internal regularity for (a) honeycomb network, (b) hexagonal network, (c) Manhattan $k=8$ network and (d) Manhattan $k=16$ network. Each case resembles the smallest WRN for which there exist a pair of end-user nodes which do not share an edge or a neighbour, and possess minimum cardinality network-bulk cuts which are unaffected by the open boundary.}
\label{fig:min_nodes_check}
\end{figure}

As an example, consider Fig.~\ref{fig:min_nodes_check}(c). The network-bulk cut according to Lemma~\ref{lemma:BodyCut} for genuinely-WRNs will collect 32 edges (illustrated by the blue dotted line). However, the existence of the open boundary allows for different cuts which may compromise this result. The green dotted line depicts the smallest cut that would not be available on a genuinely-WR version of this network. It collects 33 edges. There does not exist any other network-bulk cut that can collect fewer edges. The minimum number of nodes required in each case:
\begin{align}
\begin{aligned}
&k=3, ~\bs{\lambda}^* = \{0\}^{\cup 3} \rightarrow n_{\min} = 54,\\
&k=6, ~\bs{\lambda}^* = \{2\}^{\cup 6} \rightarrow n_{\min} = 89,\\
&k=8, ~\bs{\lambda}^* = \{2,4\}^{\cup 4} \rightarrow n_{\min} = 110,\\
&k=16, ~\bs{\lambda}^* = \{4,8,8,8\}^{\cup 4} \rightarrow n_{\min} = 197.
\end{aligned}
\end{align}
In general, we are interested in large-scale networks with many more nodes than any of these values $n \gg n_{\min}$, so clearly the investigation of internally-WR graphs is well justified. When the neighbour sharing condition is relaxed for the end-users, this minimum number of nodes is reduced so that these constructions remain sufficient.

\subsection{Threshold Capacities}

With these lemmas in hand, we can present the key mathematical tools used throughout this paper and derive the following \textit{threshold theorems}.

\begin{theorem}
Consider a $(k,\bs{\Lambda})$-WR quantum network. Select an end-user pair $\bs{i} = \{\bs{a},\bs{b}\}$, and demand they are sufficiently distant such that they do not share an edge or neighbour. Then there exists a threshold single-edge capacity $\mc{C}_{\min}$ in the network, given by
\begin{gather}
\mc{C}_{\min} \defeq \frac{1}{\delta}\>\mc{C}_{\mc{N}_{\bs{i}}}^m, \label{eq:Thresh}
\end{gather}
where $\delta$ is a characteristic property of the network, 
$
{\delta} \defeq {\sum_{\lambda\in\bs{\lambda}^*} k - \lambda - 1},
$
such that if all single-edge capacities in the network satisfy this minimum threshold,
$\mc{C}_{\bs{xy}} \geq \mc{C}_{\min},\forall\>(\bs{x},\bs{y})\in E$
then flooding capacity is guaranteed to satisfy
\begin{equation}
\frac{2(k-1)}{\delta} \mc{C}_{\mc{N}_{\bs{i}}}^{m}\leq \mc{C}^m(\bs{i},\mc{N}) \leq \mc{C}_{\mc{N}_{\bs{i}}}^{m}. \label{eq:PerfBounds}
\end{equation}
\label{theorem:Thresh1}
\end{theorem}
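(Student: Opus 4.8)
\emph{Proof proposal.} The upper bound $\mc{C}^m(\bs{i},\mc{N})\leq\mc{C}_{\mc{N}_{\bs{i}}}^m$ is immediate: isolating whichever of $\bs{a},\bs{b}$ attains the minimum defining $\mc{C}_{\mc{N}_{\bs{i}}}^m$ is a valid cut, and the flooding capacity is the minimum multi-edge capacity over all cuts [Eq.~(\ref{eq:flood})]. The substance is the lower bound. Since each edge obeys $\mc{C}_{\bs{xy}}\geq\mc{C}_{\min}=\mc{C}_{\mc{N}_{\bs{i}}}^m/\delta$, it is enough to prove that \emph{every} cut $C$, with end-user sides $P_{\bs{a}}\ni\bs{a}$ and $P_{\bs{b}}\ni\bs{b}$, has multi-edge capacity at least $2(k-1)\,\mc{C}_{\min}=\tfrac{2(k-1)}{\delta}\mc{C}_{\mc{N}_{\bs{i}}}^m$. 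Throughout I would carry the hypotheses that $\bs{a},\bs{b}$ share no edge or neighbour and are deeply embedded --- the latter invoking Proposition~\ref{lemma:IWRBodyCut} so that in the internally-WR case the open boundary offers no cheaper cut --- along with the standing assumption $\sum_{\lambda\in\bs{\lambda}^{*}}\lambda\leq k(k-2)$ of Lemma~\ref{lemma:BodyCut_II} (equivalently $\delta\geq k$), which in particular puts us in the regime $\omega=2(k-1)/\delta\leq1$ where the stated interval is non-empty.

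Cuts split into three exhaustive classes. For \emph{pure single-node isolation} ($P_{\bs{a}}=\{\bs{a}\}$ or $P_{\bs{b}}=\{\bs{b}\}$) the cut-set is $E_{\bs{a}}$ or $E_{\bs{b}}$, so by the definition of the min-neighbourhood capacity the multi-edge capacity is at least $\mc{C}_{\mc{N}_{\bs{i}}}^m=\delta\,\mc{C}_{\min}\geq2(k-1)\,\mc{C}_{\min}$. For \emph{network-bulk cuts} ($\tilde C\subseteq E'$), Lemmas~\ref{lemma:IsoCut}, \ref{lemma:BodyCut} and~\ref{lemma:BodyCut_II} imply that the smallest such cut is $\tilde C_{\bs{a}}$ or $\tilde C_{\bs{b}}$, of cardinality $\sum_{\lambda\in\bs{\lambda}_{\bs{a}/\bs{b}}}(k-\lambda-1)\geq\delta$ [Eq.~(\ref{eq:BulkCut})] because $\delta$ minimises this over $\bs{\Lambda}$; hence the multi-edge capacity is at least $\delta\,\mc{C}_{\min}=\mc{C}_{\mc{N}_{\bs{i}}}^m\geq2(k-1)\,\mc{C}_{\min}$.

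The remaining \emph{intermediate cuts} are those in which (say) $P_{\bs{a}}$ contains $\bs{a}$ together with a nonempty proper subset of $N_{\bs{a}}$. Counting edge endpoints, $|\partial P_{\bs{a}}|=|P_{\bs{a}}|\,k-2\,e(P_{\bs{a}})$ with $e(P_{\bs{a}})$ the number of internal edges; using the constant degree $k$ and the bound on adjacencies inside a neighbourhood controlled by $\bs{\lambda}^{*}$, this is minimised by $P_{\bs{a}}=\{\bs{a},\bs{y}\}$ with $\bs{y}\in N_{\bs{a}}$, which leaves exactly $2(k-1)$ boundary edges (the edge $(\bs{a},\bs{y})$ becomes internal, leaving the other $k-1$ edges of $\bs{a}$ and the other $k-1$ edges of $\bs{y}$, all distinct). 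Each boundary edge has capacity $\geq\mc{C}_{\min}$, so the multi-edge capacity is at least $2(k-1)\,\mc{C}_{\min}$. Minimising over the three classes then yields $\mc{C}^m(\bs{i},\mc{N})\geq\tfrac{2(k-1)}{\delta}\mc{C}_{\mc{N}_{\bs{i}}}^m$, which with the upper bound is the claimed interval. The same extremal cut exposes the loophole behind $\omega$: if the $k-1$ non-$\bs{y}$ edges of $\bs{a}$ and the $k-1$ other edges of $\bs{y}$ all sit exactly at $\mc{C}_{\min}$, the lower bound is met and may lie strictly below $\mc{C}_{\mc{N}_{\bs{i}}}^m$, which is precisely why tightening the threshold on user-incident edges to $d_{\mc{N}_{\bs{i}}}^{\max}$ restores equality.

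\textbf{Main obstacle.} The delicate step is the intermediate class: one must show that no cut splitting an end-user's neighbourhood can have fewer than $2(k-1)$ boundary edges, and that no cut routed around the open boundary of an internally-WR network can undercut it. The count jumps from $k$ to $2(k-1)$ once the first neighbour is absorbed into $P_{\bs{a}}$, and the task is to verify it never dips back below $2(k-1)$ before the bulk value $\delta$ is reached; this requires controlling $e(P_{\bs{a}})$ via $\bs{\lambda}^{*}$, and for the concrete architectures it reduces to the finite geometric checks of Proposition~\ref{lemma:IWRBodyCut} and Fig.~\ref{fig:min_nodes_check}. A subsidiary bookkeeping point --- needed for the interval not to be vacuous --- is the condition $\delta\geq2(k-1)$, i.e.\ $\sum_{\lambda\in\bs{\lambda}^{*}}\lambda\leq(k-1)(k-2)$, which holds for all the WRN cells considered.
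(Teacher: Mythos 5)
Your proposal follows essentially the same route as the paper's proof: the upper bound via user-node isolation, the bulk cuts handled by the threshold $\mc{C}_{\min}=\mc{C}_{\mc{N}_{\bs{i}}}^m/\delta$ together with Lemmas~\ref{lemma:IsoCut}--\ref{lemma:BodyCut_II}, and the lower bound traced to the same extremal hybrid cut $P_{\bs{a}}=\{\bs{a},\bs{y}\}$ with $2(k-1)$ threshold-capacity boundary edges (the paper phrases this as cutting $k-1$ neighbourhood edges plus the $k-1$ remaining edges of the one high-capacity neighbour). Your two flagged caveats --- that the interval is non-vacuous only when $\delta\geq 2(k-1)$, and that one must still verify no intermediate cut dips below $2(k-1)$ edges --- are left implicit in the paper's own argument, so they are fair observations rather than defects of your approach.
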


\begin{proof} Let us denote the $(k,\bs{\Lambda})$-WRN $\mc{N} = (P,E)$. The network possesses a large set of valid cuts, $\textsf{C}_{\mc{N}} = \{ C_j \}_j$, which collects all of the valid network cuts $C_j$ that can successfully partition the pair of end-users. We can simultaneously define a set of \textit{cut-set cardinalities}, i.e.~if there exist $M$ valid cuts, this is a $M$-element multi-set that counts the number of edges contained in each of the valid network-cuts. More precisely, we can define this multiset
\begin{equation}
c_{\mc{N}} = \{ |\tilde{C}| ~|~ \tilde{C} \text{ s.t } C\in \textsf{C}_{\mc{N}} \}.
\end{equation}
By Lemma 1, the minimum-cut-set cardinality for the WRN $\mc{N}$ is simply equal to its regularity, i.e.~$\min (c_{\mc{N}}) =  k$, and can be achieved by isolating an end-user (cutting the edges within an end-user neighbourhood). Performing user-node isolation we simply collect the edges from the user-neighbourhood $\tilde{C} = {E_{\bs{i}}}$ to generate the min-neighbourhood capacity,
\begin{align}
\mc{C}_{\mc{N}_{\bs{i}}}^m &=  \min_{\bs{j}\in \{\bs{a},\bs{b}\}} \hspace{-1mm} \sum_{(\bs{x},\bs{y})\in {E_{\bs{j}}}} \mc{C}_{\bs{xy}}. \label{eq:MinUserMPC}
\end{align}

Now let us consider any network-bulk cut ${C}^{\prime}$ and its corresponding cut-set $\tilde{C}^{\prime}$ which is restricted to collecting edges on the network-bulk $\mc{N}^{\prime}$. These types of cuts cannot use edges from the end-user-neighbourhoods and will provide a multi-edge capacity,
\begin{equation}
{\mc{C}^m}(C^{\prime}) =  \sum_{(\bs{x},\bs{y})\in \tilde{C}^\prime} \mc{C}_{\bs{xy}}. \label{eq:NetBodMPC}
\end{equation}
 In order to ensure $\mc{C}_{\mc{N}_{\bs{i}}}^m$ is indeed the flooding capacity of the entire network, we must ensure that the minimum network-bulk based cut is \textit{never} a minimum-cut, so that ${\mc{C}^m}(C^{\prime}) \geq \mc{C}_{\mc{N}_{\bs{i}}}^m$.
 
When restricted to performing cuts only on the network-bulk, the set of possible cuts will be different from ${\textsf{C}}_{\mc{N}}$, since now certain cuts are inaccessible. Instead, we may define a new set of network-cuts $\textsf{C}_{\mc{N}^{\prime}} $ which are restricted to the network-bulk. This generates an analogous set of cut-set cardinalities 
\begin{equation}
c_{\mc{N}^{\prime}} = \{ |\tilde{C}| ~|~ \tilde{C} \text{ s.t } C\in \textsf{C}_{\mc{N}^{\prime}} \}.
\end{equation}
Using Lemma 1 we can determine the smallest network-bulk based cut on a $(k,\bs{\Lambda})$-WR network (with no boundary effects). Since $\bs{\Lambda} = \{\bs{\lambda}_{\bs{x}}~|~\bs{x}\in P\}$ may contain many different adjacent commonalities, it is always possible to lower-bound the cardinality of the smallest network-bulk cut-set by using the minimum adjacent commonality multiset $\bs{\lambda}^*$. Then we can write,
\begin{equation}
 \min(c_{\mc{N}^{\prime}}) \geq \sum_{\lambda\in\bs{\lambda}^*} (k-\lambda - 1) = \delta. 
\label{eq:delta_defin}
\end{equation}
This corresponds to the minimum number of edges that must be cut from the neighbours \textit{of the neighbours} of the minimum end-user (e.g.~the green cut-set in Fig.~\ref{fig:Cuts}).
This generates $\tilde{C}^{\prime}$ as the cut-set restricted to the network-bulk with minimum cardinality. 

In order to ensure that the flooding capacity is equal to the min-neighbourhood capacity, we want to make sure that this network-bulk cut never generates a multi-edge capacity smaller than $\mc{C}_{\mc{N}_{\bs{i}}}^m$. That is, we wish to ensure that
\begin{equation}
\min_{C^\prime \in \textsf{C}_{\mc{N}^{\prime}}} \mc{C}^m(C^{\prime}) \geq   \mc{C}_{\mc{N}_{\bs{i}}}^m. \label{eq:NB_Cond}
\end{equation}
Minimising ${\mc{C}^m}(C^{\prime})$ is achieved by setting each edge in the network-bulk to its minimum value $\mc{C}_{\min}$ and performing the cut which collects the fewest number of edges, such that
\begin{align}
 \min(c_{\mc{N}^{\prime}})\cdot\hspace{-2mm} \min_{(\bs{x},\bs{y}) \in \tilde{C}^{\prime}}  \mc{C}_{\bs{xy}}  =  \min(c_{\mc{N}^{\prime}}) \cdot \mc{C}_{\min}\geq   \mc{C}_{\mc{N}_{\bs{i}}}^m.
\end{align}
Subsequently we can derive a minimum threshold  capacity for any edge in the network,
\begin{align}
\mc{C}_{\bs{xy}} \geq \mc{C}_{\min} = \frac{ \mc{C}_{\mc{N}_{\bs{i}}}^m}{\sum_{\lambda\in\bs{\lambda}^*} (k-\lambda-1)} = \frac{1}{\delta}\> \mc{C}_{\mc{N}_{\bs{i}}}^m, \label{eq:NetProps} ~~\forall (\bs{x},\bs{y})\in E,
\end{align}
which ensures that Eq.~(\ref{eq:NB_Cond}) is always upheld. Imposing this threshold constraint ensures that any cut restricted to the network-bulk will generate a multi-edge capacity that is greater than or equal to the min-neighbourhood capacity. As a result, no cut performed exclusively on the network-bulk can ever undermine the flooding capacity.

There is now only one issue; we must identify if there exists any possible \textit{hybrid} cut that might undermine the flooding capacity being equal to the min-neighbourhood capacity. That is, is there a cut that can collect a mixture of edges contained in the user-neighbourhood \textit{and} the network-bulk? Unfortunately there is, and it must be considered. Let us take a worst-case scenario where \textit{all} of the edges in a network-bulk are of minimum threshold capacity $\mc{C}_{\min}$. Furthermore, let's consider that the min-neighbourhood capacity $\mc{C}_{\mc{N}_{\bs{i}}}^m$ is generated by a user-neighbourhood in which has $(k-1)$ edges of capacity $\mc{C}_{\min}$ and one edge with capacity $\mc{C}_{\mc{N}_{\bs{i}}}^m - (k-1)\mc{C}_{\min}$. That is,
\begin{align}
\mc{C}_{\mc{N}_{\bs{i}}}^m &= (k-1) \mc{C}_{\min}  + \left[ \mc{C}_{\mc{N}_{\bs{i}}}^m - (k-1)\mc{C}_{\min}\right].
\end{align}
This is a worst-case situation in which one neighbourhood edge contains the majority of the min-neighbourhood capacity. In this scenario, it is possible to cut the $(k-1)$ edges in the neighbourhood which have the threshold value, and then to cut and additional $(k- 1)$ edges in the network-bulk which are connected to the largest capacity edge in the neighbourhood \textit{instead} of this neighbourhood edge. This results in a hybrid cut $C{''}$ which generates a multi-edge capacity
\begin{equation}
\mc{C}^m(C^{\prime\prime}) \geq 2(k-1)\mc{C}_{\min} = \frac{2(k-1)}{\delta} \mc{C}_{\mc{N}_{\bs{i}}}^m.
\end{equation}
This is an absolute worst-case scenario for the network design, placing a lower-bound on the end-to-end flooding capacity.

Consequently, provided that $\mc{C}_{\bs{xy}} \geq \mc{C}_{\min},~\forall (\bs{x},\bs{y}) \in E$ then the flooding capacity always satisfies
\begin{equation}
\frac{2(k-1)}{\delta} \mc{C}_{\mc{N}_{\bs{i}}}^{m}\leq \mc{C}^m(\bs{i},\mc{N}) \leq \mc{C}_{\mc{N}_{\bs{i}}}^{m}.
\end{equation}
as required. This reveals a single-edge threshold condition for all edges in the network so to ensure that end-to-end performance is guaranteed within tight bounds.  \end{proof}\\

Theorem~\ref{theorem:Thresh1} therefore allows us to place tight performance bounds on the flooding capacity of a quantum WRN. Using only the connectivity properties of the architecture itself, and a desired end-to-end performance, we can identify a single-edge capacity constraint for all network edges. This is extremely useful, and a key result in this work.

It is also possible to identify what additional constraints are necessary to not just guarantee a tight window of performance, but guarantee exact, optimal performance. This is achieved in the following theorem.

\begin{figure}
\includegraphics[width=0.5\linewidth]{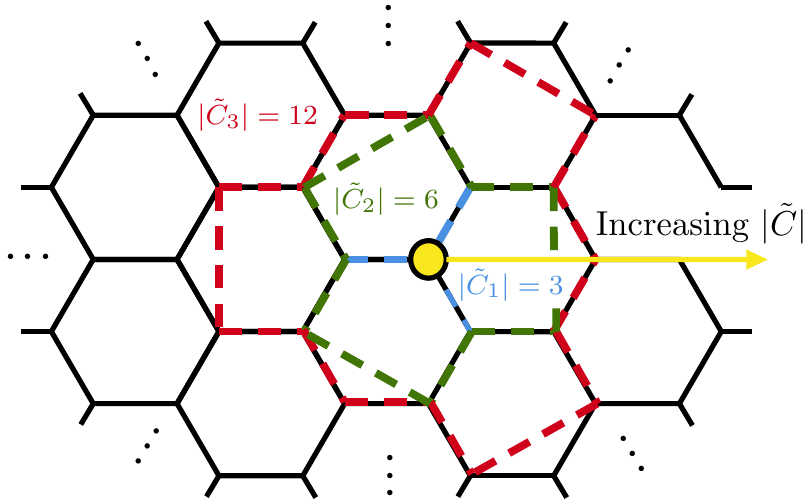}
\caption{Cut-set cardinality with respect to increasing distance from user-node on a honeycomb lattice. We show some example cuts on a honeycomb network of increasing cut-set dimension. The further one moves from a user-node $\bs{a}$, the more edges that must be cut due to $k$-regularity. $\tilde{C}_1$ gives the neighbourhood cut-set $E_{\bs{a}}$, $\tilde{C}_2$ gives the smallest cut-set when limited to network-body edges $E^{\prime}$, and $\tilde{C}_3$ gives a wider cut example.    }
\label{fig:Cuts}
\end{figure}

\begin{theorem}
Consider a $(k,\bs{\Lambda})$-WR quantum network. Select an end-user pair $\bs{i} = \{\bs{a},\bs{b}\}$, and demand they are sufficiently distant such that they do not share an edge or neighbour. Then there exists the threshold single-edge capacity $\mc{C}_{\min}^{\prime}$ in the network bulk, and another for the user-connected edges $\mc{C}_{\min}^{\bs{i}}$ given by
\begin{gather}
\mc{C}_{\min}^{\prime}\defeq \frac{1}{\delta}\>\mc{C}_{\mc{N}_{\bs{i}}}^m, 
~~\mc{C}_{\min}^{\bs{i}}\defeq \left(\frac{1}{k-1} - \frac{1}{\delta}\right)\>\mc{C}_{\mc{N}_{\bs{i}}}^m,\label{eq:NeighThreshs}
\end{gather}
such that if all single-edge capacities in the network satisfy their minimum thresholds, $\mc{C}_{\bs{xy}} \geq \mc{C}_{\min}^{\prime},\forall\>(\bs{x},\bs{y})\in E^{\prime}$ and $\mc{C}_{\bs{xy}} \geq \mc{C}_{\min}^{\bs{i}},\forall\>(\bs{x},\bs{y})\in E_{\bs{a}}\cup E_{\bs{b}}$ then flooding capacity is guaranteed to satisfy
\begin{equation}
\mc{C}^m(\bs{i},\mc{N}) = \mc{C}_{\mc{N}_{\bs{i}}}^{m}.
\end{equation}
\label{theorem:Thresh2}
\end{theorem}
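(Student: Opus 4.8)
The plan is to mirror the proof of Theorem~\ref{theorem:Thresh1}, but to cash in the extra slack supplied by the stricter threshold $\mc{C}_{\min}^{\bs{i}}$ on the user-connected edges, which is engineered precisely to close the gap between the lower bound $\tfrac{2(k-1)}{\delta}\mc{C}_{\mc{N}_{\bs{i}}}^m$ of Theorem~\ref{theorem:Thresh1} and the target value $\mc{C}_{\mc{N}_{\bs{i}}}^m$. As always, the upper bound $\mc{C}^m(\bs{i},\mc{N})\le\mc{C}_{\mc{N}_{\bs{i}}}^m$ holds unconditionally because user-node isolation is a valid cut, so it suffices to show that \emph{every} valid cut $C$ of the pair $\bs{i}=\{\bs{a},\bs{b}\}$ satisfies $\mc{C}^m(C)\ge\mc{C}_{\mc{N}_{\bs{i}}}^m$; the minimum over cuts is then squeezed to $\mc{C}_{\mc{N}_{\bs{i}}}^m$, forcing equality. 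I split the cuts into the same three families used for Theorem~\ref{theorem:Thresh1}: pure user-neighbourhood cuts, pure network-bulk cuts, and hybrid cuts.

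First I would dispatch the two easy families. A pure user-neighbourhood cut isolates $\bs{a}$ or $\bs{b}$, hence has multi-edge capacity at least $\mc{C}_{\mc{N}_{\bs{i}}}^m$ by definition of the min-neighbourhood capacity. A pure network-bulk cut collects, by Lemma~\ref{lemma:BodyCut}/\ref{lemma:BodyCut_II} and the standing hypothesis $\sum_{\lambda\in\bs{\lambda}^*}\lambda\le k(k-2)$ (under which network cut growth holds), at least $\delta$ edges, each lying in $E'$ and so of capacity at least $\mc{C}_{\min}^{\prime}=\tfrac1\delta\mc{C}_{\mc{N}_{\bs{i}}}^m$; its multi-edge capacity is therefore at least $\delta\cdot\tfrac1\delta\mc{C}_{\mc{N}_{\bs{i}}}^m=\mc{C}_{\mc{N}_{\bs{i}}}^m$. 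These steps are verbatim those of Theorem~\ref{theorem:Thresh1} with $\mc{C}_{\min}$ relabelled $\mc{C}_{\min}^{\prime}$; incidentally the same hypothesis gives $\delta=k(k-1)-\sum_{\lambda\in\bs{\lambda}^*}\lambda\ge k>k-1$, so that $\mc{C}_{\min}^{\bs{i}}=\big(\tfrac1{k-1}-\tfrac1\delta\big)\mc{C}_{\mc{N}_{\bs{i}}}^m>0$ and the new threshold is well posed.

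The substance is the hybrid family, and here I would recycle the worst-case construction of Theorem~\ref{theorem:Thresh1}. Take WLOG the minimising neighbourhood to be $E_{\bs{a}}$. The most dangerous hybrid cut leaves exactly one neighbourhood edge $e^\star=(\bs{a},\bs{y}^\star)$ intact, cuts the other $k-1$ edges of $E_{\bs{a}}$, and instead fences off the blob reachable from $\bs{a}$ through $\bs{y}^\star$; since $\bs{a}$ and $\bs{b}$ share no neighbour, every fencing edge lies in the network bulk $E'$, and the sparsest such fence (blob $=\{\bs{a},\bs{y}^\star\}$) uses exactly $k-1$ of them, commonality between $\bs{y}^\star$ and $\bs{a}$ only increasing this count via the recursive propagation of Lemma~\ref{lemma:BodyCut}. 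Now the $k-1$ severed neighbourhood edges each carry at least $\mc{C}_{\min}^{\bs{i}}$ by the new user-edge threshold and the $k-1$ bulk fencing edges each carry at least $\mc{C}_{\min}^{\prime}$, so
\[
\mc{C}^m(C)\;\ge\;(k-1)\,\mc{C}_{\min}^{\bs{i}}+(k-1)\,\mc{C}_{\min}^{\prime}\;=\;(k-1)\!\left[\left(\tfrac1{k-1}-\tfrac1\delta\right)+\tfrac1\delta\right]\mc{C}_{\mc{N}_{\bs{i}}}^m\;=\;\mc{C}_{\mc{N}_{\bs{i}}}^m .
\]
Any other hybrid cut — one leaving two or more neighbourhood edges intact, or one whose fence starts further out — must, by the network-cut-growth argument already invoked in Theorem~\ref{theorem:Thresh1}, collect strictly more edges to block the extra disjoint $\bs{a}$–$\bs{b}$ paths it admits, hence has even larger multi-edge capacity. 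Taking the minimum over all three families gives $\min_C\mc{C}^m(C)\ge\mc{C}_{\mc{N}_{\bs{i}}}^m$, which together with the universal upper bound yields $\mc{C}^m(\bs{i},\mc{N})=\mc{C}_{\mc{N}_{\bs{i}}}^m$.

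The main obstacle — inherited directly from Theorem~\ref{theorem:Thresh1} and to be handled at the same level of rigour — is justifying that the single-intact-edge hybrid cut is genuinely extremal: one must check that fencing off $\bs{y}^\star$ never requires fewer than $k-1$ bulk edges, and that permitting two or more neighbourhood edges to survive only worsens the adversary's position. I would argue this through network cut growth and the consistency of the $(k,\bs{\Lambda})$-connectivity, noting that for the explicit WRN cells used in the paper the claim can be read off the geometry directly, in the same spirit as the minimum-node checks in Fig.~\ref{fig:min_nodes_check}.
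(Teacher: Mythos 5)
Your proposal is correct and follows essentially the same route as the paper: reduce to showing every cut has multi-edge capacity at least $\mc{C}_{\mc{N}_{\bs{i}}}^m$, reuse Theorem~\ref{theorem:Thresh1} for the pure-neighbourhood and pure-bulk families, and observe that the extremal hybrid cut collects $(k-1)$ user edges at $\mc{C}_{\min}^{\bs{i}}$ plus $(k-1)$ bulk edges at $\mc{C}_{\min}^{\prime}$, whose sum is exactly $\mc{C}_{\mc{N}_{\bs{i}}}^m$ by construction of the new threshold. Your additional remarks (positivity of $\mc{C}_{\min}^{\bs{i}}$ via $\delta\ge k$, and the extremality of the single-intact-edge hybrid) are consistent with, and slightly more explicit than, the paper's own treatment.
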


\begin{proof}
By Theorem~\ref{theorem:Thresh1} we know that if all edges satisfy $\mc{C}_{\bs{xy}} \geq \mc{C}_{\min}^{\prime} = \mc{C}_{\mc{N}_{\bs{i}}}^m/\delta$, for all $(\bs{x},\bs{y})\in E$, then the flooding capacity satisfies Eq.~(\ref{eq:PerfBounds}). If we want to avoid the worst-case lower-bound it is possible to enforce an additional, slightly stricter constraint on the end-user connected edges, which we label $\mc{C}_{\min}^{\bs{i}}$. 
If we consider the hybrid cut $C^{\prime\prime}$ scenario as in the previous theorem where $(k-1)$ edges from the user-neighbourhood are collected and have minimum capacity $\mc{C}_{\min}^{\bs{i}}$ along with $(k-1)$ network-bulk edges with capacity $\mc{C}_{\min}^{\prime}$ in order to consolidate the end-user partition. This results in a possible multi-edge capacity
\begin{equation}
\mc{C}^m(C^{\prime\prime}) \geq (k-1)\mc{C}_{\min}^{\prime} + (k-1) \mc{C}_{\min}^{\bs{i}} = \frac{(k-1)}{\delta} \mc{C}_{\mc{N}_{\bs{i}}}^m + (k-1) \mc{C}_{\min}^{\bs{i}}.
\end{equation}
To ensure that $\mc{C}^m(C^{\prime\prime}) \geq \mc{C}_{\mc{N}_{\bs{i}}}^m$ we must then demand that
\begin{align}
&\frac{(k-1)}{\delta} \mc{C}_{\mc{N}_{\bs{i}}}^m + (k-1) \mc{C}_{\min}^{\bs{i}} \geq \mc{C}_{\mc{N}_{\bs{i}}}^m \implies \mc{C}_{\min}^{\bs{i}} \geq \left(\frac{1}{k-1} - \frac{1}{\delta}\right)\mc{C}_{\mc{N}_{\bs{i}}}^m.
\end{align}

Therefore, if we demand that all edges in the user-neighbourhoods satisfy $\mc{C}_{\bs{xy}} \geq \mc{C}_{\min}^{\bs{i}}$, then the worst-case scenario which generates the lower-bound in Theorem~\ref{theorem:Thresh1} disappears and becomes equivalent to the min-neighbourhood capacity. That is, the inequalities become
\begin{equation}
\mc{C}_{\mc{N}_{\bs{i}}}^{m}\leq \mc{C}^m(\bs{i},\mc{N}) \leq \mc{C}_{\mc{N}_{\bs{i}}}^{m} \implies \mc{C}^m(\bs{i},\mc{N}) = \mc{C}_{\mc{N}_{\bs{i}}}^{m} ,
\end{equation}
as required. We find that $\mc{C}_{\min}^{\bs{i}} \geq \mc{C}_{\min}^{\prime}$ if ${k} \leq \frac{\delta}{2} + 1$, which is satisfied in all of our example architectures. 
\end{proof}

\subsection{Neighbour Sharing End-Users}

So far we have considered end-user pairs that are not directly connected and do not share common neighbours. This is  appropriate assumption since we are studying global quantum communications over very long distances; it is not interesting to consider short range users separated by single links. Furthermore, it allows for much clearer intuition surrounding increasing cut-set dimension with respect to cuts on the network-bulk as shown in Fig.~\ref{fig:Cuts}. This assumption does not compromise the generality of our arguments, as we show in the following corollary that Theorems~\ref{theorem:Thresh1} and \ref{theorem:Thresh2} hold even when end-user nodes share a neighbour. 

\begin{corollary} \text{\emph{(Neighbour Sharing):}} Consider a $(k,\bs{\Lambda})$-WR quantum network and an end-user pair $\bs{i} = \{\bs{a},\bs{b}\}$ within the network that do not share an edge, and possess a min-neighbourhood capacity $\mc{C}_{\mc{N}_{\bs{i}}}^{m}$. Even if the end-user nodes share a neighbour, Theorems~\ref{theorem:Thresh1} and \ref{theorem:Thresh2} hold.
\end{corollary}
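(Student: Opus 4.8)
The plan is to check that the cut-counting behind Theorems~\ref{theorem:Thresh1} and~\ref{theorem:Thresh2} is unaffected by the presence of a shared neighbour, so the same threshold capacities force the same bounds. First I would deal with the cheap ingredients. Cutting the $k$ edges of $E_{\bs{a}}$ (or $E_{\bs{b}}$) disconnects that end-user from the whole network and hence from the other end-user, whatever the neighbour-sharing, so $\mc{C}_{\mc{N}_{\bs{i}}}^m$ is still a valid cut value and the upper bound $\mc{C}^m(\bs{i},\mc{N})\leq\mc{C}_{\mc{N}_{\bs{i}}}^m$ still holds. The Menger argument of Lemma~\ref{lemma:IsoCut} is also intact: every vertex-disjoint $\bs{a}$--$\bs{b}$ path still consumes one edge of $E_{\bs{a}}$ and one of $E_{\bs{b}}$ --- including the length-two path through a shared neighbour --- so at most $k$ disjoint paths exist and the minimum cut-set cardinality remains $k$, realised by user-node isolation.

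The real content is what replaces the network-bulk cut $\tilde{C}_{\bs{a}/\bs{b}}$ of Lemma~\ref{lemma:BodyCut}. When $\bs{a}$ and $\bs{b}$ share $t\geq 1$ neighbours, no cut-set can avoid $E_{\bs{a}}\cup E_{\bs{b}}$ entirely: a shared neighbour $\bs{c}\in N_{\bs{a}}\cap N_{\bs{b}}$ is tied to both end-users by user-edges, so it falls on one end-user's side and the edge joining it to the other is forced into the cut. The crucial point is that the cardinality bookkeeping is unchanged. Isolating the layer of nodes around $\bs{a}$, a shared neighbour $\bs{c}$ splits its $k$ incident edges as one to $\bs{a}$, one to $\bs{b}$, $\lambda_{\bs{a}}^{\bs{c}}=|N_{\bs{a}}\cap N_{\bs{c}}|$ to the other neighbours of $\bs{a}$, and $k-\lambda_{\bs{a}}^{\bs{c}}-2$ pointing outward; completing the partition forces us to cut the single edge $\{\bs{c},\bs{b}\}$ plus those $k-\lambda_{\bs{a}}^{\bs{c}}-2$ outward edges, i.e.\ exactly $k-\lambda_{\bs{a}}^{\bs{c}}-1$ edges --- precisely the contribution of a non-shared neighbour. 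Summing over $N_{\bs{a}}$ again gives a cut of cardinality $\sum_{\lambda\in\bs{\lambda}_{\bs{a}}}(k-\lambda-1)\geq\delta$, exactly the lower bound $\min(c_{\mc{N}^{\prime}})\geq\delta$ that both theorems rely on, the only difference being that $t$ of those edges now live in $E_{\bs{b}}$ instead of $E^{\prime}$.

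Plugging this back into the two proofs is then mechanical. For Theorem~\ref{theorem:Thresh1} every edge --- the forced $\{\bs{c},\bs{b}\}$ edges included --- satisfies $\mc{C}_{\bs{xy}}\geq\mc{C}_{\min}$, so this modified cut has capacity $\geq\delta\,\mc{C}_{\min}=\mc{C}_{\mc{N}_{\bs{i}}}^m$ and the worst-case hybrid cut is analysed verbatim, a shared neighbour entering identically to a non-shared one of the same $\lambda$; the two-sided bound follows. For Theorem~\ref{theorem:Thresh2} the modified cut uses at least $\delta-t$ bulk edges of capacity $\geq\mc{C}_{\min}^{\prime}$ together with $t$ user-edges of capacity $\geq\mc{C}_{\min}^{\bs{i}}$, and since $\mc{C}_{\min}^{\bs{i}}\geq\mc{C}_{\min}^{\prime}$ under $k\leq\delta/2+1$ its capacity is still $\geq\delta\,\mc{C}_{\min}^{\prime}=\mc{C}_{\mc{N}_{\bs{i}}}^m$, giving the equality. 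The step I expect to fight hardest is the one inherited from Lemma~\ref{lemma:BodyCut}: confirming that allowing shared neighbours does not spawn a genuinely new and smaller cut family --- one must re-run the argument that any cut other than user-node isolation and this layer-isolation cut must reach deeper into the network and so collect strictly more edges, now carrying the $t$ forced user-edges along in the count, and check in particular that exploiting several shared neighbours at once cannot undercut $\delta$ or $2(k-1)$.
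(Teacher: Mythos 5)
Your proposal is correct and follows essentially the same route as the paper's proof: both observe that user-node isolation still supplies the upper bound, that the minimal non-isolation cut is now forced to include $\mu(\bs{a},\bs{b})$ user-connected edges while retaining total cardinality at least $\delta$, and then bound its multi-edge capacity as $(\delta-\mu)\mc{C}_{\min}+\mu\,\mc{C}_{\min}^{\bs{i}}\geq\delta\,\mc{C}_{\min}=\mc{C}_{\mc{N}_{\bs{i}}}^m$, with the stricter Theorem~\ref{theorem:Thresh2} threshold only strengthening the inequality. Your explicit per-shared-neighbour count of $k-\lambda-1$ forced edges is a slightly more detailed justification of the cardinality claim than the paper gives, but it is the same argument.
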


\begin{proof} 
Consider the $(k,\bs{\Lambda})$-WR network and assume that the end-user pair $\bs{i} = \{\bs{a},\bs{b}\}$ are not directly connected, but share a neighbour. The number of common neighbours that these non-adjacent nodes share is defined by the non-adjacent commonality, $\mu(\bs{a},\bs{b}) > 0$. The previous analyses do not directly apply since cuts restricted to the network-bulk will not be able to partition the two users. This is true because there will exist clear paths along the edges connected to the common neighbours of $\bs{a}$ and $\bs{b}$. Hence, a valid network-cut of these end-users requires one to collect $\mu(\bs{a},\bs{b})$ edges from a user-neighbourhood.\par 

Nonetheless, our results still hold. Let us locate a network-cut that uses the minimum number of user-connected edges possible. This can be considered as a modification to the network-bulk cut which is necessary due to neighbour-sharing.
This cut $C^{\prime}$ still collects at least $\sum_{\lambda\in\bs{\lambda}^*}  (k-\lambda-1)$ edges, but now $\mu(\bs{a},\bs{b})$ of those edges are actually contained in one of the user-neighbourhoods. 
In a worst-case scenario, one may assume that the user-connected edges which are necessarily cut possess the minimum single-edge capacity in the user-neighbourhoods, defined as
\begin{equation}
\mc{C}_{\min}^{\bs{i}} = \min_{\bs{j}\in\{\bs{a},\bs{b}\}}\min_{(\bs{x},\bs{y}) \in E_{\bs{j}}} \mc{C}_{\bs{xy}}.
\end{equation}
Let all edges in the network obey a threshold capacity $\mc{C}_{\min} = \mc{C}_{\mc{N}_{\bs{i}}}^m/\delta$ as motivated by Theorem~\ref{theorem:Thresh1} for non-neighbour sharing end-users. 
Now, the network-cut $C^{\prime}$ which collects the fewest number of edges from the user-neighbourhood will generate a multi-edge capacity,
\begin{align}
\mc{C}^m(C^{\prime}) &\geq {\delta} \mc{C}_{\min} + \mu(\bs{a},\bs{b})( \mc{C}_{\min}^{\bs{i}}- \mc{C}_{\min}  ). \label{eq:NS_Cap}
\end{align}
However, we already know that $\mc{C}_{\min}^{\bs{i}} = \mc{C}_{\min}$ since we stated that all edges in the network obey the same minimum threshold. Therefore,
\begin{align}
&\mc{C}^m(C^{\prime}) \geq {\delta}  \mc{C}_{\min} = \mc{C}_{\mc{N}_{\bs{i}}}^m.
\end{align}
as required. Therefore, neighbour sharing does undermine the previous threshold theorems. Indeed, introducing a stricter condition on the user-connected edges (as we do in Theorem~\ref{theorem:Thresh2}) only makes this result stronger, since $\mc{C}_{\min}^{\bs{i}} \geq \mc{C}_{\min}$ will only increase the multi-edge capacity in Eq.~(\ref{eq:NS_Cap}).
\end{proof}

\subsection{Bosonic Lossy Weakly-Regular Networks\label{sec:BosonicLossy}}

When considering fibre-based networks, point-to-point links are described by bosonic pure-loss (lossy) channels. A lossy channel $\mc{L}$ with transmissivity $\eta \in (0,1)$ is a phase-insensitive Gaussian quantum channel, which transforms input quadratures $\hat{\bf{x}} = (\hat{q},\hat{p})^T$ according to $\hat{\bf{x}} \mapsto \sqrt{\eta} \hat{\bf{x}} + \sqrt{1-\eta}~\hat{\bf{x}}_{\text{env}}$ (where the environment is in a vacuum state) describing the interaction of bosonic mode with a zero-temperature bath \cite{GaussRev}. \par
For lossy quantum networks, the most important property is channel length, or from a network perspective, \textit{inter-nodal separation}. For a given edge $(\bs{x},\bs{y})\in E$ connecting two users in a network, the inter-nodal separation is simply the distance $d_{\bs{xy}}$ between them. All two-way assisted quantum and private capacities of the lossy channel are precisely known via the PLOB bound \cite{PLOB},
\begin{equation}
\mc{C}_{\mc{L}}(d_{\bs{xy}}) = -\log_2\left( 1-10^{-\gamma d_{\bs{xy}} }\right),
\end{equation}
where the inter-nodal separation is related to the transmissivity via $\eta_{\bs{xy}} = 10^{-\gamma d_{\bs{xy}}}$. For current, state of the art fibre-optics the loss rate is $\gamma = 0.02$ per km (which equates to a loss rate of $0.2$ dB/km). Since these separations directly dictate the channel quality between nodes they must be precisely engineered and distributed in order to guarantee strong end-to-end performance.\par

The direct application of Theorem~\ref{theorem:Thresh1} to bosonic lossy quantum networks allows us to translate the notion of threshold capacities into something more physical. Indeed, since the capacity of pure-loss channels is known exactly, it is possible to translate the threshold capacity into a \textit{maximum inter-nodal separation}.\\

\begin{corollary}
Consider a $(k,\bs{\Lambda})$-WR quantum network which is connected by bosonic lossy channels. Select an end-user pair $\bs{i} = \{\bs{a},\bs{b}\}$ within the network that do not share an edge, and possess a min-neighbourhood capacity $\mc{C}_{\mc{N}_{\bs{i}}}^{m}$.
Then, there exists a maximum inter-nodal separation for all edges within the network,
\begin{equation}
d_{\mc{N}}^{\max} = -\frac{1}{\gamma} \log_{10}\left(1 - 2^{-\frac{1}{\delta} \mc{C}_{\mc{N}_{\bs{i}}}^m}\right), \label{eq:Dmax_MNC}
\end{equation}
for which the flooding capacity satisfies
\begin{equation}
\frac{2(k-1)}{\delta} \mc{C}_{\mc{N}_{\bs{i}}}^{m}\leq \mc{C}^m(\bs{i},\mc{N}) \leq \mc{C}_{\mc{N}_{\bs{i}}}^{m}. \label{eq:PerfBounds}
\end{equation}
If  $~\exists\> d_{\bs{xy}} < d_{\mc{N}}^{\max},~(\bs{x},\bs{y})\in E$, this remains an upper-bound on the optimal network performance, $\mc{C}^m(\bs{i},\mc{N}) \leq \mc{C}_{\mc{N}_{\bs{i}}}^{m}$.
\label{corollary:FibThresh1}
\end{corollary}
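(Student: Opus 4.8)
The plan is to invoke Theorem~\ref{theorem:Thresh1} and then convert its threshold \emph{capacity} into a threshold \emph{distance} by inverting the PLOB formula. First, recall that for the given end-user pair $\bs{i}=\{\bs{a},\bs{b}\}$ Theorem~\ref{theorem:Thresh1} supplies the single-edge threshold $\mc{C}_{\min}=\tfrac{1}{\delta}\mc{C}_{\mc{N}_{\bs{i}}}^m$ of Eq.~(\ref{eq:Thresh}), with the property that whenever $\mc{C}_{\bs{xy}}\geq\mc{C}_{\min}$ for every $(\bs{x},\bs{y})\in E$ the flooding capacity obeys the two-sided bound in Eq.~(\ref{eq:PerfBounds}). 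Hence it suffices to re-express the hypothesis $\mc{C}_{\bs{xy}}\geq\mc{C}_{\min}$ as a constraint on inter-nodal separations.

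Second, I would use that on a bosonic lossy network each single-edge capacity equals $\mc{C}_{\bs{xy}}=\mc{C}_{\mc{L}}(d_{\bs{xy}})=-\log_2(1-10^{-\gamma d_{\bs{xy}}})$, which is a strictly decreasing function of $d_{\bs{xy}}\in(0,\infty)$ (the transmissivity $10^{-\gamma d}$ strictly decreases, while $-\log_2(1-\cdot)$ strictly increases on $(0,1)$). By this monotonicity, $\mc{C}_{\mc{L}}(d_{\bs{xy}})\geq\mc{C}_{\min}$ holds if and only if $d_{\bs{xy}}\leq d_{\mc{N}}^{\max}$, where $d_{\mc{N}}^{\max}$ is the unique solution of $\mc{C}_{\mc{L}}(d_{\mc{N}}^{\max})=\mc{C}_{\min}$. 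Solving this equation algebraically,
\begin{equation}
-\log_2\!\left(1-10^{-\gamma d_{\mc{N}}^{\max}}\right)=\tfrac{1}{\delta}\mc{C}_{\mc{N}_{\bs{i}}}^m
\;\Longrightarrow\;
10^{-\gamma d_{\mc{N}}^{\max}}=1-2^{-\frac{1}{\delta}\mc{C}_{\mc{N}_{\bs{i}}}^m},
\end{equation}
and applying $\log_{10}$ gives $d_{\mc{N}}^{\max}=-\tfrac{1}{\gamma}\log_{10}(1-2^{-\frac{1}{\delta}\mc{C}_{\mc{N}_{\bs{i}}}^m})$, i.e.\ Eq.~(\ref{eq:Dmax_MNC}). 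A brief well-definedness check belongs here: since $\mc{C}_{\mc{N}_{\bs{i}}}^m>0$ and $\delta>0$, one has $2^{-\frac{1}{\delta}\mc{C}_{\mc{N}_{\bs{i}}}^m}\in(0,1)$, so the argument of $\log_{10}$ lies in $(0,1)$ and $d_{\mc{N}}^{\max}>0$ is a genuine positive distance.

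Putting the pieces together: if every edge satisfies $d_{\bs{xy}}\leq d_{\mc{N}}^{\max}$, then every single-edge capacity satisfies $\mc{C}_{\bs{xy}}\geq\mc{C}_{\min}$, so Theorem~\ref{theorem:Thresh1} yields exactly the sandwich in Eq.~(\ref{eq:PerfBounds}). For the final clause, the upper bound $\mc{C}^m(\bs{i},\mc{N})\leq\mc{C}_{\mc{N}_{\bs{i}}}^m$ holds unconditionally, since user-node isolation (cutting $E_{\bs{a}}$ or $E_{\bs{b}}$) is always a valid cut and therefore bounds the flooding capacity irrespective of whether some edge exceeds $d_{\mc{N}}^{\max}$. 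I do not expect a serious obstacle here — the argument is essentially a monotone change of variables — the only point needing care is to read $\mc{C}_{\mc{N}_{\bs{i}}}^m$ consistently as the target min-neighbourhood capacity fixed at the outset, so that $d_{\mc{N}}^{\max}$ is a well-defined constant of the problem, together with the monotonicity and positivity facts used when inverting $\mc{C}_{\mc{L}}$.
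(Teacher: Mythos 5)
Your proposal is correct and follows essentially the same route as the paper: invoke Theorem~\ref{theorem:Thresh1} for the capacity threshold $\mc{C}_{\min}=\mc{C}_{\mc{N}_{\bs{i}}}^m/\delta$, invert the PLOB formula by monotonicity to obtain $d_{\mc{N}}^{\max}$, and observe that the upper bound persists unconditionally because user-node isolation is always a valid cut. Your explicit monotonicity and positivity checks, and the cleaner justification of the final clause, are minor refinements of the paper's argument rather than a different approach.
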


\begin{proof} Consider a valid pair of end-users $\bs{i} = \{\bs{a},\bs{b}\}$ embedded within a $(k,\bs{\Lambda})$-WR quantum network. Then as before, there exists a threshold capacity $\mc{C}_{\min} = \frac{1}{\delta} \mc{C}_{\mc{N}_{\bs{i}}}^m$ that can be enforced to ensure the flooding capacity between these users is bounded by their min-neighbourhood capacity, $\mc{C}_{\mc{N}_{\bs{i}}}^m$. Supplanting the PLOB bound into the capacity condition in Theorem~\ref{theorem:Thresh1},
\begin{equation}
\mc{C}_{\mc{L}}(d_{\bs{xy}}) \geq \mc{C}_{\min},  \forall (\bs{x},\bs{y}) \in E,
\end{equation}
this readily translates to,
\begin{align}
d_{\bs{xy}} &\leq  -\frac{1}{\gamma} \log_{10}\left(1 - 2^{- \frac{1}{\delta} \mc{C}_{\mc{N}_{\bs{i}}}^m}\right), \forall (\bs{x},\bs{y}) \in E.
\end{align}
Therefore the threshold capacity becomes an upper-bound on the maximum link-length permitted within the network. We can thus define this maximum length,
\begin{equation}
d_{\mc{N}}^{\max} = -\frac{1}{\gamma} \log_{10}\left(1 - 2^{-\frac{1}{\delta} \mc{C}_{\mc{N}_{\bs{i}}}^m}\right),
\end{equation}
which when satisfied ensures that $2(k-1)\mc{C}_{\mc{N}_{\bs{i}}}^m/\delta \leq \mc{C}^m(\bs{i},\mc{N}) \leq \mc{C}_{\mc{N}_{\bs{i}}}^m$.

Now suppose that there exists a channel within the network-bulk that violate this max-bulk separation, i.e.~$\exists\> d_{\bs{xy}} > d_{\mc{N}^{\prime}}^{\max}$ for $(\bs{x},\bs{y}) \in E^\prime$. This violates the threshold capacity condition from Theorem~\ref{theorem:Thresh1} meaning that the minimum-cut in the network is not guaranteed to satisfy the performance bounds. However, if the minimum-cut undergoes a transition due to the introduction of poor quality channels in the network-bulk, it cannot improve the network flooding capacity; it can only deteriorate network performance. Therefore the min-neighbourhood capacity remains an upper-bound on the optimal network performance, $\mc{C}^m(\bs{i},\mc{N}) \leq \mc{C}_{\mc{N}_{\bs{i}}}^m$, as before.
\end{proof}\\

In order to achieve a stricter performance guarantee, we can apply Theorem~\ref{theorem:Thresh2} to bosonic lossy channels and derive slightly stricter constraints on user-connected channels. 

\begin{corollary}
Consider a $(k,\bs{\Lambda})$-WR quantum network which is connected by bosonic lossy channels. Select an end-user pair $\bs{i} = \{\bs{a},\bs{b}\}$ within the network that do not share an edge, and possess a min-neighbourhood capacity $\mc{C}_{\mc{N}_{\bs{i}}}^{m}$.
Then, there exists a maximum link-length in the network-bulk
\begin{equation}
d_{\mc{N}}^{\max} = -\frac{1}{\gamma} \log_{10}\left(1 - 2^{-\frac{1}{\delta} \mc{C}_{\mc{N}_{\bs{i}}}^m}\right),
\end{equation}
and a maximum link-length for all the user-connected edges,
\begin{equation}
d_{\mc{N}_{\bs{i}}}^{\max} = -\frac{1}{\gamma} \log_{10}\left(1 - 2^{-\left(\frac{1}{k-1} - \frac{1}{\delta} \right)\mc{C}_{\mc{N}_{\bs{i}}}^m}\right) \leq d_{\mc{N}}^{\max},
\end{equation}
which when satisfied guarantee that the flooding capacity is equal to the min-neighbourhood capacity,
\begin{equation}
\mc{C}^m(\bs{i},\mc{N}) = \mc{C}_{\mc{N}_{\bs{i}}}^{m}.
\end{equation}
If  $~\exists\>d_{\mc{N}_{\bs{i}}}^{\max} < d_{\bs{xy}} \leq d_{\mc{N}}^{\max},~(\bs{x},\bs{y})\in E_{\bs{a}}\cup E_{\bs{b}}$, we regain Theorem \ref{theorem:Thresh2}. If  $~\exists\>d_{\bs{xy}} > d_{\mc{N}}^{\max},~(\bs{x},\bs{y})\in E $, then the performance guarantee is violated, but this remains an upper-bound on the optimal network performance, $\mc{C}^m(\bs{i},\mc{N}) \leq \mc{C}_{\mc{N}_{\bs{i}}}^{m}$.
\label{corollary:FibThresh2}
\end{corollary}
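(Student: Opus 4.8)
The plan is to obtain this statement as a direct transcription of Theorem~\ref{theorem:Thresh2} into the language of inter-nodal distances, using the fact that the two-way capacity of a bosonic lossy link is known exactly. The only ingredient beyond Theorem~\ref{theorem:Thresh2} is the observation that the PLOB bound $\mc{C}_{\mc{L}}(d) = -\log_2(1-10^{-\gamma d})$ is a strictly decreasing, invertible function of $d$ on $(0,\infty)$; hence a lower bound on a single-edge capacity $\mc{C}_{\bs{xy}} = \mc{C}_{\mc{L}}(d_{\bs{xy}}) \geq \mc{C}$ is equivalent to an upper bound $d_{\bs{xy}} \leq \mc{C}_{\mc{L}}^{-1}(\mc{C})$ on the corresponding separation.

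First I would invert the PLOB bound: solving $\mc{C} = -\log_2(1-10^{-\gamma d})$ for $d$ gives $\mc{C}_{\mc{L}}^{-1}(\mc{C}) = -\gamma^{-1}\log_{10}(1-2^{-\mc{C}})$, which is well defined and positive for every $\mc{C}>0$. Next I would substitute the two threshold capacities furnished by Theorem~\ref{theorem:Thresh2}: the network-bulk threshold $\mc{C}_{\min}^{\prime} = \delta^{-1}\mc{C}_{\mc{N}_{\bs{i}}}^{m}$ yields $d_{\mc{N}}^{\max}$, and the user-connected threshold $\mc{C}_{\min}^{\bs{i}} = (\tfrac{1}{k-1}-\tfrac{1}{\delta})\mc{C}_{\mc{N}_{\bs{i}}}^{m}$ yields $d_{\mc{N}_{\bs{i}}}^{\max}$ (the regime $\delta>k-1$, valid for all the architectures considered, guarantees $\mc{C}_{\min}^{\bs{i}}>0$ so that this inversion makes sense). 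By the monotone equivalence noted above, the two capacity hypotheses of Theorem~\ref{theorem:Thresh2} --- namely $\mc{C}_{\bs{xy}}\geq \mc{C}_{\min}^{\prime}$ on $E^{\prime}$ and $\mc{C}_{\bs{xy}}\geq \mc{C}_{\min}^{\bs{i}}$ on $E_{\bs{a}}\cup E_{\bs{b}}$ --- are literally the same as the distance constraints $d_{\bs{xy}}\leq d_{\mc{N}}^{\max}$ on the bulk and $d_{\bs{xy}}\leq d_{\mc{N}_{\bs{i}}}^{\max}$ on the user-connected edges, so Theorem~\ref{theorem:Thresh2} delivers $\mc{C}^m(\bs{i},\mc{N}) = \mc{C}_{\mc{N}_{\bs{i}}}^{m}$ under exactly these distance constraints. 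The ordering $d_{\mc{N}_{\bs{i}}}^{\max}\leq d_{\mc{N}}^{\max}$ is then immediate: Theorem~\ref{theorem:Thresh2} records that $\mc{C}_{\min}^{\bs{i}}\geq \mc{C}_{\min}^{\prime}$ precisely when $k\leq \delta/2+1$, and applying the decreasing map $\mc{C}_{\mc{L}}^{-1}$ reverses this inequality.

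Finally I would dispatch the two fallback clauses. If some user-connected edge violates only the stricter bound, that is $d_{\mc{N}_{\bs{i}}}^{\max} < d_{\bs{xy}} \leq d_{\mc{N}}^{\max}$, while every edge still respects $d_{\mc{N}}^{\max}$, then all edges continue to meet the bulk threshold $\mc{C}_{\min}^{\prime}$, so the hypotheses of Corollary~\ref{corollary:FibThresh1} are satisfied and we recover the two-sided bound $\tfrac{2(k-1)}{\delta}\mc{C}_{\mc{N}_{\bs{i}}}^{m}\leq \mc{C}^m(\bs{i},\mc{N})\leq \mc{C}_{\mc{N}_{\bs{i}}}^{m}$. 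If instead some edge exceeds $d_{\mc{N}}^{\max}$, its capacity drops below $\mc{C}_{\min}^{\prime}$ and the threshold argument can no longer certify which cut is minimal; however, weakening an edge can only decrease cut values and hence can only lower the min-cut, never raise it, while the always-available user-isolation cut is unaffected, so $\mc{C}^m(\bs{i},\mc{N})\leq \mc{C}_{\mc{N}_{\bs{i}}}^{m}$ persists. The one place that needs genuine care --- the ``obstacle'', such as it is --- is tracking inequality directions through the decreasing map $\mc{C}_{\mc{L}}^{-1}$ and confirming that the condition $k\leq\delta/2+1$ under which $d_{\mc{N}_{\bs{i}}}^{\max}\leq d_{\mc{N}}^{\max}$ is exactly the one inherited from Theorem~\ref{theorem:Thresh2}; everything else is routine substitution.
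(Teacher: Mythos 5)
Your proposal is correct and follows essentially the same route as the paper, which proves this corollary in two sentences as "a specification of Theorem~\ref{theorem:Thresh2} to bosonic lossy channels... by supplementing the PLOB bound into the threshold capacity expressions"; you simply make explicit the monotone inversion of the PLOB capacity, the positivity condition $\delta>k-1$, and the ordering $d_{\mc{N}_{\bs{i}}}^{\max}\leq d_{\mc{N}}^{\max}$ via $k\leq \delta/2+1$, all of which are consistent with the paper. Your handling of the fallback clauses (two-sided bound when only the stricter user-edge constraint fails, and the ever-valid user-isolation cut giving $\mc{C}^m(\bs{i},\mc{N})\leq\mc{C}_{\mc{N}_{\bs{i}}}^m$ otherwise) matches the paper's treatment in Corollary~\ref{corollary:FibThresh1}.
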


\begin{proof} 
This is a specification of Theorem \ref{theorem:Thresh2} to bosonic lossy channels where the edges in the neighbourhoods of Alice $\bs{a}$ and Bob $\bs{b}$ may possess their own, stricter constraint in order to completely guarantee optimal performance. The proof follows directly by supplementing the PLOB bound into the threshold capacity expressions. 
\end{proof}

\section{Nodal Densities and Bosonic Lossy Weakly-Regular Networks\label{sec:NDs}}

\subsection{Sparse Constructions}

Nodal density is defined as the number of nodes $n$ per unit area of the network, 
\begin{equation}
\rho_{\mc{N}} \defeq n/A
\end{equation} 
where $A$ is some area in which the network is defined. This is a crucial measure of network resources, especially for quantum networks where there is a very high cost of constructing quantum devices at every node. In many network settings, it is desirable to minimise the nodal density \text{necessary}  to promise strong end-to-end performance. For this reason, it is also useful to define a \textit{minimum nodal density}. For a class of network, $\textsf{{N}} = \{\mc{N}_{j}\}_{j}$, such that all instances $\mc{N} \in \textsf{N}$ are constrained to some implicit structure, the minimum nodal density describes how it can be constructed in the \text{sparsest way possible}. It refers to a limiting scenario in which the network is least dense, and that all other instances of the network topology will possess more nodes per unit area. This is summarised below in a general definition:

\begin{defin} \emph{(Sparse Construction):} Consider a class of network $\textsf{\emph{N}}= \{\mc{N}_{j}\}_{j}$ which imposes a fixed, single-edge distance constraint on its networks $\mc{N} = (P,E) \in \emph{\textsf{N}}$ so that 
\begin{equation}
d_{\bs{xy}} \leq d_{\mc{N}}^{\max} \text{ \emph{for all} } (\bs{x},\bs{y})\in E.
\end{equation}
The sparse construction is an instance of this class which minimises its network nodal density,
\begin{equation}
\rho_{\mc{N}}^{\min} =  \min_{\mc{N} \in \emph{\textsf{N}}} \rho_{\mc{N}} =  \min_{\mc{N} \in \emph{\textsf{N}}} \frac{n}{A},
\end{equation}
where $\rho_{\mc{N}}^{\min}$ is the minimum permitted nodal density of a network $\mc{N}\in \textsf{\emph{N}}$.
\end{defin}

Clearly, for very general classes of distance-constrained networks this minimisation is extremely difficult. However, for analytical classes such as WRNs, this becomes rather easy and reduces to a geometric packing problem.

Finally, to provide simplifications in subsequent arguments, we make the following proposition.
\begin{proposition} 
For a class of single-edge distance constrained networks $\mc{N} \in \textsf{\emph{N}}$, such that $d_{\bs{xy}} \leq d_{\mc{N}}^{\max}$ for all $(\bs{x},\bs{y})\in E$, then the minimum nodal density can always be expressed as
\begin{align}
\rho_{\mc{N}}^{\min} &\propto ({d_{\mc{N}}^{\max}})^{-2} ={\xi}{(d_{\mc{N}}^{\max})^{-2}} \label{eq:MinDensityBound},
\end{align}
such that $\xi$ is a quantity which characterises the network class $\textsf{\emph{N}}$. 
\end{proposition}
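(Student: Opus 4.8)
The plan is to exploit a dimensional/scaling argument. First I would observe that for any network class $\textsf{N}$ with a single-edge distance constraint $d_{\bs{xy}} \leq d_{\mc{N}}^{\max}$, the only length scale that enters the geometric packing problem is $d_{\mc{N}}^{\max}$ itself. The connectivity rules of a weakly-regular architecture (the regularity $k$ and commonality superset $\bs{\Lambda}$) are purely combinatorial and scale-invariant: they constrain which nodes must be joined by edges, but not the physical distances involved. Hence the sparse construction — the least-dense arrangement of nodes compatible with both the connectivity rules and the length cap — is fixed up to an overall rescaling of all coordinates.

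The key steps would be: (\emph{i}) Take the sparse construction $\mc{N}^* \in \textsf{N}$ that attains $\rho_{\mc{N}}^{\min}$, and consider the rescaling map $\bs{x} \mapsto s\,\bs{x}$ applied to all node positions. Under this map every inter-nodal separation scales as $d_{\bs{xy}} \mapsto s\, d_{\bs{xy}}$, the constraint becomes $d_{\bs{xy}} \leq s^{-1} d_{\mc{N}}^{\max}$ when measured against the original cap, areas scale as $A \mapsto s^2 A$, and the node count $n$ is unchanged (the graph structure is untouched). (\emph{ii}) Argue that the sparse construction for length cap $d$ must be the $s$-rescaled image of the sparse construction for cap $1$, with $s = d$: if it were not, one could rescale one into a candidate for the other and contradict minimality of the density in at least one of the two problems. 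This establishes that $\rho_{\mc{N}}^{\min}(d_{\mc{N}}^{\max})$, as a function of the cap, satisfies $\rho_{\mc{N}}^{\min}(d) = d^{-2}\,\rho_{\mc{N}}^{\min}(1)$. (\emph{iii}) Define $\xi \defeq \rho_{\mc{N}}^{\min}(1)$, which depends only on the combinatorial data $(k,\bs{\Lambda})$ defining the class $\textsf{N}$, yielding $\rho_{\mc{N}}^{\min} = \xi\,(d_{\mc{N}}^{\max})^{-2}$ as claimed.

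I would make the rescaling argument precise by noting that the densest/sparsest packing problems in the plane are invariant under similarity transformations: given any $\mc{N} = (P,E) \in \textsf{N}$ respecting cap $d_1$, the rescaled network $\tfrac{d_2}{d_1}\cdot\mc{N}$ respects cap $d_2$, has the same $n$, and area $(d_2/d_1)^2 A$, hence density $(d_1/d_2)^2 \rho_{\mc{N}}$; this is a bijection between the feasible sets for caps $d_1$ and $d_2$ which multiplies all densities by the constant $(d_1/d_2)^2$, so it carries minimiser to minimiser and gives $\rho_{\mc{N}}^{\min}(d_2) = (d_1/d_2)^2 \rho_{\mc{N}}^{\min}(d_1)$. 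Setting $d_1 = 1$ fixes the proportionality constant $\xi = \rho_{\mc{N}}^{\min}(1)$.

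The main obstacle is not the scaling step, which is essentially immediate, but rather justifying that the sparse-construction optimisation problem is well-posed and similarity-invariant in the first place — in particular that ``area $A$'' is defined consistently (e.g.\ as the area of the convex hull or of a fundamental tiling cell in the large-$n$ limit, so that boundary effects wash out), and that the class $\textsf{N}$ genuinely has no intrinsic length scale other than $d_{\mc{N}}^{\max}$. This requires the implicit assumption, stated earlier in the excerpt, that $n$ is large enough for boundary/sparsity effects to be negligible, so that $\rho_{\mc{N}}$ is meaningfully a property of the repeating WRN cell geometry; for the specific honeycomb, hexagonal and Manhattan architectures this reduces to an explicit cell-packing computation, which is exactly where the concrete value of $\xi$ would be extracted in the subsequent subsection.
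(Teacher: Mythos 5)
Your proposal is correct and follows essentially the same route as the paper, which justifies the proposition purely by the dimensional observation that $\rho \propto A^{-1}$ and $A \propto d^{2}$ for any sensible measure of area, with $\xi$ extracted afterwards from the explicit cell-packing computations. Your similarity-transformation bijection between the feasible sets for different length caps is in fact a more rigorous rendering of the paper's one-line scaling assertion, and your caveats about well-posedness of the area measure and the large-$n$ limit match the implicit assumptions the paper states earlier in the section.
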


It is always possible to express the min-density as proportional to the inverse squared value of the maximum inter-nodal separation in the network. This is obviously true for \textit{any measure of area} since $\rho \propto A^{-1}$ and $A \propto d^{2}$ where $d$ is some distance measure. Yet, for what follows we find that it is useful to closely relate $d_{\mc{N}}^{\max}$ and $\rho_{\mc{N}}^{\min}$ in this way.

\subsection{Sparse Constructions of Weakly-Regular Networks}

In this section we endeavour to lower-bound the min-nodal densities for the classes of WRN.

\subsubsection{Honeycomb Network}

Our model of a honeycomb network ($k=3$ and $\bs{\lambda}^* = \{0\}^{\cup 3}$) is the following: Consider a single, initial hexagon consistent of $n$-nodes connected by $6$ edges. Let us call this the $r=1$ ring of the network. To construct a larger network, we proceed by adding further hexagons concentrically around the initial shape. Each edge of the $r=1$ ring is used as an edge of a hexagon in the $r=2$ ring. We can continue to create a larger and larger network structure by concentrically connecting rings of hexagons to the previous one. As each ring is added, there will be $6r$ hexagons added to the overall structure. See Fig.~\ref{fig:min_nodes_check}(a) as an example.

For any fixed number of rings $r$ we can identify the number of nodes within the network. The number of unique nodes added with the addition of each new ring follows a recursive equation
\begin{align}
\tilde{n}_1 = 6,~\tilde{n}_2 = 12, ~, \ldots , \tilde{n}_r = 24(r-1) - \tilde{n}_{r-1}.
\end{align}
It is simple to solve this set of recursive equations so that $\tilde{n}_r$ takes the form,
\begin{align}
\tilde{n}_r = 6(2r-1).
\end{align}
As a result, given an $r$-ring honeycomb network structure, the total number of nodes will be
\begin{align}
n_{\text{hc}}(r) &=  \sum_{k=1}^r  6(2k-1) = - 6r + 12 \sum_{k=1}^r k = 6 r^2.
\end{align}
The minimum number of nodes required to locate a pair of non-edge sharing end-users within an internal boundary is found at $r=2$. Hence the minimum number of nodes we must consider is simply $n_{\text{hc}}(2) = 24$. 

We may also use this relationship in order to determine the minimum nodal-density $\rho_{\min}$ of a honeycomb network when the maximum permitted fibre-length is $d_{\mc{N}}^{\max}$. Since this is the maximum permitted length and a honeycomb lattice can form a regular tiling, then $\rho_{\min}$ is satisfied when every edge in the network is exactly $d_{\mc{N}}^{\max}$. Hence, given an $r$-ring network, the maximum area it will span is
\begin{align}
A_{\text{hc}}^{\max} (r,d_{\mc{N}}^{\max}) &=  \frac{3\sqrt{3} \left[ 1 + 3r(r+1) \right] {d_{\mc{N}}^{\max}}^2}{2}.
\end{align}
Hence, an $r$-ring minimum nodal density can be computed by
\begin{equation}
\rho_{\min}(r,d_{\mc{N}}^{\max})= \frac{n_{\text{hc}}(r)}{A_{\text{hc}}^{\max}(r,d_{\mc{N}}^{\max})}.
\end{equation}
By taking the asymptotic limit of $r\rightarrow \infty$ we can more accurately capture a lower-bound on the nodal density of a honeycomb network which satisfies this fibre-length constraint (as a larger network will permit a more accurate averaging process). As a result, we may compute
\begin{equation}
\rho_{\mc{N}}^{\min} \geq  \lim_{r\rightarrow \infty} \rho_{\text{min}}(r,d_{\mc{N}}^{\max}) = \frac{4}{3\sqrt{3} \> {d_{\mc{N}}^{\max}}^2}
\end{equation}
as a lower-bound on the nodal-density of a weakly-regular honeycomb network which satisfies a maximum inter-nodal separation. Hence the characteristic quantity of honeycomb networks satisfies $\xi \geq 4/(3\sqrt{3})$.

\subsubsection{Hexagonal Network}

A class of hexagonal network ($k=6$ and $\bs{\lambda}^* = \{2\}^{\cup 6}$) follows the same logic as the honeycomb structure, just with additional nodes located within every hexagon (see Fig.~\ref{fig:min_nodes_check}(b)). As a result, we can immediately write
\begin{equation}
\tilde{n}_r = 6(2r-1) + 6(r-1) = 6(3r-2).
\end{equation}
Then, in an $r$-ring hexagonal structure the total number of nodes is given by,
\begin{equation}
{n}_{\text{hex}}(r) = 7 + 6 \sum_{k=2}^{r}( 3k-2)= 1 + 3r(3r-1).
\end{equation}
hence the minimum number of nodes required for internal WR is $n_{\text{hex}}(2) = 31$. Meanwhile, the maximum area spanned by an $r$-ring hexagonal network is equal to that of the honeycomb network, $A_{\text{hex}}^{\max} = A_{\text{hc}}^{\max}$. Thus, defining
\begin{equation}
\rho_{\text{hex}}(r,d_{\mc{N}}^{\max}) \defeq \frac{n_{\text{hex}}(r)}{A_{\text{hex}}^{\max}(r,d_{\mc{N}}^{\max})},
\end{equation}
we can easily compute a lower-bound on the nodal density as before 
\begin{equation}
\rho_{\mc{N}}^{\min} \geq \lim_{r\rightarrow \infty} \rho_{\text{hex}}(r,d_{\mc{N}}^{\max})= \frac{2}{\sqrt{3}\>{d_{\mc{N}}^{\max}}^2}.
\end{equation}
Hence the characteristic quantity of hexagonal networks satisfies $\xi \geq 2/\sqrt{3}$.

\subsubsection{Manhattan-Inspired Networks}

Consider a class of WRN such that $k=8$ and $\bs{\lambda}^* = \{2,4\}^{\cup 4}$, as depicted in Fig.~\ref{fig:min_nodes_check}(c). To construct this network, we can simply concatenate a cell (consisting of 9 nodes) into an $r \times r$ grid which can be easily evaluated. For a network which is arranged into a $r$-length square grid there will exist $r^{2}$ network cells. In order to maximise the area spanned by each network cell, we assign the longest possible edge in the cell to be of length $d_{\mc{N}}^{\max}$. For the $k=8$ network cell, this means that the diagonal edges in each square must be of length $d_{\mc{N}}^{\max}$. Hence, the area of the total 9 node cell will be $\frac{1}{2}{d_{\mc{N}}^{\max}}^2$. In an $r$-ring network this results in a total area of $\frac{1}{2}(r d_{\mc{N}}^{\max})^2$.
Furthermore, the total number of nodes will be given by,
\begin{equation}
n_{\text{mh:8}}(r) = (r+1)^2,
\end{equation}
which can be obtained by simply counting the number of nodes on each horizontal/vertical row of the grid. We can thus define the function, 
\begin{equation}
\rho_{\text{mh:8}}(r,d_{\mc{N}}^{\max}) \defeq \frac{(r+1)^2}{\frac{1}{2}(r d_{\mc{N}}^{\max})^2}.
\end{equation}
As a result, a lower-bound on the minimum nodal density can be readily computed 
\begin{equation}
\rho_{\mc{N}}^{\min} \geq  \lim_{r\rightarrow \infty} \rho_{\text{mh:8}} (r,d_{\mc{N}}^{\max})= \frac{2}{{d_{\mc{N}}^{\max}}^2}.
\end{equation}
Therefore the characteristic quantity is lower-bounded by $\xi \geq 2$.

A similar Manhattan-like class can be constructed such that $k=16$ and $\bs{\lambda}^* = \{4,8,8,8\}^{\cup 4}$, as depicted in Fig.~\ref{fig:min_nodes_check}(d). Using this network cell to construct a larger network, we must constrain the longest edge in the network cell to be of length $d_{\mc{N}}^{\max}$. This causes us to constrain the diagonal edge from central nodes on the boundary of the cell to connected nodes at the opposite corner. The maximum area spanned by a network cell is then $ \frac{4}{3} d_{\mc{N}}^2$. If we again consider an $r\times r$ cell square grid network, then that the total area is $A_{\max}(r) = \frac{4}{3} r^2 {d_{\mc{N}}^{\max}}^2$. 
Via a counting argument, the total number of nodes in an $r$-radius network will be
\begin{equation}
n(r) = (4r+1)(r+1) + 3r(r+1) + r^2 = (7r+1)(r+1) +r^2.
\end{equation}
As a result, we can define the minimum nodal density function, 
\begin{equation}
\rho_{\text{mh:16}}(r,d_{\mc{N}}^{\max}) \defeq \frac{(7r+1)(r+1)+r^2}{\frac{4}{3}(r d_{\mc{N}}^{\max})^2}.
\end{equation}
Finally, the lower-bound can be given
\begin{equation}
\rho_{\mc{N}}^{\min} \geq \lim_{r\rightarrow \infty} \rho_{\text{mh:16}}(r,d_{\mc{N}}^{\max})= \frac{6}{{d_{\mc{N}}^{\max}}^2}.
\end{equation}
Hence the characteristic quantity can be lower-bounded by $\xi \geq 6$.

\subsection{Nodal Density and End-to-End Performance}

\begin{theorem}
Consider a $(k,\bs{\Lambda})$-WR quantum network $\mc{N} = (P,E)$ which is connected by bosonic lossy channels. Select an end-user pair $\bs{i} = \{\bs{a},\bs{b}\}$ within the network that do not share an edge, and a desired min-neighbourhood capacity $\mc{C}_{\mc{N}_{\bs{i}}}^{m}$. In order to guarantee optimal performance, there exists a minimum nodal density within the network,
\begin{equation}
\rho_{\mc{N}}^{\text{\emph{min}}} = -\xi \gamma^2 \left[ \log_{10}\left(1 - 2^{-\frac{1}{\delta} \mc{C}_{\mc{N}_{\bs{i}}}^m}\right) \right]^{-2}, 
\end{equation}
where $\xi$ is characteristic of the WR architecture being considered.
\label{theorem:Thresh_density}
\end{theorem}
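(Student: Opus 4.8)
The plan is to assemble this statement from two ingredients already in hand: the maximum link-length that guarantees optimal end-to-end performance, established in Corollary~\ref{corollary:FibThresh1} (and sharpened to an equality in Corollary~\ref{corollary:FibThresh2}), and the sparse-construction relation $\rho_{\mc{N}}^{\min}=\xi\,(d_{\mc{N}}^{\max})^{-2}$ from the Proposition of Section~\ref{sec:NDs}. Concretely, the theorem is a substitution of the former into the latter, together with the per-architecture evaluation of the characteristic constant $\xi$.

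First I would invoke the threshold corollaries for bosonic lossy links: to guarantee that the flooding capacity between the embedded end-users $\bs{i}=\{\bs{a},\bs{b}\}$ equals the min-neighbourhood capacity $\mc{C}_{\mc{N}_{\bs{i}}}^m$, every edge of the network must have inter-nodal separation at most
\[
d_{\mc{N}}^{\max} \;=\; -\frac{1}{\gamma}\log_{10}\!\left(1 - 2^{-\frac{1}{\delta}\mc{C}_{\mc{N}_{\bs{i}}}^m}\right),
\]
with the slightly stricter cap $d_{\mc{N}_{\bs{i}}}^{\max}\le d_{\mc{N}}^{\max}$ imposed only on the $O(k)$ user-connected edges (Corollary~\ref{corollary:FibThresh2}). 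Since that stricter cap constrains a vanishing fraction of the edges, it does not affect the asymptotic packing, so the density-determining constraint is $d_{\mc{N}}^{\max}$. Next I would feed this length into the sparse construction: by the Proposition the sparsest admissible instance of a distance-constrained network class obeys $\rho_{\mc{N}}^{\min}=\xi\,(d_{\mc{N}}^{\max})^{-2}$, where $\xi$ is found by maximising the area each WR cell spans while keeping its longest edge equal to $d_{\mc{N}}^{\max}$ — this is exactly the geometric packing computation carried out in the preceding subsections, yielding $\xi\ge 4/(3\sqrt{3})$, $2/\sqrt{3}$, $2$, $6$ for the honeycomb, hexagonal, and the two Manhattan-type cells. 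Using $(d_{\mc{N}}^{\max})^{-2}=\gamma^{2}\big[\log_{10}(1-2^{-\mc{C}_{\mc{N}_{\bs{i}}}^m/\delta})\big]^{-2}$ then gives
\[
\rho_{\mc{N}} \;\geq\; \rho_{\mc{N}}^{\min} \;=\; \xi\,\gamma^{2}\left[\log_{10}\!\left(1 - 2^{-\frac{1}{\delta}\mc{C}_{\mc{N}_{\bs{i}}}^m}\right)\right]^{-2},
\]
which is manifestly positive because the logarithm's argument lies in $(0,1)$; as remarked in the main text this is itself a lower bound on the true minimum density, its tightness controlled by how exactly the sparse construction is solved.

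The algebraic substitution is routine; the real work — and the step I expect to be the main obstacle — is the per-architecture determination of $\xi$, i.e.\ solving the geometric packing problem that identifies the least-dense tiling consistent simultaneously with $k$-regularity, the commonality profile $\bs{\lambda}^{*}$, and the edge-length cap. This is tractable only because the WRN cells considered admit regular tilings in which the single longest cell edge can be pushed to $d_{\mc{N}}^{\max}$ for all cells at once; for a generic $(k,\bs{\Lambda})$ architecture this optimisation has no closed form, which is precisely why the theorem leaves $\xi$ as a characteristic quantity of the architecture rather than an explicit number.
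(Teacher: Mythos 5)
Your proposal is correct and follows essentially the same route as the paper: substitute the maximum link-length $d_{\mc{N}}^{\max}$ from Corollaries~\ref{corollary:FibThresh1}/\ref{corollary:FibThresh2} into the sparse-construction relation $\rho_{\mc{N}}^{\min}=\xi\,(d_{\mc{N}}^{\max})^{-2}$, with $\xi$ obtained from the per-architecture packing computations (your sign, without the stray minus in the theorem statement, matches the paper's own proof and main text). Your remark that the stricter user-edge constraint affects only $O(k)$ edges and hence not the asymptotic density is a slightly sharper justification than the paper's, which simply notes that the multi-constraint sparse construction is hard and therefore uses a lower bound for $\xi$, but the argument is otherwise the same.
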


\begin{proof} 
In Corollaries~\ref{corollary:FibThresh1} and \ref{corollary:FibThresh2}, a global fibre-length constraints are placed on the network in order to guarantee a particular flooding capacity via user-node isolation. Using Corollary~\ref{corollary:FibThresh1}, if all edges $(\bs{x},\bs{y})\in E$ satisfy an maximum link-length constraint,
\begin{equation}
d_{\mc{N}}^{\max} = -\frac{1}{\gamma} \log_{10}\left(1 - 2^{-\frac{1}{\delta} \mc{C}_{\mc{N}_{\bs{i}}}^m}\right),
\end{equation}
then the flooding capacity is guaranteed to satisfy $2(k-1)\mc{C}_{\mc{N}_{\bs{i}}}^{m}/\delta \leq \mc{C}^m(\bs{i},\mc{N}) \leq \mc{C}_{\mc{N}_{\bs{i}}}^{m}$. If we apply the additional constraint for user-connected edges such that
\begin{equation}
d_{\mc{N}_{\bs{i}}}^{\max} = -\frac{1}{\gamma} \log_{10}\left(1 - 2^{-\left(\frac{1}{k-1} - \frac{1}{\delta}\right) \mc{C}_{\mc{N}_{\bs{i}}}^m}\right),
\end{equation}
then we can guarantee that $\mc{C}^m(\bs{i},\mc{N}) = \mc{C}_{\mc{N}_{\bs{i}}}^{m}$.

These link-length constraints result in a minimum nodal density for the entire network which is easy to investigate via the appropriate sparse construction. Using Eq.~(\ref{eq:MinDensityBound}) we can directly write
\begin{align}
\rho_{\mc{N}}^{\min} = \xi(d_{\mc{N}}^{\max})^{-2} = \xi\gamma^2 \left[\log_{10}\left(1 - 2^{-\frac{1}{\delta} \mc{C}_{\mc{N}_{\bs{i}}}^m}\right)\right]^{-2},
\end{align}
where the characteristic quantity, $\xi$ is derived from the sparse construction. This offers a lower-bound on the necessary nodal density required to guarantee a particular flooding capacity. 

Summarising, in order for the flooding capacity between $\bs{a}$ and $\bs{b}$ will be equal to the min-neighbourhood capacity $\mc{C}^m(\bs{i},\mc{N}) = \mc{C}_{\mc{N}_{\bs{i}}}^{m}$, the nodal density must (at least) satisfy the lower-bound $\rho_{\mc{N}} \geq \rho_{\mc{N}}^{\min}$. 
\end{proof}\\

The tightness of this lower-bound depends on how $\xi$ is derived. Ideally, one would be able to take into the consideration the stricter constraint $d_{\mc{N}_{\bs{i}}}^{\max}$ required to guarantee $\mc{C}^m(\bs{i},\mc{N}) = \mc{C}_{\mc{N}_{\bs{i}}}^{m}$ with equality. Solving a sparse construction with multiple edge constraints is not straightforward, hence one may need to use a lower-bound for $\xi$, as we have in this work. This nonetheless delivers informative bounds on the nodal density required for optimal performance.

\subsection{Critical Nodal Density}
Following recent works which have investigated critical network resources required for effective end-to-end performance on quantum networks, we define a critical nodal density $\rho_{\text{crit}}$ as the network density required to achieve an end-to-end rate of 1 bit per network use. For bosonic lossy networks constructed in a weakly-regular structure, we can derive this value analytically.\\

\begin{corollary}
Consider a $(k,\bs{\Lambda})$-WR quantum network $\mc{N}=(P,E)$ which is connected by bosonic lossy channels. The critical nodal-density of the network is lower-bounded by
\begin{equation}
\rho_{\mc{N}}^{\text{\emph{crit}}} \geq -\xi \gamma^2 \left[ \log_{10}\left(1 - 2^{-\frac{1}{\delta}}\right) \right]^{-2}, \label{eq:Crit}
\end{equation}
where $\xi$ is characteristic of the WR architecture being considered.
\end{corollary}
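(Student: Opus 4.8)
The plan is to obtain this as an immediate specialization of Theorem~\ref{theorem:Thresh_density}. By definition, the critical nodal density $\rho_{\mc{N}}^{\text{crit}}$ is the density needed so that a pair of (sufficiently separated, non-edge-sharing) end-users can achieve an end-to-end flooding capacity of exactly one bit per network use. First I would invoke Corollary~\ref{corollary:FibThresh2}: if every network-bulk edge obeys $d_{\bs{xy}} \leq d_{\mc{N}}^{\max}$ and every user-connected edge obeys the stricter $d_{\mc{N}_{\bs{i}}}^{\max}$, then $\mc{C}^m(\bs{i},\mc{N}) = \mc{C}_{\mc{N}_{\bs{i}}}^m$. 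Hence choosing the target $\mc{C}_{\mc{N}_{\bs{i}}}^m = 1$ is exactly the condition that pins the flooding capacity to one bit per use, and the associated maximum link-length becomes $d_{\mc{N}}^{\max} = -\gamma^{-1}\log_{10}(1 - 2^{-1/\delta})$.

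Next I would feed this distance constraint into the sparse-construction bound. The Sparse Construction proposition together with Eq.~(\ref{eq:MinDensityBound}) gives $\rho_{\mc{N}}^{\min} = \xi (d_{\mc{N}}^{\max})^{-2}$ for the appropriate characteristic constant $\xi$ of the WR architecture, so substituting the above $d_{\mc{N}}^{\max}$ yields $\rho_{\mc{N}}^{\text{crit}} = -\xi\gamma^2[\log_{10}(1-2^{-1/\delta})]^{-2}$, which is precisely the right-hand side of Eq.~(\ref{eq:Crit}) obtained by setting $\mc{C}_{\mc{N}_{\bs{i}}}^m=1$ in Theorem~\ref{theorem:Thresh_density}. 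This is essentially the whole computation.

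The only point that needs care is why the statement is an inequality ($\geq$) rather than an equality. There are two sources of slack, both already present in the earlier results: (i) the characteristic quantity $\xi$ is itself only available as a lower bound, since the sparse constructions of Section~\ref{sec:NDs} were solved using the single global constraint $d_{\mc{N}}^{\max}$ rather than the genuinely tighter mixed constraints (network-bulk versus user-connected edges) that Corollary~\ref{corollary:FibThresh2} permits; and (ii) a realised network need not sit at its sparsest packing. I would therefore phrase the final step as: since $\rho_{\mc{N}} \geq \rho_{\mc{N}}^{\min}$ is necessary to guarantee the target rate, and $\xi$ enters only through a lower bound, the critical density obeys the displayed inequality. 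I do not anticipate a genuine obstacle here; the main subtlety is purely bookkeeping --- tracking the direction of the inequality and importing the ``rate one'' target correctly into the threshold machinery.
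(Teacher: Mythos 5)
Your proposal is correct and matches the paper's (implicit) derivation: the corollary is simply Theorem~\ref{theorem:Thresh_density} specialised to the target rate $\mc{C}_{\mc{N}_{\bs{i}}}^m = 1$, with the inequality inherited from the fact that $\xi$ is only obtained as a lower bound from the sparse constructions. Your additional bookkeeping on the direction of the inequality is consistent with the paper's reasoning and requires no correction.
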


In Eq.~(\ref{eq:Crit}), recall that $\gamma$ is the fibre-loss rate which takes a typical value of $\gamma \approx 0.02$, and $\delta$ is defined in Eq.~(\ref{eq:delta_defin}) as before. For WRNs explored in this paper we can readily compute their critical nodal densities:
\begin{align}
\begin{aligned}
\rho_{\text{hc}}^{\text{crit}} &\gtrsim \frac{\sqrt{3}}{5625} \left[ \log_{10}\left(1 - 2^{-1/6}\right) \right]^{-2} \approx 3.33\times 10^{-4} \text{ nodes per km}^{2} ,\\
\rho_{\text{hex}}^{\text{crit}} &\gtrsim \frac{\sqrt{3}}{3750} \left[ \log_{10}\left(1 - 2^{-1/18}\right) \right]^{-2} \approx 2.28\times 10^{-4} \text{ nodes per km}^{2} ,\\
\rho_{\text{mh:8}}^{\text{crit}} &\gtrsim \frac{1}{1250} \left[ \log_{10}\left(1 - 2^{-1/32}\right) \right]^{-2} \approx 2.87\times 10^{-4} \text{ nodes per km}^{2} ,\\
\rho_{\text{mh:16}}^{\text{crit}} &\gtrsim \frac{3}{1250} \left[ \log_{10}\left(1 - 2^{-1/128}\right) \right]^{-2} \approx 4.67\times 10^{-4} \text{ nodes per km}^{2} .
\end{aligned}
\end{align}
The critical density does not necessarily decrease with respect to nodal degree; while the Manhattan-inspired networks have larger regular degrees ($k=8,16$) than the hexagonal network ($k=6$), the critical density of the hexagonal network remains smaller. This is reasonably intuitive, since there is clearly a tradeoff between performance, nodal degree and nodal density. 

Importantly, we notice that the minimum required nodal density is of the order $\sim 10^{-4}$ nodes per $\text{km}^2$ which corroborates the results of Ref.~\cite{QuntaoRandQNets} for which the critical nodal density is studied for more general class of random Waxman networks. Accurate assessments of this kind are crucial to ensure that effective and high-performance quantum networks are constructed in the future.

\end{document}